\newcommand{\copyrightnote}[2]{{\renewcommand{\thefootnote}{}
 \footnotetext{\small\it
\begin{flushleft}
 \copyright \ #1   #2
\end{flushleft}}}}
\newcommand{\Name}[1]{\begin{flushleft}
                       \LARGE \bf #1
                       \end{flushleft}\vspace{-3mm}}
\newcommand{\Author}[1]{\begin{flushleft}
                       \it #1 \end{flushleft}}
\newcommand{\Address}[1]{\begin{flushleft}
                       \it #1 \end{flushleft}}
\newcommand{\Date}[1]{\begin{flushleft}
                      \small  \it #1 \end{flushleft}}
\newcommand{\evenhead}{Author \ name}
\newcommand{\oddhead}{Article \ name}
\renewcommand{\@evenhead}{
\hspace*{-3pt}\raisebox{-15pt}[\headheight][0pt]{\vbox{\hbox to \textwidth
{\thepage \hfil \evenhead}\vskip4pt \hrule}}}
\renewcommand{\@oddhead}{
\hspace*{-3pt}\raisebox{-15pt}[\headheight][0pt]{\vbox{\hbox to \textwidth
{\oddhead \hfil \thepage}\vskip4pt\hrule}}}
\renewcommand{\@evenfoot}{}
\renewcommand{\@oddfoot}{}
\long\def\@makecaption#1#2{%
  \vskip\abovecaptionskip
  \sbox\@tempboxa{\small \textbf{#1.}\ \ #2}%
  \ifdim \wd\@tempboxa >\hsize
    {\small \textbf{#1.}\ \ #2}\par
  \else
    \global \@minipagefalse
    \hb@xt@\hsize{\hfil\box\@tempboxa\hfil}%
  \fi
  \vskip\belowcaptionskip}
\newcommand{\JNMPnumberwithin}[3][\arabic]{%
  \@ifundefined{c@#2}{\@nocounterr{#2}}{%
    \@ifundefined{c@#3}{\@nocnterr{#3}}{%
      \@addtoreset{#2}{#3}%
      \@xp\xdef\csname the#2\endcsname{%
        \@xp\@nx\csname the#3\endcsname .\@nx#1{#2}}}}%
}
\renewenvironment{proof}[1][\proofname]{\par
  \normalfont
  \topsep6\p@\@plus6\p@ \trivlist
  \item[\hskip\labelsep\textbf{%
    #1\@addpunct{.}}]\ignorespaces
}{%
  \qed\endtrivlist
}
\newcommand{\resetfootnoterule} {
  \renewcommand\footnoterule{%
  \kern-3\p@
  \hrule\@width.4\columnwidth
  \kern2.6\p@}
}
\renewcommand{\footnoterule}{}
\theoremstyle{definition}
\newtheorem{theorem}{Theorem}
\newtheorem{proposition}{Proposition}
\newtheorem{lemma}{Lemma}
\def \h {\widehat}
\begin{document}

\renewcommand{\evenhead}{ {\LARGE\textcolor{blue!10!black!40!green}{{\sf \ \ \ ]ocnmp[}}}\strut\hfill Da-jun Zhang, Shi-min Liu and Xiao Deng }
\renewcommand{\oddhead}{ {\LARGE\textcolor{blue!10!black!40!green}{{\sf ]ocnmp[}}}\ \ \ \ \  The nonlinear Schr\"{o}dinger equation with nonzero backgrounds}

\thispagestyle{empty}
\newcommand{\FistPageHead}[3]{
\begin{flushleft}
\raisebox{8mm}[0pt][0pt]
{\footnotesize \sf
\parbox{150mm}{{Open Communications in Nonlinear Mathematical Physics}\ \  \ {\LARGE\textcolor{blue!10!black!40!green}{]ocnmp[}}
\ \ Vol.3 (2023) pp
#2\hfill {\sc #3}}}\vspace{-13mm}
\end{flushleft}}

\FistPageHead{1}{\pageref{firstpage}--\pageref{lastpage}}{ \ \ Article}

\strut\hfill

\strut\hfill

\copyrightnote{The author(s). Distributed under a Creative Commons Attribution 4.0 International License}

\Name{The solutions of classical and nonlocal nonlinear Schr\"{o}dinger equations with nonzero backgrounds:
Bilinearisation and reduction approach}

\Author{Da-jun Zhang$^{\,1}$, Shi-min Liu$^{\,1}$  and Xiao Deng$^{\,2}$}

\Address{$^{1}$ Department of Mathematics, Shanghai University, Shanghai 200444, P.R. China\\[2mm]
$^{2}$ Department of Applied Mathematics, Delft University of Technology, Delft 2628CD, The Netherlands.}

\Date{Received 13 September 2022; Accepted 31 January 2023}

\setcounter{equation}{0}

\begin{abstract}
\noindent
  In this paper we develop a bilinearisation-reduction approach to derive solutions
  to the classical and nonlocal nonlinear Schr\"{o}dinger (NLS) equations with nonzero backgrounds.
  We start from the second order Ablowitz-Kaup-Newell-Segur coupled equations
  as an unreduced system. With a pair of solutions $(q_0,r_0)$ we bilinearize the unreduced system
  and obtain solutions in terms of quasi double Wronskians.
  Then we implement reductions by introducing constraints on the column vectors of the Wronskians
  and finally obtain solutions to the reduced equations, including the classical NLS
  equation and the nonlocal NLS equations with reverse-space, reverse-time and reverse-space-time, respectively.
  With a set of plane wave solution $(q_0,r_0)$ as a background solution,
  we present explicit formulae
  for these column vectors.
  As examples, we analyze and illustrate solutions to the focusing NLS equation and the reverse-space nonlocal
  NLS equation. In particular, we present formulae for the rouge waves of arbitrary order for the
  focusing NLS equation.
\end{abstract}

\label{firstpage}


\section{Introduction}\label{sec-1}

It is common knowledge that the nonlinear Schr\"{o}dinger-type equations admit
carrier waves and  solitons,
and that breathers and other solutions (e.g. rogue waves) are the modulations of carrier waves.
Meanwhile, many (1+1)-dimensional soliton equations admit solitons with either zero
or nonzero asymptotic behaviours as $|x|\to \infty$.
As for the one of the most popular nonlinear integrable models,
the focusing nonlinear Schr\"{o}dinger (NLS) equation,
\begin{equation}
iq_t=q_{xx}+2|q|^2q,
\label{NLS-f}
\end{equation}
where $i$ is the imaginary unit, $|q|^2=qq^*$ and $q^*$ stands for the complex conjugate of $q$,
early investigations of the solutions of this equation with nonzero boundary conditions
were due to Kuznetsov \cite{Kuz-SP-1977},
Kawata and Inoue \cite{KawI-JPSJ-1977,KawI-JPSJ-1978} and Ma \cite{Ma-SAPM-1979}.
They solved the NLS equation \eqref{NLS-f}
with nonzero boundary conditions, i.e., $|q(x,t)|\rightarrow {\rm const.}$ as $x\rightarrow \pm \infty$,
by means of the inverse scattering transform.
Faddeev and Takhtajan have also done important work in this area (see for instance the monograph
Ref.\cite{Faddeev-1987} and references therein).
Besides, the NLS equation with different asymmetric nonzero boundary conditions has been studied in
\cite{Chen-Huang-1994,Huang-Chen-1993,Biondini-JMP-2014,Demontis-JMP-2014,vander-2021}.
The defocusing NLS equation,
\begin{equation}\label{NLS-df-nz}
iq_t+q_{xx}-2|q|^2q=0,
\end{equation}
has dark solitons with nonzero boundary condition ($|q|$ goes to a positive constant as $|x| \to \infty$).
Zakharov and Shabat are pioneers who studied the two NLS equations
using tools of integrability
\cite{ZakS-JETP-1972,ZakS-JETP-1973}.
Hirota derived bright soliton solution for the equation \eqref{NLS-f}
and dark soliton solution for \eqref{NLS-df-nz} by using bilinear method,
respectively in  \cite{Hirota-1973} and \cite{Hirota-Miura-book}.
For more references about the integrability of NLS equations one can refer to \cite{AblPT-book-2004}
and the references therein.

From the point of view of the Darboux transformation, for any seed solution $q_0$ of the NLS equation \eqref{NLS-f}
(i.e. $q_0$ satisfies \eqref{NLS-f}),
the envelope of the solution $q$ generated from the Darboux transformation has a form
(see equation (4.3.10) in \cite{MatS-book-1991})
\begin{equation}
|q|^2=|q_0|^2 + \partial_x^2 \ln f.
\end{equation}
In this context, when $q_0\neq 0$, we say the resulting solution $q$ is a solution with a nonzero background $q_0$.
Various methods for the systematic construction of  solutions of equation \eqref{NLS-f}
with a  plane wave background $q_0=\sqrt{\alpha}e^{-2i\alpha t}$
have been established, where $\alpha$ is a positive constant.
One  can  replace $q$ with $q e^{-2i\alpha t}$ in equation \eqref{NLS-f},
which leads it to the form
\begin{equation}
iq_t=q_{xx}+2(|q|^2-\alpha)q.
\label{NLS-f-nz}
\end{equation}
Mathematically, this implies, compared with \eqref{NLS-f},
that the envelope $|q|$ gains  a positive lift $\sqrt{\alpha}$
such that $|q|\to \sqrt{\alpha}$ as $|x| \to \infty$.
However, the plane wave background does bring interesting behavior of $|q|$ more than that.
The simplest solution (corresponding to one-soliton) of the equation \eqref{NLS-f-nz}
is a breather  \cite{KawI-JPSJ-1978,Kuz-SP-1977,Ma-SAPM-1979}, not the usual soliton.
In a special limit the breather yields a localized rational solution \cite{Per-JAMS-1983},
which is nowadays used to describe a rogue wave.
The rational solution was also derived by Matveev and Salle
via the Darboux transformation (see \S4.3 in \cite{MatS-book-1991},
where the rogue wave is called ``exulton'' solution).
The second order rational solution of equation \eqref{NLS-f-nz} was derived in 1985 in \cite{Akhmediev-JETP-1985},
using a similar way as in \cite{Per-JAMS-1983}.
The rational solution of arbitrary order of the NLS equation was first constructed
in 1986 in \cite{EleKK-SPD-1986}, where explicit formula of the solution was presented in an elegant way
and nonsingular property of the solution was proved as well  (see also  \cite{DubGKM-EJPST-2010}
for an alternative proof).
``Rogue waves'' is the name given by oceanographers to isolated large amplitude waves,
which occur more frequently than expected for normal, Gaussian distributed, statistical
events (cf.\cite{OnoRBMA-PR-2013}).
After rogue waves was observed in optic experiment in 2007 \cite{SolRKJ-Nat-2007},
it started to draw new attention and the research on rogue waves has become a hot topic.
One can refer to the review  \cite{OnoRBMA-PR-2013} for more references.
Mathematically, higher order rational solutions of the NLS equation \eqref{NLS-f-nz} can be obtained
using the Darboux transformation  via a special limit procedure \cite{GuoLL-PRE-2012,He-2013},
from a bilinear approach using reduction of the Kadomtsev-Petviashvili $\tau$ functions \cite{OhtY-PRSA-2012},
and from inverse scattering transform \cite{BM-2019}.
There are also some research of the NLS equation on the elliptic function background,
e.g. \cite{ChenJB-rspa-2018,LingLM-SAM-2019}.

In this paper, we will derive solutions with a plane wave background
for the NLS equation \eqref{NLS-f} and \eqref{NLS-df-nz} and their nonlocal versions
by using bilinear method but in a completely different way from \cite{OhtY-PRSA-2012}.

Our idea is to solve the  second order Ablowitz-Kaup-Newell-Segur (AKNS) coupled equations
\begin{subequations}\label{akns-2}
\begin{align}
iq _t=&  q_{xx}-2q^2r, \label{akns-2a}
\\
ir_t=&  -r_{xx}+2r^2q  \label{akns-2b}
\end{align}
\end{subequations}
as an unreduced system, which, for instance, yields the NLS equation \eqref{NLS-f} via reduction $r=-q^*$.
We can bilinearize this unreduced  system and present solutions of the bilinear equations in terms of
double Wronskians. Then, we impose constraints on the column vectors of the double Wronskians
so that the desired reduction holds and thus we get solutions to the reduced equation.
Such an idea was first proposed in \cite{ChenDLZ-2018,CheZ-AML-2018}
for obtaining solutions for the nonlocal integrable equations.
Nonlocal integrable systems were first systematically proposed by Ablowitz and Musslimani in 2013 \cite{AblM-PRL-2013}
and have drawn intensive attention
(e.g.\cite{Zhou-SAPM-2018,AblFLM-SAPM,YY-SAPM-2018,YJK-PRE-2018,AblM-JPA-2019,
Lou-SAPM-2019,BioW-SAPM-2019,AblLM-Nonl-2020,RaoCPMH-PD-2020,GurPKZ-PLA-2020,Lou-CTP-2020,
RybS-JDE-2021,RybS-CMP-2021}).
The bilinearisation-reduction approach has proved effective in deriving
solutions not only to the nonlocal systems but also to the classical equations
(e.g. \cite{Deng-AMC-2018,LWZ-2020,LWZ-2022,LWZ-SAPM-2022,ShiY-ND-2019,WW-CNSNS-2022,WWZ-2020}).
In this paper, we introduce transformation
\begin{equation}\label{qrfg}
q=q_0+\frac{g}{f},~~ r=r_0+\frac{h}{f}
\end{equation}
for the unreduced system \eqref{akns-2}.
Here $(q_0,r_0)$ are an arbitrary set of solution of \eqref{akns-2}.
It will be seen that for the NLS equation \eqref{NLS-f},
$|q_0|$ does act as a background of the envelope $|q|$,
see  equation (4.3.10) in \cite{MatS-book-1991} and equation \eqref{abs-q} in this paper).
In this context we also call $(q_0,r_0)$ a set of background solution of the system \eqref{akns-2}.
We will employ \eqref{qrfg} to bilinearize the unreduced system \eqref{akns-2}
and present (quasi) double Wronskian solutions to the bilinear equations.
Then we will implement reduction technique to obtain solutions to the reduced equations
listed in \eqref{cnls}-\eqref{tsnls}.

The paper is organized as follows.
In Sec.\ref{sec-2} we recall the classical and nonlocal reductions of the unreduced AKNS system \eqref{akns-2}.
In Sec.\ref{sec-3} we derive the bilinear form of \eqref{akns-2} with a set of background solution $(q_0,r_0)$
and derive (quasi) double Wronskian solutions to the bilinear equations.
In Sec.\ref{sec-4} the reduction technique is implemented
and explicit form of solutions with plane wave background solutions
are obtained for the reduced equations.
Then in Sec.\ref{sec-5} we investigate dynamics of some obtained solutions for
the classical NLS equation and the nonlocal NLS equations with nonzero backgrounds.
Finally, Sec.\ref{sec-6} serves for presenting conclusions.

\section{The second order AKNS system and its reductions} \label{sec-2}

The second order coupled AKNS  equations \eqref{akns-2},
where $q=q(x,t)$ and $r=r(x,t)$ are functions of $(x,t)\in \mathbb{R}^2$,
has been  studied as a classical coupled system in past decades.
Recently it was found that this system is related to
the cubic nonlinear Klein-Gordon equation, see \cite{AblM-JPA-2019}.
Its Lax pairs consist of the well known AKNS
(or  Zakharov-Shabat (ZS)-AKNS) spectral problem \cite{ZakS-JETP-1972,AKNS-PRL-1973,AKNS-SAM-1974},
\begin{eqnarray}\label{akns-x}
\left(
           \begin{array}{c}
             \Phi\\
             \Psi
           \end{array}
         \right)_x=M\left(
           \begin{array}{c}
             \Phi\\
             \Psi
           \end{array}
         \right),~~~~M=\left(
           \begin{array}{cc}
             \lambda & q\\
             r& -\lambda
           \end{array}
         \right),
\end{eqnarray}
and the corresponding time evolution part
\begin{eqnarray}\label{akns-t}
i\left(
           \begin{array}{c}
             \Phi\\
             \Psi
           \end{array}
         \right)_t=N\left(
           \begin{array}{c}
             \Phi\\
             \Psi
           \end{array}
         \right),~~~~N=\left(
           \begin{array}{cc}
             2\lambda^2-qr &2\lambda q+q_x\\
              2\lambda r-r_x &-2\lambda^2+qr
           \end{array}
         \right),
\end{eqnarray}
in which $\lambda$ is spectral parameter, $\lambda_t=0$, $\Phi$ and $\Psi$ are wave functions.

In the following we list possible one-component
equations reduced from equation \eqref{akns-2}. These equations will be considered in this paper.
Equation \eqref{akns-2} admits the following reductions (see \cite{AM-Nonl-2016} and reference therein)
\begin{eqnarray}
iq_t=q_{xx}-2\delta q^2q^*, &&  r=\delta q^*,\label{cnls}\\
iq_t=q_{xx}-2\delta q^2q^*(-x), &&  r=\delta q^*(-x),\label{snls}\\
iq_t=q_{xx}-2\delta q^2q(-t), &&  r=\delta q(-t),\label{tnls}\\
iq_t=q_{xx}-2\delta q^2q(-x,-t), &&  r=\delta q(-x,-t),\label{tsnls}
\end{eqnarray}
where  $\delta=\pm1$, $q(-x)=q(-x,t)$, $q(-t)=q(x,-t)$ and $q(-x,-t)$ indicate
the reverse-space, reverse-time and reverse-space-time, respectively.

\section{Bilinearisation and solutions of the AKNS system \eqref{akns-2}} \label{sec-3}

In this section, we develop the double Wronskian technique to construct exact solutions
of the  second order AKNS system \eqref{akns-2} with  nonzero background solution $(q_0, r_0)$.

\subsection{Bilinearisation}\label{sec-3-1}

Suppose that $(q_0,r_0)$ are a set of solution to the  second order AKNS system \eqref{akns-2}.
To introduce  nonzero backgrounds, we consider the dependent variable transformation (i.e. \eqref{qrfg})
\begin{eqnarray}\label{trans}
q=q_0+\frac{g}{f},~~r=r_0+\frac{h}{f},
\end{eqnarray}
with which the system \eqref{akns-2}
can be decoupled into the following bilinear form of $f, g$ and $h$,
\begin{subequations}\label{nls-b}
\begin{align}
& D^2_x f\cdot f=-2gh-2q_0hf-2r_0gf,   \label{nls-b1}\\
& (D_x^2-iD_t-2q_0r_0)g\cdot f + q_0 D_x^2 f\cdot f =0,   \label{nls-b2}\\
& (D_x^2+iD_t-2q_0r_0 )h\cdot f + r_0 D_x^2 f\cdot f=0, \label{nls-bl3}
\end{align}
\end{subequations}
where $D_x$ and $D_t$ are the well known Hirota bilinear operators defined as \cite{Hirota-1974}
\[
D_x^mD_t^nf\cdot g=(\partial_x-\partial_{x'})^m(\partial_t-\partial_{t'})^nf(x,t) g(x',t')|_{x'=x,t'=t}.
\]
Note that when $q_0=r_0=0$ the above bilinear form \eqref{nls-b}  degenerates to
the case of zero background (cf. equations (1.5.1)-(1.5.3) in \cite{CZB-SCS-2008}).

To have solutions of \eqref{nls-b}, we expand $f, g$ and $h$ as the series
\begin{subequations}\label{expand}
\begin{align}
& f(x,t)=1+f^{(1)}\varepsilon +f^{(2)}\varepsilon^2 +f^{(3)}\varepsilon^3
+\cdots+f^{(j)}\varepsilon^{j}+\cdots,\\
& g(x,t)=g^{(1)}\varepsilon+g^{(2)}\varepsilon^2+g^{(3)}\varepsilon^3+\cdots+g^{(j)}\varepsilon^{j}+\cdots,\\
& h(x,t)=h^{(1)}\varepsilon+h^{(2)}\varepsilon^2+h^{(3)}\varepsilon^3+\cdots+h^{(j)}\varepsilon^{j}+\cdots,
\end{align}
\end{subequations}
where $\varepsilon$ is an arbitrary number, $\{f^{(j)}, g^{(k)}, h^{(l)}\}$ are functions to be determined.
Consider a special case,
\begin{eqnarray}\label{q0r0-nss}
q_0=A_0e^{2iA_0^2t},~~r_0=A_0e^{-2iA_0^2t},
\end{eqnarray}
where $A_0$ is an arbitrary constant.
By calculation we can find out 1- and 2-soliton solutions, which agree with the following expressions
with $(\mu_1,\mu_2)=(1,0)$ and $(\mu_1,\mu_2)=(1,1)$:
\begin{subequations}
\begin{align}
  & f= 1+\mu_1\exp(\xi_1+a_1)+\mu_2\exp(\xi_2+a_2)+\mu_1\mu_2\exp(\xi_1+\xi_2+A_{12}),\\
& g=\biggl[\mu_1\exp(\xi_1+b_1)+\mu_2\exp(\xi_2+b_2)+\mu_1\mu_2\exp(\xi_1+\xi_2+B_{12})\biggr]e^{2iA_0^2t},\\
& h=\biggl[\mu_1\exp(\xi_1+c_1)+\mu_2\exp(\xi_2+c_2)+\mu_1\mu_2\exp(\xi_1+\xi_2+C_{12})\biggr]e^{-2iA_0^2t},
\end{align}
where (for $j=1,2$)
\begin{align}
& \xi_j = k_j x+\omega_j t+\xi_j^{(0)},~~
\omega_j=\pm\sqrt{k_j^4-4 A_0^2 k_j^2},~~\mu_j\in\{0,1\},\\
&e^{a_j}={-2A_0},~~ e^{b_j}={k_j^2- w_j},~~
e^{c_j}={k_j^2+w_j},\\
& e^{A_{12}}=-\frac{2k_1^2 k_2^2-2\omega_1\omega_2-4A_0^2 \left(k_i^2+k_j^2\right) }{(k_1+k_2)^2},\\
& e^{B_{12}}=-\frac{2A_0(k_1-k_2) \left(k_1^2-k_2^2-\omega_1+\omega_2\right)}{k_i+k_j},\\
& e^{C_{12}}=-\frac{2A_0(k_1-k_2) \left(k_1^2-k_2^2+\omega_1-\omega_2\right)}{k_i+k_j}.
\end{align}
\end{subequations}
We should mention that
the reduction $A_0=0$ of the above expression does not gives rise to
a significant solution for the unreduced system \eqref{akns-2} with a zero background.

Note that in \cite{LP-TMP-2007} the system \eqref{akns-2} was also bilinearized via the following transformation
\[q=\frac{G}{F},~~ r= \frac{H}{F},\]
and the bilinear form is
\begin{align*}
& (D_x^2-iD_t-\lambda)G\cdot F =0,    \\
& (D_x^2+iD_t-\lambda )H\cdot F=0, \\
& (D^2_x -\lambda)F \cdot F=-2GH,
\end{align*}
where $\lambda\in \mathbb{R}$.
One-soliton and two-soliton solutions they derived
(see equation (58) and (82) in \cite{LP-TMP-2007})
are shown to be associated with a more general plane wave background (see \eqref{q0r0-gen},
which degenerates to \eqref{q0r0-nss} when $a=0$ and $A_0=B_0$).

\subsection{Quasi double Wronskian solutions of the AKNS system \eqref{akns-2}}\label{sec-3-2}

We now derive double Wronskian solutions of the second order  AKNS system \eqref{akns-2}.

We extend the Lax pair \eqref{akns-x} and \eqref{akns-t} to the following matrix system
\begin{eqnarray}\label{pp-x}
\left(
           \begin{array}{c}
             \phi\\
             \psi
           \end{array}
         \right)_x=\mathcal{M}\left(
           \begin{array}{c}
             \phi\\
             \psi
           \end{array}
         \right),~~~~\mathcal{M}=\left(
           \begin{array}{cc}
             A & q_0I_{2m+2}\\
             r_0I_{2m+2}& -A
           \end{array}
         \right),
\end{eqnarray}
\begin{eqnarray}\label{pp-t}
i\left(
           \begin{array}{c}
             \phi\\
             \psi
           \end{array}
         \right)_t=\mathcal{N}\left(
           \begin{array}{c}
             \phi\\
             \psi
           \end{array}
         \right),~~~~\mathcal{N}=\left(
           \begin{array}{cc}
             2A^2-q_0r_0I_{2m+2} &2A q_0+q_{0,x}I_{2m+2}\\
              2A r_0-r_{0,x}I_{2m+2} &-2A^2+q_0r_0I_{2m+2}
           \end{array}
         \right),
\end{eqnarray}
where $A \in\mathbb{C}_{(2m+2)\times (2m+2)}$ is an arbitrary complex matrix,
$I_{2m+2}$ is the $(2m+2)$-th order identity matrix,
$\phi$ and $\psi$ are two $(2m+2)$-th column vectors.
Introduce vectors $\phi_{k}$ and $\psi_{k}$ by
\begin{eqnarray}\label{3.8}
\phi_{k}=A^k\phi, & \psi_{k}=(-A)^k\psi,
\end{eqnarray}
and define determinants\footnote{When $m=0$ we have
$g=2|\phi_{0}, \phi_1|, ~ h=-2|\psi_{0},\psi_1|$.}
\begin{eqnarray}\label{fgh}
f=|\h\phi_m;\h\psi_m|, & g=2|\h\phi_{m+1};\h\psi_{m-1}|, & h=-2|\h\phi_{m-1};\h\psi_{m+1}|,
\end{eqnarray}
where $\h\phi_m$ stands for the consecutive columns $(\phi_0,\phi_1,\cdots,\phi_m)$.
Then, solutions of the bilinear system \eqref{nls-b} are described as the following.

\begin{theorem}\label{th1}
The bilinear system \eqref{nls-b} has  solutions \eqref{fgh},
where $\phi$ and $\psi$ in \eqref{3.8} satisfy \eqref{pp-x} and \eqref{pp-t},
and $(q_0, r_0)$ are given solutions of the system \eqref{akns-2}.
Furthermore, matrix $A$ and any matrix that is similar to it lead to the same solution of
the AKNS system \eqref{akns-2} through the transformation \eqref{trans}.
\end{theorem}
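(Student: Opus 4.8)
The plan is to verify the theorem in two parts: first that the Wronskian expressions \eqref{fgh} satisfy the three bilinear equations \eqref{nls-b}, and second that similar matrices $A$ yield the same solution. For the first part, I would treat $f, g, h$ as functions of $(x,t)$ through the column vectors $\phi_k, \psi_k$, whose $x$- and $t$-derivatives are controlled by the matrix Lax pair \eqref{pp-x}--\eqref{pp-t}. The key computational input is to differentiate the consecutive-column blocks $\h\phi_m, \h\psi_m$ and to express the derivatives of individual columns back in terms of columns of the same type. From \eqref{pp-x} and \eqref{3.8} one gets $\phi_{k,x}=\phi_{k+1}+q_0\psi_k$ and $\psi_{k,x}=-r_0\phi_k-\psi_{k+1}$ (and analogous but second-order expressions for the $t$-derivatives via $\mathcal{N}$). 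The point is that each derivative of a column is a linear combination of a shifted column plus a background term, so differentiating a determinant and expanding along the differentiated column produces determinants with shifted column indices.

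The core of the argument is then a sequence of determinantal identities. First I would compute $f_x, f_{xx}, g_x, h_x$, etc., and re-express the resulting column-shifted Wronskians so that the bilinear combinations $D_x^2 f\cdot f$, $(D_x^2\mp iD_t-2q_0r_0)g\cdot f$ and the like collapse. The standard tool here is the Laplace expansion of a $(2m+2)\times(2m+2)$ determinant together with the Jacobi/Pl\"ucker (or Sylvester) identities relating determinants that share all but one or two columns. These identities are what convert sums of products of Wronskians into zero, reproducing exactly the right-hand sides $-2gh-2q_0hf-2r_0gf$ and $0$. I would carry out the verification of \eqref{nls-b2} in detail and then note that \eqref{nls-bl3} follows by the symmetry $\phi\leftrightarrow\psi$, $q_0\leftrightarrow r_0$, $A\leftrightarrow -A$, $g\leftrightarrow -h$, $i\leftrightarrow -i$, which leaves \eqref{pp-x}--\eqref{pp-t} invariant and swaps the two equations.

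The similarity-invariance claim is comparatively easy. If $B=TAT^{-1}$ with $T$ constant and invertible, then replacing $(\phi,\psi,A)$ by $(T\phi,T\psi,B)$ sends $\phi_k=A^k\phi$ to $B^k(T\phi)=TA^kT^{-1}T\phi=T\phi_k$, so every column of each Wronskian is multiplied by the \emph{same} constant matrix $T$. Hence $f, g, h$ are each scaled by the constant factor $\det T$, and since the transformation \eqref{trans} depends only on the ratios $g/f$ and $h/f$, the factor $\det T$ cancels and the resulting $(q,r)$ is unchanged. I expect the main obstacle to be organizational rather than conceptual: keeping the bookkeeping of shifted column indices consistent and selecting the correct Pl\"ucker identity at each step so that the background terms $q_0, r_0$ (which distinguish this nonzero-background case from the $q_0=r_0=0$ computation in \cite{CZB-SCS-2008}) appear with the precise coefficients demanded by \eqref{nls-b1}.
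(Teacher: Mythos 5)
Your overall architecture matches the paper's proof in Appendix \ref{app-A}: direct differentiation of the quasi double Wronskians using the column evolution induced by \eqref{pp-x}--\eqref{pp-t}, reduction of the bilinear combinations via determinantal identities, and the observation that a similarity $A=P^{-1}JP$ merely rescales $f,g,h$ by the same determinant factor, which cancels in $g/f$ and $h/f$. The similarity argument and the Pl\"ucker identity (the paper's Lemma \ref{lemma 1}) are exactly as you describe. However, your stated toolkit is incomplete in one essential respect: Laplace expansion plus the Jacobi/Pl\"ucker/Sylvester identities alone do not close the computation. The paper's second key ingredient is the operator-trace identity (Lemma \ref{lemma 2}), which, applied with a suitable choice of column operators, yields relations such as $(\mathrm{Tr}A)f=|\h\phi_{m-1},\phi_{m+1};\h\psi_m|+|\h\phi_{m};\h\psi_{m-1},\psi_{m+1}|$ and an analogous formula for $(\mathrm{Tr}A)^2f$. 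These are what convert the doubly column-shifted Wronskians appearing in $f_{xx}$ and $g_{xx}$ into scalar multiples of $f$ and $g$ (via $f\,[(\mathrm{Tr}A)^2f]=[(\mathrm{Tr}A)f]^2$ and its $g$-analogue), after which the surviving combinations are precisely of the three-term form that Lemma \ref{lemma 1} annihilates; without this step the products of doubly shifted and unshifted determinants cannot be matched to any Pl\"ucker relation. A smaller point: your column derivative formulas drop the alternating signs coming from $\psi_k=(-A)^k\psi$; the correct relations are $\phi_{k,x}=\phi_{k+1}+(-1)^kq_0\psi_k$ and $\psi_{k,x}=(-1)^kr_0\phi_k+\psi_{k+1}$, and these signs are exactly what produce the coefficients of the $q_0$ and $r_0$ terms on the right-hand side of \eqref{nls-b1}.
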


The proof will be sketched in Appendix \ref{app-A}.
Later we only need to consider the canonical forms of $A$, i.e. $A$ being diagonal or a Jordan block.

Strictly speaking, the above $f,g,h$ in \eqref{fgh} are not double Wronskians that are defined by arranging columns
by increasing the order of derivatives of $\phi$ and $\psi$.
We may call them  quasi double Wronskians.
Note that when $A$ is diagonal the results in Theorem \ref{th1} are the same as
those derived from the Darboux transformation (cf. \S4.2 in \cite{MatS-book-1991}).
When $A$ is a Jordan block, the corresponding solutions can be obtained
using a limit procedure from those solutions which are derived from a diagonal matrix $A$
(e.g.  \S4.3 in \cite{MatS-book-1991}).
We also note that when the background solution $(q_0, r_0)$ is independent of $x$,
we may covert  $f,g,h$ given in \eqref{fgh} to double Wronskians.

\begin{theorem}\label{th2}
Suppose that the $(2m+2)\times (2m+2)$ double Wronskians\footnote{When $m=0$, $g$ and $h$ take the form
$g=-2|\phi,\partial_x \phi|, ~ h=2|\psi,\partial_x \psi|$.}
\begin{equation}\label{fgh-2}
f=(-1)^m|\h m;\h m|,  ~ g=-2|\h{m+1};\h{m-1}|, ~ h=2|\h{m-1};\h{m+1}|,
\end{equation}
where $|\widehat{n}; \widehat{m}|$  denotes a $(m+n+2)$ double Wronskian
defined as (see \cite{Nimmo-NLS-1983})
\begin{equation*}
|\widehat{m+j}; \widehat{m-j}|=|0,1,\cdots,m+j; 0,1,\cdots, m-j|
=|\phi,\partial_x \phi,\ldots,\partial_x^{m+j} \phi;
\psi,\partial_x \psi,\ldots,\partial_x^{m-j} \psi|,
\end{equation*}
$\phi$ and $ \psi $ are $(2m+2)$-th order column vectors.
When $\phi$ and $ \psi $ meet the condition
\begin{subequations}\label{wron-cond-x}
\begin{align}
& \phi_x=A\phi+q_0\psi,\quad i\phi_t=2\phi_{xx}-q_0r_0\phi-2q_0\psi_x,\label{wron-cond-x-a}\\
& \psi_x=r_0 \phi-A\psi,\quad  i\psi_t=-2\psi_{xx}+q_0r_0\psi+2r_0\phi_x,\label{wron-cond-x-b}
\end{align}
\end{subequations}
where $A$ is a $(2m+2)\times (2m+2)$ complex matrix,
$(q_0, r_0)$ satisfy \eqref{akns-2} but are independent of $x$,
then $f,g,h$ defined in \eqref{fgh-2} are solutions  to the bilinear equations \eqref{nls-b}.
Furthermore, matrix $A$ and any matrix that is similar to it lead to
the same solution to the AKNS system \eqref{akns-2}
through the transformation \eqref{trans}.
\end{theorem}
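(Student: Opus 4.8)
The plan is to derive Theorem~\ref{th2} from Theorem~\ref{th1} by exhibiting the double Wronskians \eqref{fgh-2} as the image of the quasi double Wronskians \eqref{fgh} under a transformation that preserves the bilinear system \eqref{nls-b}. First I would observe that, once $(q_0,r_0)$ are independent of $x$, the two first-order relations in \eqref{wron-cond-x} are precisely the matrix spatial problem \eqref{pp-x}, i.e. $\Xi_x=\mathcal M\Xi$ for the stacked vector $\Xi=(\phi;\psi)$. Differentiating $\phi_x=A\phi+q_0\psi$ once more and using $q_{0,x}=0$ gives $\phi_{xx}=(A^2+q_0r_0)\phi$, and likewise $\psi_{xx}=(A^2+q_0r_0)\psi$; inserting these into the two $t$-relations of \eqref{wron-cond-x} reproduces \eqref{pp-t} verbatim. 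Hence $\phi,\psi,A$ are admissible data for Theorem~\ref{th1} and $\partial_x^k\Xi=\mathcal M^k\Xi$, so the derivative columns of \eqref{fgh-2} are built from the same ingredients as the power-of-$A$ columns of \eqref{fgh}.

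Next I would convert \eqref{fgh-2} into \eqref{fgh} by column operations, using the identities $A\,\partial_x^k\phi=\partial_x^{k+1}\phi-q_0\,\partial_x^k\psi$ and $A\,\partial_x^k\psi=r_0\,\partial_x^k\phi-\partial_x^{k+1}\psi$, which follow at once from \eqref{wron-cond-x}. Since $q_0,r_0$ are scalars constant in $x$, these let me trade each derivative column for the corresponding $A$-power column at the cost of a background multiple of a column in the opposite block, without changing the value of any determinant. Carrying this out I expect the relations $f_{(2)}=(-1)^m f_{(1)}$, $g_{(2)}=-g_{(1)}+c\,q_0 f_{(1)}$ and $h_{(2)}=-h_{(1)}+c\,r_0 f_{(1)}$, where the subscripts $(1),(2)$ label the objects of the two theorems, and $c=-2$ when $m$ is even but $c=0$ when $m$ is odd. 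The coefficient $c$ is nonzero exactly when the highest column $\partial_x^{m+1}\phi$ sits on an odd power of $A$: then its accompanying term $q_0\,\partial_x^{m}\psi$ has no partner inside the truncated $\psi$-block and survives, the residual determinant being $\pm f_{(1)}$.

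Finally I would verify that, in either parity, the map just obtained preserves \eqref{nls-b}. For odd $m$ it is simply $(f,g,h)\mapsto(-f,-g,-h)$, under which every Hirota term is unchanged. For even $m$ it is $(f,g,h)\mapsto(f,\,-g-2q_0f,\,-h-2r_0f)$: here \eqref{nls-b1} is preserved by a direct algebraic substitution, while in \eqref{nls-b2} and \eqref{nls-bl3} the operator $D_t$ also hits the $x$-independent but $t$-dependent factor $q_0$ and leaves a term proportional to $q_{0,t}f^2$. At this point I would use that $(q_0,r_0)$ solve \eqref{akns-2} with $q_{0,xx}=r_{0,xx}=0$, so that $q_{0,t}=2iq_0^2 r_0$ and $r_{0,t}=-2ir_0^2q_0$; these relations cancel the residual $f^2$ terms exactly. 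Since Theorem~\ref{th1} already gives that $(f_{(1)},g_{(1)},h_{(1)})$ solve \eqref{nls-b}, their images solve it too, and \eqref{trans} then produces the asserted solution of \eqref{akns-2}; the similarity statement is inherited directly from Theorem~\ref{th1}.

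I expect the main obstacle to be the column bookkeeping of the second step: showing that exactly one background column survives at the top of the block, and that the surviving determinant equals $\pm f_{(1)}$, so that $c$ and the sign $(-1)^m$ come out as stated. By contrast, the third step, although it looks like a lengthy Hirota computation, collapses quickly once the background dispersion relations $q_{0,t}=2iq_0^2r_0$ and $r_{0,t}=-2ir_0^2q_0$ are inserted; the genuine difficulty therefore lies in the determinant algebra rather than in the verification of the bilinear identities.
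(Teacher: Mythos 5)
Your proposal is correct, but it reaches Theorem~\ref{th2} by a genuinely different route from the paper. The paper's own proof (Appendix~\ref{app-B}) is a self-contained direct verification that runs the same machinery as the proof of Theorem~\ref{th1} a second time on derivative columns: it computes $F_x$, $F_{xx}$, $iF_t$, $s_x$, $s_{xx}$, $is_t$ from \eqref{wron-cond-x}, uses the trace identity of Lemma~\ref{lemma 2} to form $(\mathrm{Tr}A)F$, $(\mathrm{Tr}A)^2F$, $(\mathrm{Tr}A)s$, $(\mathrm{Tr}A)^2s$, and closes with the Pl\"ucker relation of Lemma~\ref{lemma 1}. You instead reduce Theorem~\ref{th2} to Theorem~\ref{th1}, and the reduction checks out: \eqref{wron-cond-x} with $q_{0,x}=r_{0,x}=0$ does reproduce \eqref{pp-x}--\eqref{pp-t} via $\phi_{xx}=(A^2+q_0r_0)\phi$; since $\mathcal{M}^2=(A^2+q_0r_0)I_{4m+4}$ when $q_0,r_0$ are $x$-independent scalars, one has $\partial_x^{2j}\phi=(A^2+q_0r_0)^j\phi$ and $\partial_x^{2j+1}\phi=A(A^2+q_0r_0)^j\phi+q_0(A^2+q_0r_0)^j\psi$, so the change of basis between derivative columns and $A$-power columns is unit block-triangular and only the top column of the longer block leaks a single term ($q_0A^m\psi$ for $g$, $r_0A^m\phi$ for $h$) outside the truncated block, exactly when $m+1$ is odd; this gives $f_{(2)}=(-1)^mf_{(1)}$ and, for even $m$, the map $(f,g,h)\mapsto(f,\,-g-2q_0f,\,-h-2r_0f)$, for odd $m$ the overall sign flip. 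The residual terms $2iq_{0,t}f^2+4q_0^2r_0f^2$ and $-2ir_{0,t}f^2+4q_0r_0^2f^2$ in \eqref{nls-b2} and \eqref{nls-bl3} do vanish by $iq_{0,t}=-2q_0^2r_0$ and $ir_{0,t}=2r_0^2q_0$, as you claim. Your route is shorter, explains where the sign factors $(-1)^m$, $-2$, $2$ in \eqref{fgh-2} come from, and turns the paper's passing remark before Theorem~\ref{th2} (that $f,g,h$ of \eqref{fgh} ``may be converted'' to double Wronskians when $q_0,r_0$ are $x$-independent) into an actual argument; the paper's route is logically independent of Theorem~\ref{th1} and avoids the parity and sign bookkeeping, which is the one place in your argument where an error could realistically creep in and which you rightly flag as the main point needing care.
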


The proof will be given in Appendix \ref{app-B}.

\section{Reduction and solutions} \label{sec-4}

For convenience we call \eqref{akns-2} the unreduced system and \eqref{cnls}-\eqref{tsnls}
the reduced equations.
In the previous section we have already obtained solutions in terms of quasi double Wronskians
\eqref{fgh} (see Theorem \ref{th1} for the unreduced system \eqref{akns-2}).
In this section we implement reductions by imposing constraints on $A$ and $\psi$
so that \eqref{fgh} can provides solutions to the reduced equations  \eqref{cnls}-\eqref{tsnls}.
Such a reduction technique was first introduced in \cite{ChenDLZ-2018,CheZ-AML-2018}.

\subsection{Reduction of the Wronskian solution}\label{sec-4-1}

Let us directly present results and then prove them.

\begin{theorem}\label{th3}
Let  $A$ and $T$ be matrices in $\mathbb{C}_{(2m+2)\times (2m+2)}$.
Solutions of the reduced equations \eqref{cnls}-\eqref{tsnls} are given in the following, respectively.

\noindent
 (1) The classical NLS  equation \eqref{cnls} has solution
\begin{subequations}
\begin{align}
& q=q_0+\frac{g}{f},\\
&f=|\h\phi_m;T\h\phi_m^*|,~~ g=2|\h\phi_{m+1};T\h\phi_{m-1}^*|, \label{4.1b}
\end{align}
\end{subequations}
where $q_0$ is a  solution of  equation \eqref{cnls}
such that $r_0^*=\delta q_0$,
vector $\phi$ is a solution of matrix equations
\begin{subequations}\label{4.2}
\begin{align}
& \phi_x=A\phi+q_0T\phi^*,\label{4.2a}\\
& i\phi_t=(2A^2-\delta q_0q_0^*I_{2m+2})\phi+(2A q_0+q_{0x}I_{2m+2})T\phi^*,\label{4.2b}
\end{align}
\end{subequations}
and $A$ and $T$  obey the relation
\begin{equation}\label{cnls-AT}
AT+TA^*=0, ~~ TT^*=\delta I_{2m+2}, ~~ \delta=\pm 1.
\end{equation}

\noindent
(2) For the reverse-space nonlocal NLS  equation \eqref{snls}, its solution is given by
\begin{subequations}
\begin{align}
&q=q_0+\frac{g}{f},\\
& f= (-1)^{\frac{m(m+1)}{2}}|\h\phi_m;T\h\phi_m^*(-x)|,
~~ g=2(-1)^{\frac{m(m-1)}{2}}|\h\phi_{m+1};T\h\phi_{m-1}^*(-x)|,
\end{align}
\end{subequations}
where $q_0$ is a solution of equation \eqref{snls} such that
$r_0^*(x)=\delta q_0(-x)$, vector $\phi$ is a solution of matrix equations
\begin{subequations}\label{4.5}
\begin{align}
& \phi_x(x)=A\phi(x)+q_0(x)T\phi^*(-x),\label{4.5a}\\
& i\phi_t(x)=(2A^2-\delta q_0(x)q_0^*(-x)I_{2m+2})\phi(x)+(2A q_0+q_{0,x}(x)I_{2m+2})T\phi^*(-x),\label{4.5b}
\end{align}
\end{subequations}
and $A$ and $T$ obey the relation
\begin{equation}\label{snls-AT}
AT-TA^*=0, ~~ TT^*=-\delta I_{2m+2},~~ \delta=\pm 1.
\end{equation}

\noindent
(3) For the reverse-time nonlocal NLS  equation \eqref{tnls}, its solution is given by
\begin{subequations}
\begin{align}
&q=q_0+\frac{g}{f},\\
&f=|\h\phi_m;T\h\phi_m(-t)|, ~~g=2|\h\phi_{m+1};T\h\phi_{m-1}(-t)|,
\end{align}
\end{subequations}
where $q_0$ is a solution of equation \eqref{tnls} such that
$r_0(t)=\delta q_0(-t)$, $\phi$ is a solution of matrix equations
\begin{subequations}\label{4.8}
\begin{align}
&\phi_x(t)=A\phi(t)+q_0(t)T\phi(-t),\\
&i\phi_t(t)=(2A^2-\delta q_0(t)q_0(-t)I_{2m+2})\phi(t)+(2A q_0(t)+q_{0,x}(t)I_{2m+2})T\phi(-t),
\end{align}
\end{subequations}
and $A$ and $T$ obey the relation
\begin{equation}\label{tnls-AT}
AT+TA=0, ~~ T^2=\delta I_{2m+2}, ~~ \delta=\pm 1.
\end{equation}

\noindent
(4) For the reverse-space-time nonlocal NLS  equation \eqref{tsnls}, its solution is given by
\begin{subequations}
\begin{align}
&q=q_0+\frac{g}{f},\\
& f=(-1)^{\frac{m(m+1)}{2}}|\h\phi_m;T\h\phi_m(-x,-t)|,
~~g=2(-1)^{\frac{m(m-1)}{2}}|\h\phi_{m+1};T\h\phi_{m-1}(-x,-t)|,
\end{align}
\end{subequations}
where $q_0$ is a solution of equation \eqref{tsnls}
such that $r_0(x,t)=\delta q_0(-x,-t)$, vector $\phi$ is a solution of matrix equations
\begin{subequations}\label{4.11}
\begin{align}
& \phi_x(x,t)=A\phi(x,t)+q_0(x,t)T\phi(-x,-t),\\
&i\phi_t(x,t)=(2A^2-\delta q_0(x,t)q_0(-x,-t))\phi(x,t)+(2A q_0(x,t)+q_{0,x}(x,t))T\phi(-x,-t),
\end{align}
\end{subequations}
and $A$ and $T$ obey the relation
\begin{equation}\label{tsnls-AT}
AT-TA=0, ~~ T^2=-\delta I_{2m+2},~~ \delta=\pm 1.
\end{equation}
\end{theorem}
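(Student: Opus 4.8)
The plan is to realize each of the four reductions as a single linear constraint $\psi=T\phi^{\sharp}$ imposed on the matrix Lax pair \eqref{pp-x}--\eqref{pp-t}, where $\phi^{\sharp}$ denotes the relevant symmetry of $\phi$ (namely $\phi^{*}$, $\phi^{*}(-x)$, $\phi(-t)$, $\phi(-x,-t)$ in cases (1)--(4)), and then to verify two independent facts: that the constraint collapses the $(2m+2)$-vector Lax pair to the closed scalar-matrix system for $\phi$ alone (equations \eqref{4.2}, \eqref{4.5}, \eqref{4.8}, \eqref{4.11}), and that it forces the tau functions produced by Theorem \ref{th1} to obey the conjugation/reversal symmetry equivalent to $r=\delta q^{\sharp}$. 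By Theorem \ref{th1} the pair $(q,r)=(q_0+g/f,\,r_0+h/f)$ already solves the unreduced system \eqref{akns-2}; once $r=\delta q^{\sharp}$ is established, $q$ automatically solves the corresponding reduced equation in \eqref{cnls}--\eqref{tsnls}. I shall carry out the classical case \eqref{cnls} in detail, the other three differing only in the placement of conjugation and argument reversal.

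First I would substitute $\psi=T\phi^{*}$ into the spatial half \eqref{pp-x}. The $\phi$-row reproduces \eqref{4.2a} verbatim, while the $\psi$-row $\psi_x=r_0\phi-A\psi$ must be consistent with differentiating the constraint: $\psi_x=T\phi_x^{*}=TA^{*}\phi^{*}+q_0^{*}TT^{*}\phi$. Comparing the two expressions forces exactly $AT+TA^{*}=0$ together with $q_0^{*}TT^{*}=r_0$; under $TT^{*}=\delta I_{2m+2}$ the latter is the background hypothesis $r_0=\delta q_0^{*}$, i.e. $r_0^{*}=\delta q_0$. Thus the first line of \eqref{cnls-AT} is precisely the compatibility condition for the constraint to survive the $x$-flow. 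An identical computation on the time half \eqref{pp-t}, using $q_{0,x}$ and $r_0=\delta q_0^{*}$, reproduces \eqref{4.2b} and introduces no new condition. This is the step where the algebraic relations \eqref{cnls-AT} (and their analogues \eqref{snls-AT}, \eqref{tnls-AT}, \eqref{tsnls-AT}) enter.

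Next I would rewrite the quasi double Wronskians \eqref{fgh} in terms of $\phi$ alone. Writing $\psi_k=(-A)^k\psi=(-A)^kT\phi^{*}$ and using $AT+TA^{*}=0$ in the iterated form $A^kT=(-1)^kT(A^{*})^k$, the two signs cancel and $\psi_k=T\phi_k^{*}$, so that $f=|\h\phi_m;T\h\phi_m^{*}|$ and $g=2|\h\phi_{m+1};T\h\phi_{m-1}^{*}|$ exactly as in \eqref{4.1b}. For the nonlocal reductions the commutation relation carries the opposite sign (e.g. $AT-TA^{*}=0$ in \eqref{snls-AT}), which leaves an uncancelled factor $(-1)^k$ in $\psi_k$; collecting these over the columns $k=0,\dots,m$ (respectively $k=0,\dots,m\pm 1$) produces precisely the prefactors $(-1)^{m(m+1)/2}$ and $(-1)^{m(m-1)/2}$ recorded in the statement. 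It is worth isolating this bookkeeping, since these prefactors are the only place where the four cases genuinely diverge.

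Finally I would establish the tau-function identity $hf^{*}=\delta g^{*}f$, which is equivalent to $r=\delta q^{*}$. Starting from $f^{*}=|\h\phi_m^{*};T^{*}\h\phi_m|$, where $\h\phi_m^{*}=(\phi_0^{*},\dots,\phi_m^{*})$ and $T^{*}\h\phi_m=(T^{*}\phi_0,\dots,T^{*}\phi_m)$, I substitute $T^{*}=\delta T^{-1}$ (from $TT^{*}=\delta I_{2m+2}$), factor $\delta$ out of the last $m+1$ columns, left-multiply every column by $T$ (which multiplies the determinant by $\det T$), and swap the two blocks of $m+1$ columns, obtaining $\det(T)\,f^{*}=(-\delta)^{m+1}f$; the same manipulation on $g^{*}$, now with blocks of sizes $m+2$ and $m$, gives $\det(T)\,g^{*}=-(-\delta)^{m}h$. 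Multiplying the two and using $(-\delta)^{2}=1$ yields $hf^{*}=\delta g^{*}f$, whence $h/f=\delta(g/f)^{*}$ and $r=r_0+h/f=\delta(q_0+g/f)^{*}=\delta q^{*}$. The reverse-space, reverse-time and reverse-space-time identities follow from the same column operations, with $\phi^{*}$ replaced by the appropriate reversed argument and with $TT^{*}$ or $T^{2}$ equal to $\pm\delta I_{2m+2}$ as in \eqref{snls-AT}--\eqref{tsnls-AT}. I expect the main obstacle to be purely combinatorial: keeping the signs from the block swaps, the powers of $\delta$, and the reversal-induced $(-1)^k$ factors mutually consistent so that the stated prefactors emerge with the correct parity. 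Once that bookkeeping is organized, the four reductions are immediate consequences of Theorem \ref{th1}.
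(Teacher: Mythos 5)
Your proposal is correct and follows essentially the same route as the paper: impose the constraint $\psi=T\phi^{\sharp}$, check its compatibility with the matrix Lax pair \eqref{pp-x}--\eqref{pp-t} (which is where \eqref{cnls-AT}--\eqref{tsnls-AT} and $r_0=\delta q_0^{\sharp}$ enter), and then use column operations on the quasi double Wronskians to obtain $f\propto f^{\sharp}$ and $h\propto g^{\sharp}$, whence $r=\delta q^{\sharp}$. Your bookkeeping of the $(-1)^{k}$ factors behind the prefactors $(-1)^{m(m\pm1)/2}$ and the signs $\det(T)f^{*}=(-\delta)^{m+1}f$, $\det(T)g^{*}=-(-\delta)^{m}h$ matches the relations the paper records.
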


\begin{proof}

We employ the classical NLS  equation \eqref{cnls} as an illustrating example.
Introduce constraint on $\psi$,
\begin{eqnarray}\label{T1}
\psi=T\phi^*,
\end{eqnarray}
where $T$ is a certain matrix  in $\mathbb{C}_{(2m+2)\times (2m+2)}$.
First, it can be verified that when $r_0=\delta q_0^*$ and
 $A$ and $T$ satisfy \eqref{cnls-AT},
 the constraint \eqref{T1} reduces \eqref{pp-x} and \eqref{pp-t} to \eqref{4.2}.
In fact, taking \eqref{pp-x} as an example, under \eqref{T1} and  $r_0=\delta q_0^*$,
we rewrite \eqref{pp-x} as
\begin{subequations}\label{4.14}
\begin{align}
& \phi_x=A\phi+q_0 T \phi^*, \label{4.14a}\\
& T\phi_x^*=\delta q_0^* \phi-AT \phi^*, \label{4.14b}
\end{align}
\end{subequations}
where \eqref{4.14a} is nothing but \eqref{4.2a}.
Making use of \eqref{cnls-AT},
equation \eqref{4.14b} multiplied by $\delta T^*$ from the left gives rise to the complex conjugate of \eqref{4.14a}.
This indicates \eqref{pp-x} reduces to \eqref{4.2a}.
In a similar way one can find \eqref{pp-t} reduces to  \eqref{4.2b} in this case.

Next, with the constraint \eqref{T1}, we can rewrite the quasi double Wronskians \eqref{fgh} as
\begin{subequations}
\begin{eqnarray}
f&=&|\h\phi_m;\h\psi_m|=|\h\phi_m;T\h\phi_m^*|,\\
g&=&2|\h\phi_{m+1};\h\psi_{m-1}|=2|\h\phi_{m+1};T\h\phi_{m-1}^*|,\\
h&=&-2|\h\phi_{m-1};\h\psi_{m+1}|=-2|\h\phi_{m-1};T\h\phi_{m+1}^*|.
\end{eqnarray}
\end{subequations}
Making use of \eqref{cnls-AT} we find that
\[f =|\h\phi_m;T\h\phi_m^*|=|T||\delta T^*\h\phi_m;\h\phi_m^*|.\]
Then, switching the first $(m+1)$ columns and the last $(m+1)$ columns and picking the parameter $\delta$ out
yield
\[f= (-\delta)^{m+1}|T||\h\phi_m^*;T^*\h\phi_m|=(-\delta)^{m+1}|T|f^*.\]
In a similar way we can prove
\[h=-(-\delta)^m|T|g^*.\]
Thus we have
\begin{eqnarray*}
\frac{r}{q^*}=\frac{r_0+h/f}{q_0^*+g^*/f^*}=\frac{\delta q_0^*+\delta g^*/f^*}{q_0^*+g^*/f^*}=\delta,
\end{eqnarray*}
i.e. $r=\delta q^*$, which is the reduction by which we get \eqref{cnls} from  \eqref{akns-2}.

The proof of nonlocal cases is similar to the classical one.
For the reverse-space nonlocal NLS  equation \eqref{snls},
the reduction is implemented by taking
\begin{eqnarray}\label{T2}
\psi=T\phi^*(-x)
\end{eqnarray}
together with \eqref{snls-AT}.
Here and below we note that we do not write out independent variables
unless the inverse of them are involved.
Relations between Wronskians are
\begin{eqnarray*}
f=\delta^{m+1}|T|f^*(-x), &
h=\delta^m|T|g^*(-x),
\end{eqnarray*}
which yield
\begin{eqnarray*}
\frac{r}{q^*(-x)}=\frac{r_0+h/f}{q_0^*(-x)+g^*(-x)/f^*(-x)}=\frac{\delta q_0^*(-x)+\delta g^*(-x)/f^*(-x)}{q_0^*(-x)+g^*(-x)/f^*(-x)}=\delta,
\end{eqnarray*}
i.e. $r=\delta q^*(-x)$, which reduces the unreduced system \eqref{akns-2} to equation \eqref{snls}.

For the reverse-time nonlocal NLS  equation \eqref{tnls},
the reduction is implemented by taking
\begin{eqnarray}\label{T3}
\psi=T\phi(-t)
\end{eqnarray}
together with \eqref{tnls-AT}. Relations between Wronskians are
\begin{eqnarray*}
f=(-\delta)^{m+1}|T|f(-t),&
h=-(-\delta)^m|T|g(-t),
\end{eqnarray*}
which yield
\begin{eqnarray*}
\frac{r}{q(-t)}=\frac{r_0+h/f}{q_0(-t)+g(-t)/f(-t)}=\frac{\delta q_0(-t)+\delta g(-t)/f(-t)}{q_0(-t)+g(-t)/f(-t)}=\delta,
\end{eqnarray*}
i.e. $r=\delta q(-t)$, which reduces \eqref{akns-2} to \eqref{tnls}.

For the reverse-space-time nonlocal NLS  equation \eqref{tsnls},
we start from
\begin{eqnarray}\label{T4}
\psi=T\phi(-x,-t)
\end{eqnarray}
and \eqref{tsnls-AT}.
Relations between Wronskians are
\begin{eqnarray*}
f=\delta^{m+1}|T|f(-x,-t), &
h=\delta^m|T|g(-x,-t),
\end{eqnarray*}
which yield
\begin{eqnarray*}
\frac{r}{q(-x,-t)}=\frac{r_0+h/f}{q_0(-x,-t)+g(-x,-t)/f(-x,-t)}
=\frac{\delta q_0(-x,-t)+\delta g(-x,-t)/f(-x,-t)}{q_0(-x,-t)+g(-x,-t)/f(-x,-t)}=\delta,
\end{eqnarray*}
i.e. $r=\delta q(-x,-t)$, which reduces \eqref{akns-2} to  \eqref{tsnls}.

\end{proof}

\subsection{Matrices $A$ and $T$}\label{sec-4-2}

We look for explicit forms of  $A$ and $T$ in Theorem \ref{th3}.
Equations \eqref{cnls-AT} and \eqref{snls-AT} can be unified to be
\begin{equation}\label{con-AT-1}
 AT+\sigma TA^*=0,\quad TT^{*}=\sigma\delta I,~~\sigma,\delta=\pm 1,
\end{equation}
and equations \eqref{tnls-AT} and \eqref{tsnls-AT} can be unified to be
\begin{equation}\label{con-AT-2}
 AT+\sigma TA=0,\quad T^{2}=\sigma\delta I,~~\sigma,\delta=\pm 1.
\end{equation}
Consider special solutions to these matrix equations,
i.e. $A$ and $T$ are block matrices
\begin{align}\label{nls-diagA-T}
 A=\left(\begin{array}{cc}
                   K_1 & \mathbf{0} \\
                   \mathbf{0} & K_4
                 \end{array}
               \right),\quad
               T=\left(
     \begin{array}{cc}
       T_1 & T_2 \\
       T_3 & T_4
     \end{array}
   \right),
\end{align}
where $T_i$ and $K_i$ are $(m+1)\times (m+1)$  matrices.
Then solutions to equations \eqref{con-AT-1} and \eqref{con-AT-2}  can be listed out as in the Table \ref{table-5-1}
and \ref{table-5-2},
cf.\cite{ChenDLZ-2018}.

\begin{table}[htbp]
\centering
\captionsetup{font={small}}
\caption{$T$ and~$A$ for equation~\eqref{con-AT-1}}\label{table-5-1}
\begin{tabular}{|c|c|c|c|c|}
\hline
                              &            $(\sigma,\delta)$           &  $T$              &  $A$                                  \\
\hline
\multirow{2}{*}{\eqref{cnls-AT}} & $(1,1)$  &
$T_{1}=T_{4}=\mathbf{0}_{m+1},T_{2}=T_{3}=\mathbf{I}_{m+1}$  &
\multirow{2}{*}{$K_{1}=\mathbf{K}_{m+1}, K_{4}=-\mathbf{K}^*_{m+1}$}    \\
    \cline{2-3}
                                 & $(1,-1)$ & $T_{1}=T_{4}=\mathbf{0}_{m+1},T_{2}=-T_{3}=\mathbf{I}_{m+1}$  & \\
    \cline{1-4}
\multirow{2}{*}{\eqref{snls-AT}} & $(-1,1)$ &
$T_{1}=T_{4}=\mathbf{0}_{m+1},T_{2}=-T_{3}=\mathbf{I}_{m+1}$ &
\multirow{2}{*}{$K_{1}=\mathbf{K}_{m+1}, K_{4}=\mathbf{K}^*_{m+1}$} \\
    \cline{2-3}
                                 &  $(-1,-1)$ &  $T_{1}=T_{4}=\mathbf{0}_{m+1},T_{2}=T_{3}=\mathbf{I}_{m+1}$  &  \\
    \cline{1-4}
\hline
\end{tabular}
\end{table}

\begin{table}[htbp]
\centering
\captionsetup{font={small}}
\caption{$T$ and~$A$ for equation~\eqref{con-AT-2}}\label{table-5-2}
\begin{tabular}{|c|c|c|c|c|}
\hline
                              &            $(\sigma,\delta)$           &  $T$              &  $A$                                  \\
\hline
\multirow{2}{*}{\eqref{tnls-AT}} & $(1,1)$  &
$T_{1}=T_{4}=\mathbf{0}_{m+1},T_{2}=T_{3}=\mathbf{I}_{m+1}$  &
\multirow{2}{*}{$K_{1}=\mathbf{K}_{m+1}, K_{4}=-\mathbf{K}_{m+1}$}    \\
    \cline{2-3}
                                 & $(1,-1)$ & $T_{1}=T_{4}=\mathbf{0}_{m+1},T_{2}=-T_{3}=\mathbf{I}_{m+1}$  & \\
    \cline{1-4}
\multirow{2}{*}{\eqref{tsnls-AT}} & $(-1,1)$ &
$T_{1}=-T_{4}=i\mathbf{I}_{m+1},T_{2}=T_{3}=\mathbf{0}_{m+1}$ &
\multirow{2}{*}{$K_{1}=\mathbf{K}_{m+1}, K_{4}=\mathbf{H}_{m+1}$} \\
    \cline{2-3}
                                 &  $(-1,-1)$ & $T_{1}=-T_{4}=\mathbf{I}_{m+1},T_{2}=T_{3}=\mathbf{0}_{m+1}$  &  \\
    \cline{1-4}
\hline
\end{tabular}
\end{table}

In addition, equation \eqref{snls-AT} admits real solution in the form \eqref{nls-diagA-T} for the case $\delta=-1$,
where
\begin{subequations}\label{4.6-real}
\begin{align}
& K_{1}=\mathbf{K}_{m+1},
K_{4}=\mathbf{H}_{m+1}, ~\mathbf{K}_{m+1},\mathbf{H}_{m+1}\in \mathbb{R}_{(m+1)\times(m+1)},
\label{s-nls-f-A}\\
& T_{1}=-T_{4}=\mathbf{I}_{m+1},T_{2}=T_{3}=\mathbf{0}_{m+1},\label{s-nls-f-T1}\\
& \text{or} ~T_{1}=T_{4}=\mathbf{I}_{m+1},T_{2}=T_{3}=\mathbf{0}_{m+1}. \label{s-nls-f-T2}
\end{align}
\end{subequations}
Besides, equation \eqref{cnls-AT} with $\delta=1$ can have pure imaginary solution \eqref{nls-diagA-T}
and \eqref{4.6-real}
where in \eqref{s-nls-f-A} $\mathbf{K}_{m+1},\mathbf{H}_{m+1}\in i\mathbb{R}_{(m+1)\times(m+1)}$.

Due to the fact that $A$ and any matrix that is similar to it generate same solutions to the system \eqref{akns-2}
(see Theorem \ref{th1}), we only need to consider the canonical forms\footnote{
A general case for $\mathbf{K}_{m+1}$ is the block diagonal form
\begin{equation*}
\mathbf{K}_{m+1}=\mathrm{Diag}\left(J_{h_1}(k_1),J_{h_2}(k_2),\cdots,J_{h_s}(k_s),
\mathrm{Diag}(k_{s+1},\cdots, k_{s+n})\right),
\end{equation*}
where each $J_{h_j}(k_j)$ is an $h_j\times h_j$ Jordon block matrix defined as \eqref{Jordan},
$\mathrm{Diag}(k_{s+1},\cdots, k_{s+n})$ is an $n\times n$ diagonal matrix and $n+\sum_{j=1}^{s} h_j=m+1$.
In this case, $\phi$ is just composed accordingly since \eqref{pp-x} and \eqref{pp-t}
are linear system of $\phi$ and $\psi$.
Thus, we will only consider two limiting cases, \eqref{diag} and \eqref{Jordan}.
Note that  matrix $A$ corresponds to the eigenvalues of the AKNS
spectral problem \eqref{akns-x},
which means \eqref{diag} is  for the case of simple distinct eigenvalues
and \eqref{Jordan} is for the case of one eigenvalue of geometric multiplicity
one and algebraic multiplicity $m + 1$.}
of $A$.
That is, $\mathbf{K}_{m+1}$ can either be
\begin{equation}
\mathbf{K}_{m+1}=\mathrm{Diag}(k_1, k_2, \cdots, k_{m+1}),~~ k_i \in \mathbb{C},
\label{diag}
\end{equation}
or $\mathbf{K}_{m+1}=J_{m+1}(k),~ k\in \mathbb{C}$, where
\begin{equation}\label{Jordan}
J_{m+1}(k)=\begin{pmatrix}
k & 0 & 0 & \ldots & 0 & 0\\
1 & k & 0 & \ldots & 0 & 0\\
\ldots & \ldots & \ldots & \ldots & \ldots & \ldots\\
0 & 0 & 0 & \ldots & 1 & k
\end{pmatrix}_{(m+1)\times (m+1)}.
\end{equation}


\subsection{Plane wave background solution $q_0$}\label{sec-4-3}

Considering the expression \eqref{trans} (and also \eqref{abs-q}) we can call $q_0$ and $r_0$ to
be background solutions of
$q$ and $r$, respectively.
The unreduced  system \eqref{akns-2} admits a set of plane wave solutions
\begin{equation}\label{q0r0-gen}
q_0=A_0e^{i[ax+(a^2+2A_0B_0)t]}, ~~ r_0=B_0e^{-i[ax+(a^2+2A_0B_0)t]},
\end{equation}
where $A_0, B_0$ and $a$ are arbitrary complex constants.
It is easy to find that the reduced classical and nonlocal NLS equations \eqref{cnls}-\eqref{tsnls}
admit the following solutions, respectively,
\begin{subequations}\label{q0}
\begin{align}
& q_0=A_0e^{i[ax+(a^2+2\delta |A_0|^2)t]}, ~~~ (A_0\in \mathbb{C}, ~a\in \mathbb{R}), \label{q0-a}\\
& q_0=A_0e^{-ax+i(-a^2+2\delta |A_0|^2) t}, ~~~ (A_0\in \mathbb{C}, ~a\in \mathbb{R}),\label{q0-b} \\
& q_0=A_0e^{2i\delta A_0^2t}, ~~~ (A_0\in \mathbb{C}), \label{q0-c} \\
& q_0=A_0e^{i[ax+(a^2+2\delta A_0^2)t]}, ~~~ (A_0,a\in \mathbb{C}). \label{q0-d}
\end{align}
\end{subequations}
Our purpose is to write out explicit Wronskian vectors $\phi$ that respectively satisfy
the conditions \eqref{4.2}, \eqref{4.5}, \eqref{4.8} and \eqref{4.11} for given background solutions $q_0$.
We are going to consider the simple case where $q_0$ are given in \eqref{q0}.
If making use of some symmetries, we may start from a simpler background solution
\begin{equation}\label{q00}
q_0=e^{2i\delta t}
\end{equation}
instead of \eqref{q0}.

\begin{proposition}\label{Prop-1}
The classical and nonlocal NLS equations \eqref{cnls}-\eqref{tsnls} admit the following symmetries.

\noindent
(1) Classical NLS equation \eqref{cnls}:
\begin{itemize}
\item{ Galilean symmetry: if $q(x,t)$ solves the NLS equation \eqref{cnls} with the background
solution   $q_0$ given in \eqref{q0-a},
then $Q(X,Y)=q(x,t)e^{-iax-ia^2t}$,
where $X=x+2at$ and $Y=t$,   solves the NLS equation \eqref{cnls} with $Q(X,Y)$, i.e.
\begin{equation}\label{NLS-Q}
iQ_Y =Q_{XX}-2\delta Q^2Q^*,
\end{equation}
of which $Q_0(X,Y)=q_0 e^{-iax-ia^2t}= A_0e^{2i\delta |A_0|^2 Y}$ is a solution.
}
\item{ Scaling symmetry: if $q(x,t)$ solves the NLS equation \eqref{cnls} with a background solution
$q_0(x,t)=A_0e^{2i\delta |A_0|^2 t}$,
then $Q(X,Y)=\frac{1}{A_0}q(x,t)$, where $X=|A_0|x$ and $Y=|A_0|^2t$,
also solves the NLS equation \eqref{NLS-Q},
of which $Q_0(X,Y)=\frac{1}{A_0}q_0=e^{2i\delta Y}$ is a  solution.}
\end{itemize}

\noindent
(2) Reverse-space nonlocal NLS equation \eqref{snls}:
\begin{itemize}
\item{ Galilean symmetry: if $q(x,t)$ solves the reverse-space NLS equation \eqref{snls}
with the background solution $q_0(x,t)$ given in \eqref{q0-b},
then $Q(X,Y)=q(x,t)e^{ax+ia^2t}$, where $X=x+2iat$ and $Y=t$,
also solves the reverse-space NLS equation \eqref{snls} with $Q(X,Y)$, i.e.
\begin{equation}\label{NLS-Qx}
iQ_Y =Q_{XX}-2\delta Q^2Q^*(-X),
\end{equation}
of which $Q_0(X,Y)=q_0e^{ax+ia^2t}=A_0e^{2i\delta |A_0|^2 Y}$ is a  solution.
}
\item{Scaling symmetry: if $q(x,t)$ solves the reverse-space NLS equation \eqref{snls}
with a background solution $q_0(x,t)=A_0e^{2i\delta |A_0|^2 t}$,
then $Q(X,Y)=\frac{1}{A_0}q(x,t)$, where $X=|A_0|x,~ Y=|A_0|^2t$, also solves
the reverse-space NLS equation \eqref{NLS-Qx} of which
$Q_0(X,Y) = \frac{1}{A_0}q_0=e^{2i\delta Y}$ is a solution.}
\end{itemize}

\noindent
(3) Reverse-time nonlocal NLS equation \eqref{tnls}:
\begin{itemize}
\item{ Scaling symmetry: if $q(x,t)$ solves the reverse-time NLS equation \eqref{tnls}
with the background solution $q_0(x,t)$ given in \eqref{q0-c},
then $Q(X,Y)=\frac{1}{A_0}q(x,t)$, where $X=A_0x$ and $Y=A_0^2t$,
also solves the reverse-time NLS equation \eqref{tnls} with $Q(X,Y)$, i.e.
\begin{equation}\label{NLS-Qt}
iQ_Y =Q_{XX}-2\delta Q^2Q(-Y),
\end{equation}
of which  $Q_0(X,Y)=\frac{1}{A_0}q_0=e^{2i\delta Y}$ is a solution.
}
\end{itemize}

\noindent
(4) Reverse-space-time nonlocal NLS equation \eqref{tsnls}:
\begin{itemize}
\item{ Galilean symmetry: if $q(x,t)$ solves the reverse-space-time NLS equation \eqref{tsnls} with
the background solution $q_0(x,t)$ given in \eqref{q0-d},
then $Q(X,Y)=q(x,t)e^{-iax-ia^2t}$,
where $X=x+2at$ and $Y=t$, also solves the reverse-space-time NLS equation \eqref{tsnls}
with $Q(X,Y)$, i.e.
\begin{equation}\label{NLS-Qxt}
iQ_Y =Q_{XX}-2\delta Q^2Q(-X,-Y),
\end{equation}
of which $Q_0(X,Y)=q_0e^{-iax-ia^2t}=A_0e^{2i\delta A_0^2Y}$ is a solution.
}
\item{ Scaling symmetry: if $q(x,t)$ solves the reverse-space-time NLS equation \eqref{tsnls}
with a background solution $q_0(x,t)=A_0e^{2i\delta A_0^2t}$,
then $Q(X,Y)=\frac{1}{A_0}q(x,t)$,
where $X=A_0x$ and $Y=A_0^2t$, also solves the reverse-space-time NLS equation \eqref{NLS-Qxt},
of which $Q_0(X,Y)=\frac{1}{A_0}q_0=e^{2i\delta Y}$ is a solution.}
\end{itemize}

\end{proposition}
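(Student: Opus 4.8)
The plan is to verify each item by direct substitution, since every claim asserts that an explicit change of variables carries a solution of one equation to a solution of another. For each symmetry I would first invert the stated substitution to write the old variables in terms of the new ones, then compute $Q_X$, $Q_{XX}$, $Q_Y$ by the chain rule, and finally check that the target equation collapses—after cancelling a common exponential factor—to the equation that $q$ is assumed to satisfy. The transformed backgrounds $Q_0$ are then obtained simply by feeding the corresponding $q_0$ from \eqref{q0} through the same substitution.

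I would treat the Galilean symmetry of the classical NLS \eqref{cnls} as the representative case. Writing $x=X-2aY$, $t=Y$, the phase in $Q=q\,e^{-iax-ia^2t}$ becomes $E:=e^{-iaX+ia^2Y}$, so $Q(X,Y)=q(X-2aY,Y)E$. A short computation gives $Q_{XX}=(q_{xx}-2ia\,q_x-a^2q)E$ and $iQ_Y=(-2ia\,q_x+iq_t-a^2q)E$, while, using $|E|=1$ and $a\in\mathbb{R}$, the cubic term reduces to $-2\delta Q^2Q^*=-2\delta q^2q^*E$. Substituting into \eqref{NLS-Q} and cancelling $E$ together with the matching terms $-2ia\,q_x$ and $-a^2q$ leaves exactly \eqref{cnls}. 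The scaling symmetries are even more direct: under $X=|A_0|x$, $Y=|A_0|^2t$, $Q=A_0^{-1}q$ one multiplies the target through by $A_0|A_0|^2=A_0^2A_0^*$, and the coefficient of the cubic term collapses to $2\delta$ precisely because $|A_0|^2=A_0A_0^*$. For the reverse-time and reverse-space-time equations, where no complex conjugation appears, the same balancing forces the scaling $X=A_0x$, $Y=A_0^2t$, and hence the background phase to carry $A_0^2$ rather than $|A_0|^2$—which is exactly the distinction between \eqref{q0-c}-\eqref{q0-d} and \eqref{q0-a}-\eqref{q0-b}.

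The reverse-space-time Galilean symmetry and every scaling symmetry go through by this identical mechanism, because the reflected factor transforms co-variantly and carries no conjugation. For instance, for \eqref{tsnls} under $X=x+2aY$, $Y=t$ (now with $a\in\mathbb{C}$ allowed) one checks $Q(-X,-Y)=q(-x,-t)e^{iaX-ia^2Y}$, so that $Q^2Q(-X,-Y)=q^2q(-x,-t)E$ and \eqref{NLS-Qxt} again reduces to \eqref{tsnls}; the complexity of $a$ and $A_0$ causes no trouble since the star never intervenes.

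I expect the genuine obstacle to be the Galilean symmetry of the reverse-space equation \eqref{snls}, the only case that combines a \emph{complex} boost $X=x+2iat$ with complex conjugation. Here one must make sense of $Q^*$ at the complex-shifted argument: adopting the Schwarz-reflection convention $\overline{q(X-2iaY,Y)}=q^*(X+2iaY,Y)$ (for real $X,Y,a$), one finds $Q^*(-X,Y)=q^*(-x,t)\,e^{-aX+ia^2Y}$ with $-x=-X+2iaY$, after which $Q^2Q^*(-X)=q^2q^*(-x)\,E$ and \eqref{NLS-Qx} reduces to \eqref{snls}. The delicate point is exactly this conjugation-plus-reflection bookkeeping: one must verify that conjugation flips the sign of the imaginary shift while the spatial reflection $X\mapsto-X$ then restores precisely the argument $-x$ demanded by the nonlocal term, so that the two operations combine to reproduce $q^*(-x,t)$ rather than a spuriously shifted quantity.
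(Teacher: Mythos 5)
Your verification is correct. The paper itself states Proposition \ref{Prop-1} without proof, treating it as a routine chain-rule check, and your direct-substitution argument is exactly the verification that is implicitly intended; all of your computations (the cancellation of the cross terms $-2iaq_x$ and $-a^2q$, the scaling factor $A_0|A_0|^2=A_0^2A_0^*$ versus $A_0^3$, and the covariance of the reflected argument) check out. You have also correctly isolated and resolved the one genuinely delicate point, the complex boost $X=x+2iat$ in the reverse-space case: the reflection $X\mapsto -X$ alone sends $x=X-2iaY$ to $-X-2iaY\neq -x$, and it is only after composing with the Schwarz-reflected conjugation $q^*(z,t)=\overline{q(\bar z,t)}$ that the argument $-x=-X+2iaY$ is recovered, so the nonlocal term transforms correctly.
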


Based on these symmetries of equations \eqref{cnls}-\eqref{tsnls},
we only need to consider the unified background solution $q_0$ given in \eqref{q00}.

\subsection{Wronskian column vectors  $\phi$ and $\psi$}\label{sec-4-4}

\subsubsection{Vectors  $\phi$ and $\psi$ for the unreduced system \eqref{akns-2}}\label{sec-4-4-1}

We start with a pair of background solutions
\begin{equation}\label{q0r0}
q_0=e^{2i\delta t}, ~~r_0=\delta e^{-2i\delta t}, ~~ \delta=\pm 1
\end{equation}
of the unreduced system \eqref{akns-2}.
Note that the background solution $(q_0, r_0)$ agrees with the reductions used in
the equations \eqref{cnls}-\eqref{tsnls}.
Substituting $(q,r)=(q_0, r_0)$ into the matrix equations \eqref{akns-x} and \eqref{akns-t},
we find the following solutions of wave functions,
\begin{subequations}\label{sol-Lax pair}
\begin{align}
\Phi(\lambda,c,d)&=\delta\left[c\left(\lambda-\sqrt{\lambda^2+\delta}\right)
e^{-\sqrt{\lambda^2+\delta}(x-2i\lambda t)}
+d\left(\lambda+\sqrt{\lambda^2+\delta}\right)
e^{\sqrt{\lambda^2+\delta} (x-2i\lambda t)}\right] e^{i\delta t},\label{sol-Lax pair-Phi}\\
\Psi(\lambda,c,d)&=\left (ce^{-\sqrt{\lambda^2+\delta } (x-2i\lambda t)}\!
+\!de^{\sqrt{\lambda^2+\delta } (x-2i\lambda t)}\right)e^{-i\delta t},\label{sol-Lax pair-Psi}
\end{align}
\end{subequations}
in which $c$ and $d$ are  constants (or functions of $\lambda$).
Define
\begin{subequations}\label{PHi-PSi}
\begin{align}
& \phi= \bigl(\Phi(k_1,c_1,d_1), \Phi(k_2,c_2,d_2), \cdots, \Phi(k_{2m+2},c_{2m+2},d_{2m+2}) \bigr)^T,
\label{PHi} \\
& \psi= \bigl(\Psi(k_1,c_1,d_1), \Psi(k_2,c_2,d_2), \cdots, \Psi(k_{2m+2},c_{2m+2},d_{2m+2}) \bigr)^T.
\label{PSi}
\end{align}
\end{subequations}
Then, the quasi double Wronskians \eqref{fgh} composed by the above $\phi$ and $\psi$
provide solutions to the unreduced system \eqref{akns-2} via the transformation \eqref{trans}
where the background solutions take \eqref{q0r0}.
With regards to \eqref{PHi-PSi}, the matrix $A=\mathrm{Diag}(k_1, k_2, \cdots, k_{2m+2})$
in \eqref{pp-x} and \eqref{pp-t}.
One can also take
\begin{subequations}
\begin{align}
& \phi= \biggl(\Phi(k_1,c_1,d_1), \frac{\partial_{k_1}}{1!}\Phi(k_1,c_1,d_1), \cdots,
\frac{\partial_{k_1}^{2m+1}}{(2m+1)!}\Phi(k_1,c_1,d_1) \biggr)^T,
 \\
& \psi= \biggl(\Psi(k_1,c_1,d_1), \frac{\partial_{k_1}}{1!}\Psi(k_1,c_1,d_1), \cdots,
\frac{\partial_{k_1}^{2m+1}}{(2m+1)!}\Psi(k_1,c_1,d_1) \biggr)^T,
\end{align}
\end{subequations}
to get multiple pole solutions corresponding to $A=J_{2m+2}(k_1)$, defined as in \eqref{Jordan}.

\subsubsection{Vector  $\phi$  for the reduced equations}\label{sec-4-4-2}

To present vector $\phi$ for the reduced  equations \eqref{cnls}-\eqref{tsnls}, let us define
\begin{subequations}\label{phi-psi}
\begin{align}
& \phi=({\phi}^+,{\phi}^-)^T,~~
{\phi}^{\pm} = \bigl(\phi_{(1)}^{\pm},\phi_{(2)}^{\pm},\cdots,\phi_{(m+1)}^{\pm} \bigr)^T, \label{phi} \\
& \psi=({\psi}^+,{\psi}^-)^T,~~
{\psi}^{\pm} = \bigl(\psi_{(1)}^{\pm},\psi_{(2)}^{\pm},\cdots,\psi_{(m+1)}^{\pm} \bigr)^T, \label{psi}
\end{align}
\end{subequations}
where $\phi_{(j)}^{\pm}$ and $\psi_{(j)}^{\pm}$ are scalar functions.
Note that a general form for the constraints \eqref{T1}, \eqref{T2}, \eqref{T3} and \eqref{T4} is
\begin{equation}
\psi=TC^{\epsilon} \phi(\alpha x, \beta t),~~ \alpha, \beta=\pm 1,~ \epsilon=0,1,
\end{equation}
where $C$ stands for an operator for complex conjugation:
$C^\epsilon \phi=\phi^*$ when $\epsilon=1$ and
$C^\epsilon \phi=\phi$ when $\epsilon=0$.

There are only two types of $T$ in Sec.\ref{sec-4-2},
block skew-diagonal or block diagonal. We can present vector $\phi$ according to the type of $T$.

\noindent
{\textbf{Case 1: $T$ being block skew-diagonal:}}
In this case,
\begin{equation}\label{T-sd}
 T=\left(
     \begin{array}{cc}
       0 & I_{m+1} \\
       \gamma I_{m+1} & 0
     \end{array}
   \right), ~~~ \gamma=\pm 1,
\end{equation}
which is for equation \eqref{cnls}, \eqref{snls} and \eqref{tnls}, see Table \ref{table-5-1} and Table \ref{table-5-2}.
Vector $\phi$ takes the form
\begin{equation}\label{4.39}
\phi=(\phi^+, C^{\epsilon} \psi^{+}(\alpha x, \beta t))^T,
\end{equation}
where $(\epsilon,\alpha,\beta)$ takes $(1,1,1)$ for \eqref{cnls},
$(1,-1,1)$ for \eqref{snls} and $(0,1,-1)$ for \eqref{tnls}.
When $\mathbf{K}_{m+1}$ is diagonal as given in \eqref{diag},
we have
\begin{eqnarray}\label{diag-entry}
\phi_{(j)}^+=\Phi(k_j,c_j,d_j), & ~\psi_{(j)}^+=\Psi(k_j,c_j,d_j), ~ j=1,2,\cdots,m+1,
\end{eqnarray}
where $\Phi$ and $\Psi$ are defined in \eqref{sol-Lax pair}.
When $\mathbf{K}_{m+1}=J_{m+1}(k_1)$ is the Jordan matrix as given in \eqref{Jordan},
we take
\begin{eqnarray}\label{Jordan-entry}
\phi_{(j)}^+=\frac{\partial_{k_1}^{j-1}}{(j-1)!} \Phi(k_1,c,d), ~~
\psi_{(j)}^+=\frac{\partial_{k_1}^{j-1}}{(j-1)!} \Psi(k_1,c,d),
 ~j=1,2,\cdots,m+1.
\end{eqnarray}

\noindent{\textbf{Case 2: $T$ being block diagonal:}}
In this case,
\begin{equation}\label{T-dd}
 T=\left(
     \begin{array}{cc}
       I_{m+1} & 0 \\
       0& \gamma I_{m+1}
     \end{array}
   \right), ~~~ \gamma=\pm 1~\mathrm{or}~ \gamma=-i,
\end{equation}
which is associated with  equation \eqref{tsnls}, equation \eqref{snls} with $\delta=-1$
and equation \eqref{cnls} with $\delta=1$,
see Table \ref{table-5-2} and \eqref{4.6-real}.

To describe the relations between $c_j$ and $d_j$ in
a more succinct  and symmetric way,
we  rewrite \eqref{sol-Lax pair-Phi} and \eqref{sol-Lax pair-Psi}
as
\begin{subequations}\label{sol-Lax pair-st}
\begin{align}
&\Phi=\frac{(\hat{c}\lambda+\hat{c}\sqrt{\lambda^2+\delta}+ \hat{d})e^{\sqrt{\lambda^2+\delta} (x-2i\lambda t)}+(-\hat{c}\lambda+\hat{c}\sqrt{\lambda^2+\delta}- \hat{d})e^{-\sqrt{\lambda^2+\delta} (x-2i\lambda t)}}{2\sqrt{\lambda^2+\delta}}e^{\delta it},\label{sol-Lax pair-Phi-st}\\
&\Psi=\frac{(-\hat{d}\lambda+\hat{d}\sqrt{\lambda^2+\delta}+\delta \hat{c})e^{\sqrt{\lambda^2+\delta} (x-2i\lambda t)}+(\hat{d}\lambda+\hat{d}\sqrt{\lambda^2+\delta}-\delta \hat{c})e^{-\sqrt{\lambda^2+\delta} (x-2i\lambda t)}}{2\sqrt{\lambda^2+\delta}}e^{-\delta it},\label{sol-Lax pair-Psi-st}
\end{align}
\end{subequations}
where we have introduced $\hat{c}, \hat{d} \in \mathbb{C}$ and taken in \eqref{sol-Lax pair} that
\begin{equation*}
c=\frac{\hat{d}\lambda+\hat{d}\sqrt{\lambda^2+\delta}-\delta\hat{c}}{2\sqrt{\lambda^2+\delta}},~~
d=\frac{-\hat{d}\lambda+\hat{d}\sqrt{\lambda^2+\delta}+\delta \hat{c}}{2\sqrt{\lambda^2+\delta}}.
\end{equation*}

Then, consider diagonal case where
\begin{equation}
\mathbf{K}_{m+1}=\mathrm{Diag}(k_1, k_2, \cdots, k_{m+1}),~~
\mathbf{H}_{m+1}=\mathrm{Diag}(h_1, h_2, \cdots, h_{m+1}).
\label{diag-KH}
\end{equation}
For the reverse-space-time NLS  equation \eqref{tsnls},
we have
\begin{align}\label{entry-1}
\delta=1: &~~ \phi_{(j)}^+=\Phi(k_j,\hat{c}_j^+,i\hat{c}_j^+),~~
\phi_{(j)}^-=\Phi(h_j,\hat{c}_j^-,-i\hat{c}_j^-), \\
\delta=-1: &~~ \phi_{(j)}^+=\Phi(k_j,\hat{c}_j^+,\hat{c}_j^+),~~
\phi_{(j)}^-=\Phi(h_j,\hat{c}_j^-,-\hat{c}_j^-),  \label{entry-2}
\end{align}
where $\Phi$ is defined in \eqref{sol-Lax pair-Phi-st},
$k_j, h_j \in \mathbb{C}$, $\hat{c}_j^+$ and $\hat{c}_j^-$ can be
arbitrary complex functions of $k_j$ and $h_j$, respectively.
For the classical NLS  equation \eqref{cnls} with $\delta=1$ and the reverse-space NLS equation \eqref{snls}
with $\delta=-1$, we have
\begin{align}
&\phi_{(j)}^+=\Phi(k_j,\hat{c}_j^+,\hat{c}_j^{+*}),~~
\phi_{(j)}^-=\Phi(h_j,\hat{c}_j^-,-\hat{c}_j^{-*}), ~~
{\rm when}~T~{\rm is}~\eqref{s-nls-f-T1},\label{entry-snls+-T3}\\
&\phi_{(j)}^+=\Phi(k_j,\hat{c}_j^+,\hat{c}_j^{+*}),~~
\phi_{(j)}^-=\Phi(h_j,\hat{c}_j^-,\hat{c}_j^{-*}), ~~
{\rm when}~T~{\rm is}~\eqref{s-nls-f-T2},\label{entry-snls+-T4}
\end{align}
where $k_j, h_j\in i \mathbb{R}$ for equation \eqref{cnls} with $\delta=1$
and $k_j, h_j \in \mathbb{R}$ for equation \eqref{snls} with $\delta=-1$.

When
\[\mathbf{K}_{m+1}=J_{m+1}(k_1),~~ \mathbf{H}_{m+1}=J_{m+1}(h_1),
\]
we have
\begin{equation}
\phi_{(s)}^+=\frac{\partial_{k_1}^{s-1}}{(s-1)!} \phi_{(1)}^+, ~~
\phi_{(s)}^-=\frac{\partial_{h_1}^{s-1}}{(s-1)!} \phi_{(1)}^-,
 ~s=1,2,\cdots, m+1,
\end{equation}
where $\phi_{(1)}^{\pm}$ are defined as in \eqref{entry-1}, \eqref{entry-2},
\eqref{entry-snls+-T3} or \eqref{entry-snls+-T4},
varying with the equation considered.

\section{Examples of dynamics of solutions} \label{sec-5}

Nonzero background can bring new features for the classical and nonlocal NLS equations.
In this section we analyze some solutions and illustrate their dynamics.
The classical NLS equation \eqref{NLS-f} and reverse-space nonlocal NLS equation \eqref{snls} with $\delta=-1$
will serve as main models.

\subsection{The classical focusing NLS equation}\label{sec-5-1}

It follows from the transformation \eqref{trans} and the bilinear form  \eqref{nls-b} that
the envelope $|q|$ of the solution to the focusing NLS  equation \eqref{NLS-f} with
a background solution $q_0$ can be expressed as (also see \cite{BLM-2020,Tajiri-PRE-1998})
\begin{equation}\label{abs-q}
|q|^2
=|q_0|^2+\partial_x^2 \ln f,
\end{equation}
where
$f=|\h\phi_m;T\h\phi_m^*|$ is the quasi double Wronskian.
For the focusing NLS  equation \eqref{NLS-f},
$A=\mathrm{Diag}(\mathbf{K}_{m+1}, -\mathbf{K}^*_{m+1})$, $T$ is given by \eqref{T-sd} with $\gamma=-1$,
$\phi$ is given by \eqref{4.39} with $(\epsilon,\alpha,\beta)=(1,1,1)$.
In principle, solutions to equation \eqref{NLS-f} can be determined by the
eigenvalue structure of $\mathbf{K}_{m+1}$.
One can investigate these solutions according to the canonical form of $\mathbf{K}_{m+1}$.

\subsubsection{Breathers}\label{sec-5-1-1}

\noindent{\textbf{Case 1: $\mathbf{K}_{m+1}$ being a complex diagonal matrix}}

When $\mathbf{K}_{m+1}$ is a diagonal matrix  \eqref{diag},
following \eqref{4.39} we have $\phi=(\phi^+, \phi^-)^T$ where the entries in $\phi^{\pm}$ are
\[\phi_{(j)}^+=\Phi(k_j,c_j,d_j), ~~ \phi_{(j)}^-= \psi_{(j)}^{+*},
~~\psi_{(j)}^+=\Psi(k_j,c_j,d_j),
\]
$\Phi$ and $\Psi$ are given as \eqref{sol-Lax pair}.
When the background solution takes $q_0=e^{-2it}$, we have
\begin{subequations}\label{phij-psij}
\begin{align}
&\phi_{(j)}^+=\left(\left(-k_j+\sqrt{k_j^2-1}\right)e^{-\sqrt{k_j^2-1}(x-2ik_j t)-\xi_j^{(0)}}-\left(k_j+\sqrt{k_j^2-1}\right)
e^{\sqrt{k_j^2-1}(x-2ik_j t)+\xi_j^{(0)}}\right)e^{-it},\label{phij}\\
&\psi_{(j)}^+=\left(e^{-\sqrt{k_j^2-1}(x-2ik_j t)-\xi_j^{(0)}}
+e^{\sqrt{k_j^2-1}(x-2ik_j t)+\xi_j^{(0)}}\right)e^{it}\label{psij}
\end{align}
\end{subequations}
where we have taken $c_j=e^{-\xi_j^{(0)}}, d_j=e^{\xi_j^{(0)}}$
with $\xi_j^{(0)}$ being an arbitrary functions of $k_j$.

When $m=0$ we have from \eqref{4.1b} that
\begin{equation}
f=|\phi, T\phi^*|,
\end{equation}
where
\begin{equation}
\phi=\left(\begin{array}{c} \phi_{(1)}^+ \\ \psi_{(1)}^{+*} \end{array}\right),~~
T=\left(\begin{array}{cc} 0 & 1 \\ -1 & 0 \end{array}\right).
\end{equation}
Note that in this case we have
\[-f=\bigl|\phi_{(1)}^+\bigr|^2+\bigl|\psi_{(1)}^{+}\bigr|^2,\]
which is positive definite when  $\phi_{(1)}^+$ and $\psi_{(1)}^+$ are defined as in \eqref{phij-psij}.

By calculation we find
\begin{equation}\label{f}
f=-(A_1\cosh 2X_1(x,t)+A_2\sinh 2X_1(x,t)+A_3\cos 2X_2(x,t)-A_4\sin 2X_2(x,t)),
\end{equation}
where
\begin{align*}
&A_1=2(1+a_1^2+b_1^2+u_{11}^2+u_{12}^2),~A_2=4(a_1u_{11}+b_1u_{12}),\\
&A_3=2(1+a_1^2+b_1^2-u_{11}^2-u_{12}^2),~A_4=4(a_1u_{12}-b_1u_{11}),\\
&X_1(x,t)=u_{11} x+2B_1t+\xi_{1R}^{(0)},~B_1=a_1u_{12}+b_1u_{11},\\
&X_2(x,t)=u_{12} x+2B_2t+\xi_{1I}^{(0)},~B_2=b_1u_{12}-a_1u_{11},
\end{align*}
\[k_1 = a_1 + ib_1, ~\sqrt{k_1^2-1}=u_{11}+iu_{12}, ~\xi_1^{(0)}=\xi_{1R}^{(0)}+i\xi_{1I}^{(0)},\]
and $a_1,b_1, u_{11},u_{12},\xi_{1R}^{(0)},\xi_{1I}^{(0)}\in \mathbb{R}$.
Since $\sqrt{k_1^2-1}$ is a double-valued function of $k$, here we consider the branch
\begin{align*}
&u_{11}=\sqrt[4]{(a_1^2-b_1^2-1)^2+(2a_1b_1)^2}
\cos \left( \frac{1}{2}(\arg(a_1+1+ib_1)+\arg(a_1-1+ib_1)\right),\\
&u_{12}=\sqrt[4]{(a_1^2-b_1^2-1)^2+(2a_1b_1)^2}
\sin \left( \frac{1}{2}(\arg(a_1+1+ib_1)+\arg(a_1-1+ib_1)\right)
\end{align*}
without loss of generality.
Further we introduce
\begin{equation*}
\tan \theta=\frac{A_4}{A_3},
\end{equation*}
such that \eqref{f} is rewritten as
\begin{equation}\label{f-1}
f=-\left(A_1\cosh 2X_1(x,t)+A_2\sinh 2X_1(x,t)+\sqrt{A_3^2+A_4^2}\cos (2X_2(x,t)+\theta)\right).
\end{equation}
Noticing that $A_1>|A_2|>0$ for all $k_1\neq 0$,
from the above expression and \eqref{abs-q},
$|q|^2$ behaves like a wave traveling along the line $X_1=0$ and oscillating periodically
with a period determined by $2X_2+\theta=2j\pi$, $j\in \mathbb{Z}$.
Note that the case $a_1=0$ or $a_1=\pm1,b_1=0$ yields $|q|^2=1$, which is trivial and we do not consider.

To see more details, we rewrite  \eqref{f-1} in terms of the following coordinates,
\begin{equation}\label{x-z-coor}
\left(x,~z=t+\frac{u_{11}}{2B_1}x+\frac{\xi_{1R}^{(0)}}{2B_1}\right),
\end{equation}
which gives rise to
\begin{equation}\label{f-2}
\begin{aligned}
f=& -\Biggl\{A_1\cosh(4B_1z+2\xi_{1R}^{(0)})+A_2\sinh(4B_1z+2\xi_{1R}^{(0)})  \\
&\left.+\sqrt{A_3^2+A_4^2}\cos \left(4B_2\left(z+\frac{a_1(u_{11}^2+u_{12}^2)}{2B_1B_2}x
+\frac{\xi_{1I}^{(0)}}{2B_2}
-\frac{\xi_{1R}^{(0)}}{2B_1}\right)
+\theta \right)\right\}.
\end{aligned}
\end{equation}
In terms of \eqref{x-z-coor} we can see that \eqref{abs-q} with \eqref{f-2} provides a breather
traveling along the straight line $z=\mathrm{constant}$  and oscillating with a period  with respect to $x$,
\begin{equation}
P=\left|\frac{2\pi B_1}{a_1(u_{11}^2+u_{12}^2)}\right|.
\end{equation}
An illustration is given in Fig.\ref{fig-1}(a), which describes a moving breather.
Such a breather is also known as the Tajiri-Watanabe breather (see Fig.4 in \cite{Tajiri-PRE-1998}).
In 1998 Tajiri and Watanabe derived and studied breathers of the focusing NLS equation
using Hirota's bilinear method \cite{Tajiri-PRE-1998}.

Back to the expression \eqref{f-1}.
Stationary breathers appear when $b_1=0$.
More precisely, when $|a_1| > 1$ and $ b_1=0$, which leads to $u_{11}=\sqrt{a_1^2 -1}$ and $ u_{12}=0$,
we have $B_1=0$ and then $X_1(x,t)=u_{11} x+ \xi_{1R}^{(0)}$. In this case we can have a
breather stationary with respect to $x$, where
\begin{equation}\label{Ma-breather}
\begin{aligned}
f=&-\left(2a_1^2\cosh2\left(\sqrt{a_1^2 -1}x+\xi_{1R}^{(0)}\right)+4a_1\sqrt{a_1^2 -1}\sinh2\left(\sqrt{a_1^2 -1}x
+\xi_{1R}^{(0)}\right)\right.\\
&\left.+4\cos2\left(-2a_1\sqrt{a_1^2 -1}t+\xi_{1I}^{(0)}\right)\right).
\end{aligned}
\end{equation}
It follows that a stationary breather  oscillating in time with period $P_t=\frac{\pi}{|2a_1\sqrt{a_1^2 -1}|}$,
which is known as the Kuznetsov-Ma breather \cite{Kuz-SP-1977,Ma-SAPM-1979}.
It is described in Fig.\ref{fig-1}(b).
In another  case where $|a_1| < 1$ and $b_1=0$, which leads to $u_{11}=0$ and $u_{12}=\sqrt{1-a_1^2}$,
from  \eqref{f-1}  we have
\begin{equation}\label{Akhmediev-breather}
f=-\left(2\cosh2\left(2a_1\sqrt{1-a_1^2}t+\xi_{1R}^{(0)}\right)
+4a_1^2\cos\left(2(\sqrt{1-a_1^2}x+\xi_{1I}^{(0)})+\theta\right)\right)
\end{equation}
with
$\tan \theta=\frac{\sqrt{1-a_1^2}}{2a_1}$.
This will gives rise to a breather traveling along the line $t=-\frac{\xi_{1R}^{(0)}}{2a_1\sqrt{1-a_1^2}}$
and being periodic with respect to $x$ with the period $P_x=\frac{\pi}{\sqrt{1-a_1^2}}$.
Such a breather is known as the Akhmediev breather \cite{Akhmediev breather},
which was first studied by Akhmediev in \cite{Akhmediev breather} and then bear his name.
Stability of the Akhmediev and Kuznetsov-Ma breathers was studied recently \cite{Grinevich-Nonlinearity-2021,Haragus-JNS-2022}.
The Akhmediev breather is perpendicular to the Kuznetsov-Ma breather,
as  depicted in Fig.\ref{fig-1} (b) and (c).


\captionsetup[figure]{labelfont={bf},name={Fig.},labelsep=period}
\begin{figure}[htbp]
\centering
\subfigure[ ]{
\begin{minipage}[t]{0.32\linewidth}
\centering
\includegraphics[width=2.0in]{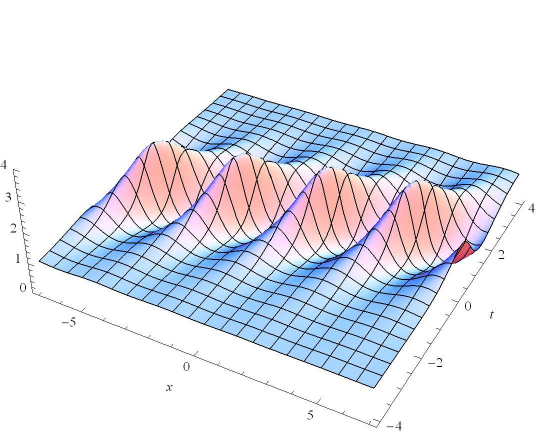}
\end{minipage}%
}%
\subfigure[ ]{
\begin{minipage}[t]{0.32\linewidth}
\centering
\includegraphics[width=2.0in]{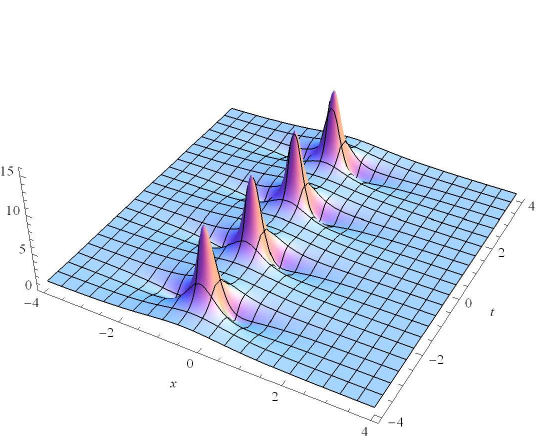}
\end{minipage}%
}%
\subfigure[ ]{
\begin{minipage}[t]{0.32\linewidth}
\centering
\includegraphics[width=2.0in]{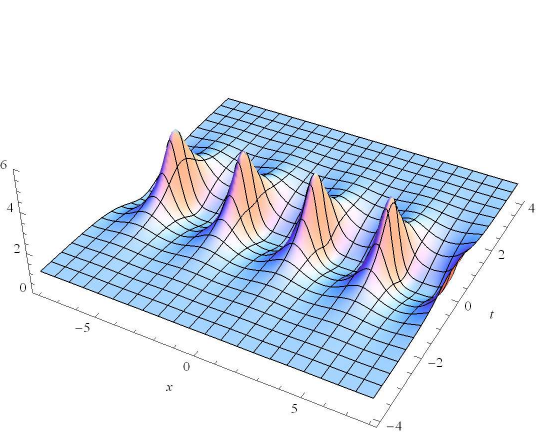}
\end{minipage}%
}
\caption{ Shape and motion of one breather solution of the focusing NLS equation \eqref{NLS-f}
with a background solution $q_0=e^{-2it}$.
The envelope is given by \eqref{abs-q} where $f$ is \eqref{f}.
(a) 3D-plot for a moving breather associated with \eqref{f-2}
for $a_1=0.3,b_1=-0.3,\xi_{1R}^{(0)}=\xi_{1I}^{(0)}=0$.
~(b) 3D-plot for the Kuznetsov-Ma breather associated with \eqref{Ma-breather}
for $a_1=1.2, \xi_{1R}^{(0)}=\xi_{1I}^{(0)}=0$.~
(c) 3D-plot for the Akhmediev breather associated with \eqref{Akhmediev-breather}
for $a_1=0.5, \xi_{1R}^{(0)}=\xi_{1I}^{(0)}=0$.
}
\label{fig-1}
\end{figure}


The envelope of two-breather solution can be obtained via \eqref{abs-q}
by taking $m=1$ in quasi double Wronskians \eqref{4.1b}, i.e.
\begin{subequations}\label{f-2bre}
\begin{equation}\label{2-sol}
 f=|\phi, \phi_1; \psi, \psi_1|
\end{equation}
with
\begin{equation}\label{2-sol-entry}
\phi=\left(\phi_{(1)}^+, \phi_{(2)}^+; \psi_{(1)}^{+*},  \psi_{(2)}^{+^*}\right)^T,~~
\psi=\left(\psi_{(1)}^+,\psi_{(2)}^+; -\phi_{(1)}^{+*}, -\phi_{(2)}^{+*}\right)^T,
\end{equation}
\end{subequations}
in which $\phi_{(j)}^+$ and $\psi_{(j)}^+$ are defined as in \eqref{phij-psij}.
There are various types of two-breather interactions.
As examples Fig.\ref{fig-2} illustrates interactions between two Tajiri-Watanabe breathers,
interaction of the Akhmediev breather and Kuznetsov-Ma breather and interaction of two
Akhmediev breathers, in Fig.\ref{fig-2} (a), (b) and (c), respectively.


\captionsetup[figure]{labelfont={bf},name={Fig.},labelsep=period}
\begin{figure}[ht]
\centering
\subfigure[ ]{
\begin{minipage}[t]{0.32\linewidth}
\centering
\includegraphics[width=2.0in]{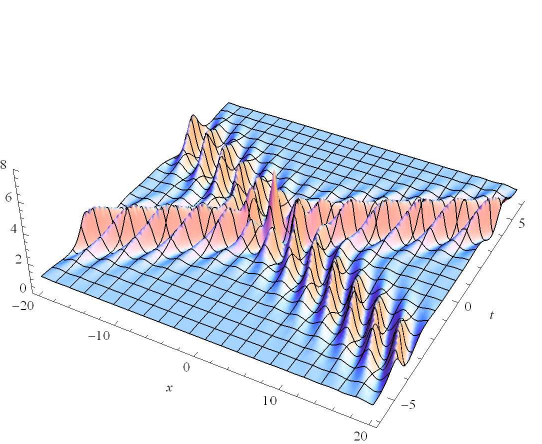}
\end{minipage}%
}%
\subfigure[ ]{
\begin{minipage}[t]{0.32\linewidth}
\centering
\includegraphics[width=2.0in]{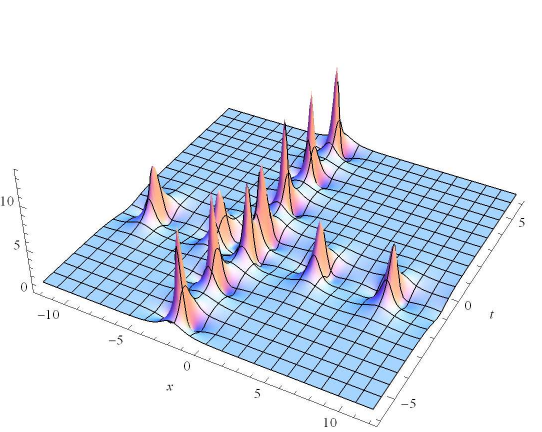}
\end{minipage}%
}%
\subfigure[ ]{
\begin{minipage}[t]{0.32\linewidth}
\centering
\includegraphics[width=2.0in]{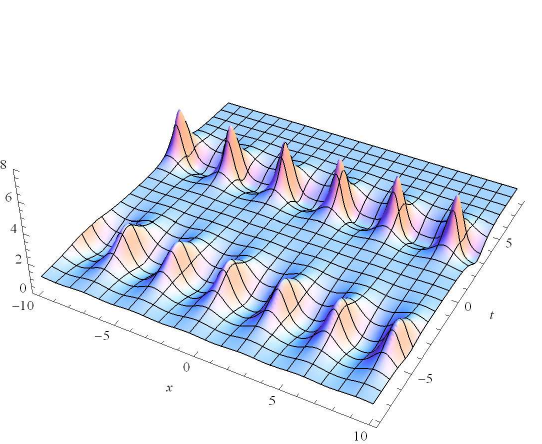}
\end{minipage}%
}
\caption{ Shape and motion of  two-breather interactions of the focusing NLS equation \eqref{NLS-f}
with a background solution $q_0=e^{-2it}$.
The envelope is given by \eqref{abs-q} where $f$ is \eqref{f-2bre}.
(a) 3D-plot for $a_1=0.3,b_1=0.5, a_2=0.3,b_2=-0.5,\xi_{1R}^{(0)}=\xi_{1I}^{(0)}=\xi_{2R}^{(0)}
=\xi_{2I}^{(0)}=0$.~
(b) 3D-plot for $a_1=1.2,b_1=0, a_2=0.8, b_2=0,\xi_{1R}^{(0)}=1,\xi_{1I}^{(0)}=0,\xi_{2R}^{(0)}=1,
\xi_{2I}^{(0)}=0$.~
(c) 3D-plot for $a_1=0.3,b_1=0,a_2=0.5,b_2=0,\xi_{1R}^{(0)}=2,\xi_{1I}^{(0)}=0,\xi_{2R}^{(0)}=-2,
\xi_{2I}^{(0)}=0$.~
}
\label{fig-2}
\end{figure}


\vskip 4pt

\noindent{\textbf{Case 2: $\mathbf{K}_{m+1}$ being a Jordan matrix}}

Let $\phi_{(1)}^+$ and $\psi_{(1)}^+$ be defined as in \eqref{phij-psij},
and we define
\begin{align*}
& \phi_{(j)}^+=\frac{\partial_{k_1}^{j-1}}{(j-1)!} \phi_{(1)}^+, ~~
 \psi_{(j)}^+=\frac{\partial_{k_1}^{j-1}}{(j-1)!} \psi_{(1)}^+, \\
& \phi_{(j)}^-=\psi_{(j)}^{+*}, ~~ \psi_{(j)}^-=- \phi_{(j)}^{+*}, ~~~ j=1,2,\cdots,m+1.
\end{align*}
The corresponding $f$ composed by the above elements yields breathers
when  $\mathbf{K}_{m+1}$ is the Jordan matrix $J_{m+1}(k_1)$  as given in \eqref{Jordan}.
For the simplest Jordan block solution of the focusing NLS equation \eqref{NLS-f}
with the background solution $q_0=e^{-2it}$, we have $m=1$ and $f$ composed by
\begin{equation}
\phi=\left(\phi_{(1)}^+, \partial_{k_1}\phi_{(1)}^+;  \psi_{(1)}^{+*},
\left(\partial_{k_1}\psi_{(1)}^+\right)^*\right)^T,
~~
\psi=\left(\psi_{(1)}^+,\partial_{k_1}\psi_{(1)}^+; - \phi_{(1)}^{+*},
-\left(\partial_{k_1} \phi_{(1)}^+\right)^*\right)^T.
\label{Jordan-phipsi}
\end{equation}
Such a breather is described in Fig.\ref{fig-3}.


\captionsetup[figure]{labelfont={bf},name={Fig.},labelsep=period}
\begin{figure}[ht]
\centering
\subfigure[ ]{
\begin{minipage}[t]{0.32\linewidth}
\centering
\includegraphics[width=2.0in]{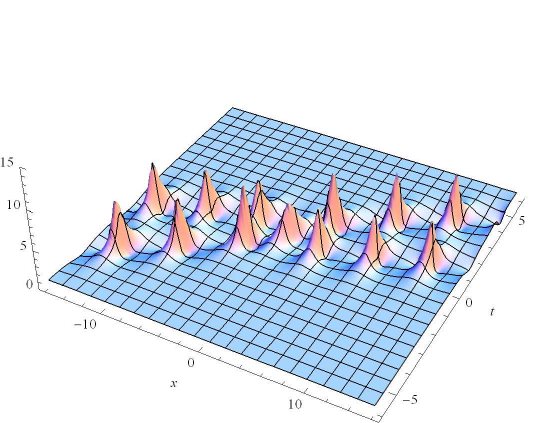}
\end{minipage}%
}%
\subfigure[ ]{
\begin{minipage}[t]{0.32\linewidth}
\centering
\includegraphics[width=2.0in]{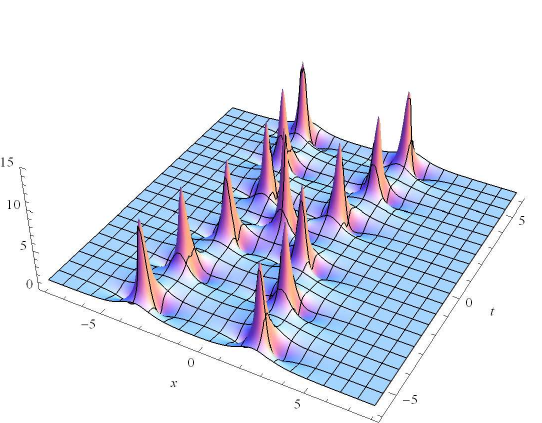}
\end{minipage}%
}%
\caption{ Shape and motion of Jordan block solution of the focusing NLS equation
\eqref{NLS-f} with a background solution $q_0=e^{-2it}$.
The envelope is given by \eqref{abs-q} where $f$ is composed by
\eqref{Jordan-phipsi}.
(a) 3D-plot for $a_1=0.8, b_1=-0.15, \xi_{1R}^{(0)}=\xi_{1I}^{(0)}=0$.~
(b) 3D-plot for $a_1=1.2, b_1=0, \xi_{1R}^{(0)}=\xi_{1I}^{(0)}=0$.
}
\label{fig-3}
\end{figure}


\subsubsection{Rational solutions and rogue waves}

Rational solutions can be obtained as a limit case of breathers when taking $k_j\rightarrow 1$.
This can be seen from the expression \eqref{phij-psij}.
Since the Akhmediev breathers and Kuznetsov-Ma breathers are generated when $b_j=0$,
rational solutions can also be understood as a limit of these two types of breathers.
In principle, for getting  rational solutions, in $A$ we should take $\mathbf{K}_{m+1}=J_{m+1}(1)$,
but the limit procedure needs to be elaborated.

Let us consider \eqref{sol-Lax pair} and rewrite them in the form
\begin{subequations}\label{ration-phipsi}
\begin{align}
&\Phi(\kappa,c,d)=\left(c(\kappa)\left(-\sqrt{\kappa^2+1}+\kappa\right)e^{-\kappa(x-2i\sqrt{\kappa^2+1} t)}
-d(\kappa)\left(\sqrt{\kappa^2+1}+\kappa\right)
e^{\kappa(x-2i\sqrt{\kappa^2+1} t)}\right)e^{-it},\label{ration-phi}\\
&\Psi(\kappa,c,d)=\left(c(\kappa) e^{-\kappa(x-2i\sqrt{\kappa^2+1} t)-\xi_j^{(0)}}
+ d(\kappa)e^{\kappa(x-2i\sqrt{\kappa^2+1} t)+\xi_j^{(0)}}\right)e^{it}\label{ration-psi},
\end{align}
\end{subequations}
where we have taken $\delta=-1$ and introduce $\kappa=\sqrt{k^2-1}$, and
we take $c(\kappa)$ and $d(\kappa)$ to be functions of $\kappa$.
Impose constraint $c(\kappa)=-d(-\kappa)$ and take formal expressions
\begin{equation}\label{cd-kappa}
c(\kappa)=\sum_{j=0}^{\infty}s_j\kappa^j,~d(\kappa)=\sum_{j=0}^{\infty}(-1)^{j+1}s_j\kappa^j,
\end{equation}
where $s_j$ are arbitrary complex parameters.
We denote the above $\Phi(\kappa,c,d)$ and $\Psi(\kappa,c,d)$ with \eqref{cd-kappa}
by $\Phi_{odd}$ and $\Psi_{odd}$ respectively.
Expand them in terms of $\kappa$ as
\begin{equation}\label{Taylor-nls}
\Phi_{odd}=\sum\limits_{j=0}^{\infty}R_{2j+1}\kappa^{2j+1},~~
\Psi_{odd}=\sum\limits_{j=0}^{\infty}S_{2j+1}\kappa^{2j+1},
\end{equation}
in which
\begin{gather}
R_{2j+1}={\frac{1}{(2j+1)!}\frac{\partial^{2j+1}}{\partial \kappa^{2j+1}}\Phi_{odd}}|_{\kappa=0},~~
S_{2j+1}={\frac{1}{(2j+1)!}\frac{\partial^{2j+1}}{\partial \kappa^{2j+1}}\Psi_{odd}}|_{\kappa=0},
~~j=0,1,2,\cdots.
\end{gather}
Define
\begin{subequations}\label{ration-entry-odd}
\begin{align}
& \phi^{odd}=(R_1,R_3,\cdots,R_{2m+1},S_1^*,S_3^*,\cdots,S_{2m+1}^*)^T, \label{phi-odd}\\
& \psi^{odd}=(S_1,S_3,\cdots,S_{2m+1},-R_1^*,-R_3^*,\cdots,-R_{2m+1}^*)^T=T(\phi^{odd})^*,
\end{align}
\end{subequations}
where $T$ takes the form \eqref{T-sd} with $\gamma=-1$.
It can be verified that $\phi^{odd}$ satisfies equation \eqref{4.2}
where $q_0=e^{-2it}$, $\delta=-1$, $T$ is given by \eqref{T-sd} with $\gamma=-1$,
and $A=\mathrm{Diag}(\mathbf{K}_{m+1},-\mathbf{K}_{m+1}^*)$ with
\begin{equation}\label{K}
\mathbf{K}_{m+1}=\left(
           \begin{array}{ccccc}
             \alpha_0     &        0         &       0       &       \cdots       &   0    \\
             \alpha_2     &   \alpha_0        &    0           &     \cdots         &  0   \\
             \alpha_4     &   \alpha_2        &   \alpha_0     &          \cdots         &  0    \\
             \vdots       &   \vdots          &   \ddots       &  \ddots      &    \vdots   \\
             \alpha_{2m}  &   \alpha_{2m-2}   &   \cdots       &  \alpha_2    &   \alpha_0
           \end{array}
         \right),
\end{equation}
in which
$\alpha_{2j}=\frac{1}{(2j)!}\partial_\kappa^{2j}\sqrt{\kappa^2+1}|_{\kappa=0},~(j=0,1,2,\cdots)$.
Note also that $A$ and $T$ satisfy \eqref{cnls-AT} with $\delta=-1$.
Thus, the  quasi double Wronskians
\begin{equation}\label{f-rat-m}
f=|\h\phi_m^{odd};\h\psi_m^{odd}|, ~~ g=2|\h\phi_{m+1}^{odd};\h\psi_{m-1}^{odd}|
\end{equation}
provide rational solutions to the  focusing NLS equation \eqref{NLS-f} via \eqref{trans}
and the envelope via \eqref{abs-q}.

The first order rational solution (for $m = 0$) is
\begin{equation}\label{1-ration-sol}
q=-\left(1+\frac{4it-1}{\widetilde{x}^2+4t^2+\frac{1}{4}}\right) e^{-2it},
\end{equation}
where $\widetilde{x}=x+\frac{s_0-2s_1}{2s_0}$ with $s_0,s_1$ being coefficients of $c(k)$.
Here we take $s_0,s_1\in \mathbb{R}$ for simplicity.
We refer to it as the Peregrine soliton since it was first derived by Peregrine in \cite{Per-JAMS-1983}.
Its envelope $|q|$ is localized in both space and time. It is also known as
a rogue wave of the focusing NLS equation.
The maximum value of $|q|$ is 3, occurring  at $(x,t)=(-\frac{s_0-2s_1}{2s_0},0)$,
three times hight of the  background $|q_0|=1$.
The envelope is depicted in Fig.\ref{fig-4}(a).

The general second order rational solution can be obtained from
\begin{subequations}\label{2-rat}
\begin{equation}\label{2-sol-r}
q=q_0+\frac{g}{f},~~g=2|\phi^{odd}, \phi^{odd}_1, \phi^{odd}_2; \psi^{odd}_1|,~~
f=|\phi^{odd}, \phi^{odd}_1; \psi^{odd}, \psi^{odd}_1|,
\end{equation}
where
\begin{equation}\label{entry}
\phi^{odd}=\left(R_1, R_2; S_1^*, S_2^*\right)^T,~~
\psi^{odd}=\left(S_1, S_2; -R_1^*, -R_2^*\right)^T.
\end{equation}
\end{subequations}
We skip explicit expression of $q$.
The envelope of a typical second order rational solution is shown in Fig.\ref{fig-4}(b) with a symmetric shape
and having a single maximum 5.
In general, the maximum amplitude of a $n$th-order rogue wave with one central main peak is $2n+1$
times of the height of the amplitude of the background plane wave \cite{Akhmediev-PRE-2009,Hejs-PLA-2017},
(also see \cite{Hejs-PLA-2017} where rogue wave with such pattern is called a ``fundamental rogue wave'').
The envelope of another typical second order rational solution has three peaks, as shown in Fig.\ref{fig-4}(c).


\captionsetup[figure]{labelfont={bf},name={Fig.},labelsep=period}
\begin{figure}[ht]
\centering
\subfigure[ ]{
\begin{minipage}[t]{0.32\linewidth}
\centering
\includegraphics[width=2.0in]{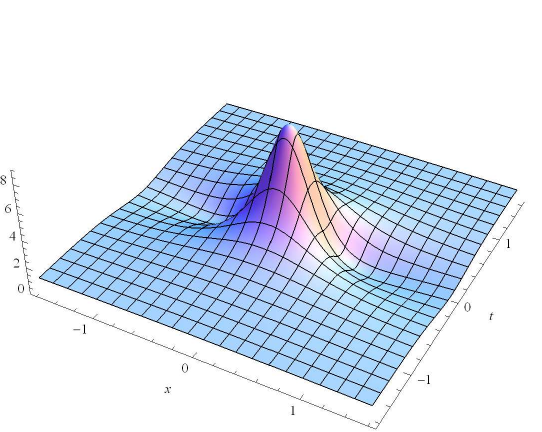}
\end{minipage}%
}%
\subfigure[ ]{
\begin{minipage}[t]{0.32\linewidth}
\centering
\includegraphics[width=2.0in]{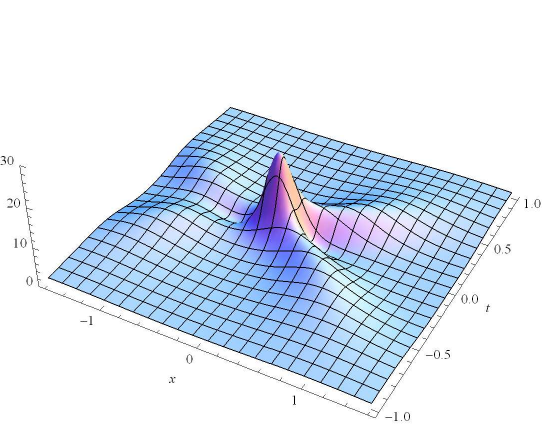}
\end{minipage}%
}%
\subfigure[ ]{
\begin{minipage}[t]{0.38\linewidth}
\centering
\includegraphics[width=2.0in]{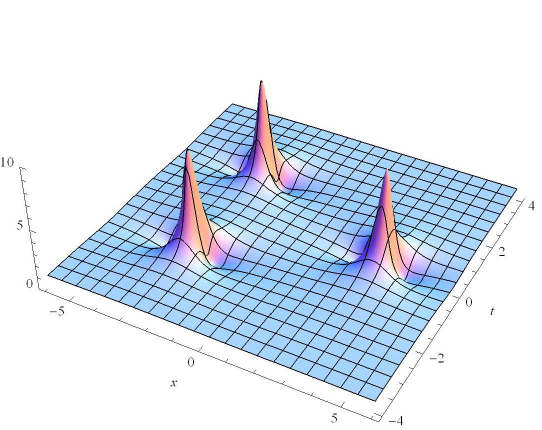}
\end{minipage}%
}
\caption{ Shape and motion of  rational solutions of the focusing NLS equation \eqref{NLS-f}.
(a) Envelope of the first order   rational solution given by \eqref{1-ration-sol} with $s_0=1, s_1=0.5$.~
(b) Envelope of the second order rational solution given by \eqref{2-rat}
with $s_0=2, s_1=0.5$.~
(c)  Envelope of the second order  rational solution given by \eqref{2-rat} with $s_0=1, s_1=0, s_2=10, s_3=-20$.
}
\label{fig-4}
\end{figure}


The third order rational solution is obtained by taking $m=2$ in \eqref{f-rat-m}.
Without presenting formulae, we depict
some different patterns of the envelope of these solutions in Fig.\ref{fig-5}.
Fig.\ref{fig-5} (a) shows the pattern where there is only one  central main peak,
Fig.\ref{fig-5} (d) and Fig.\ref{fig-5} (e) show the pattern
consisting basically of 6 well-separated fundamental part on a unit background,
which are located on a triangle and a  pentagon, respectively.
Another two interesting patterns  are shown in  Fig.\ref{fig-5} (b) and Fig.\ref{fig-5} (c).
Thus, it indicates that higher-order rogue waves contain richer structures.
Note that recently it was found the pattern
of rogue waves is related to the roots of Yablonskii-Vorob'ev polynomials \cite{YY-PD-2021a,YY-PD-2021b}.


\captionsetup[figure]{labelfont={bf},name={Fig.},labelsep=period}
\begin{figure}[ht]
\centering
\subfigure[ ]{
\begin{minipage}[t]{0.32\linewidth}
\centering
\includegraphics[width=2.0in]{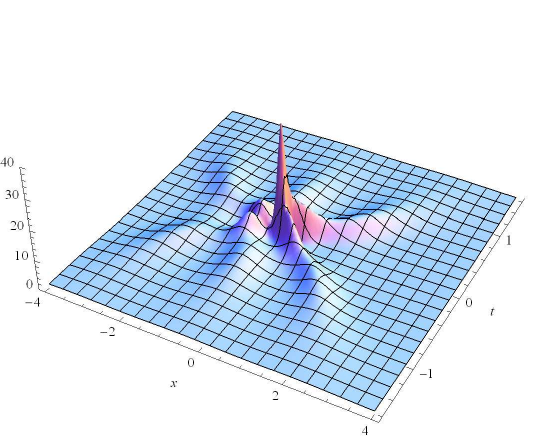}
\end{minipage}%
}%
\subfigure[ ]{
\begin{minipage}[t]{0.32\linewidth}
\centering
\includegraphics[width=2.0in]{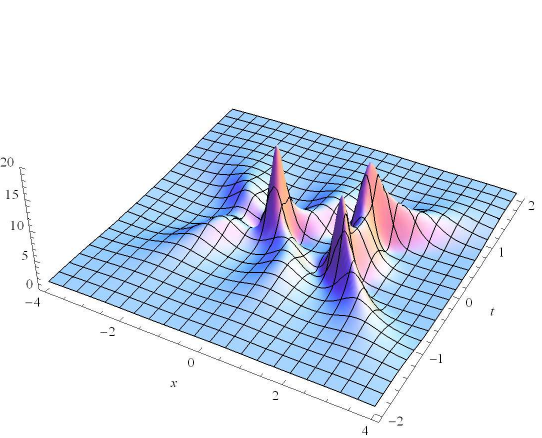}
\end{minipage}%
}%
\subfigure[ ]{
\begin{minipage}[t]{0.38\linewidth}
\centering
\includegraphics[width=2.0in]{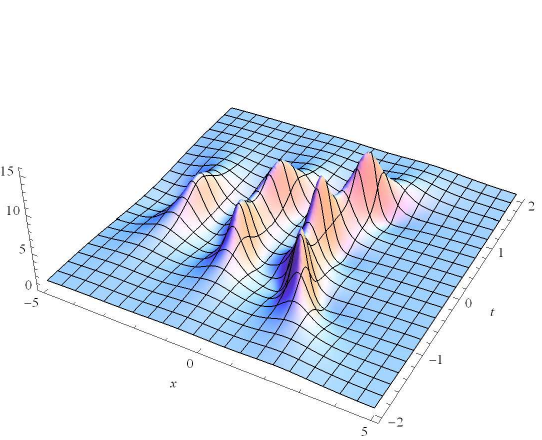}
\end{minipage}%
}
\\
\subfigure[ ]{
\begin{minipage}[t]{0.38\linewidth}
\centering
\includegraphics[width=2.0in]{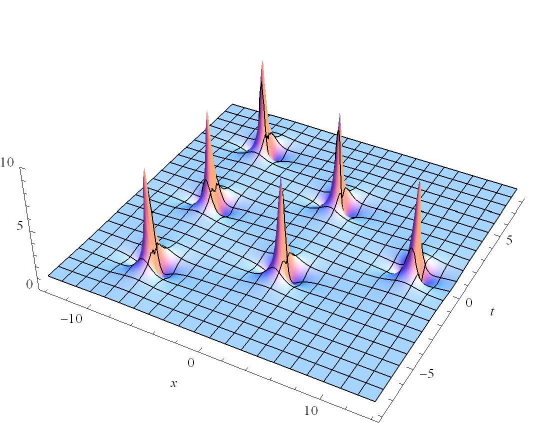}
\end{minipage}%
}%
\subfigure[ ]{
\begin{minipage}[t]{0.38\linewidth}
\centering
\includegraphics[width=2.0in]{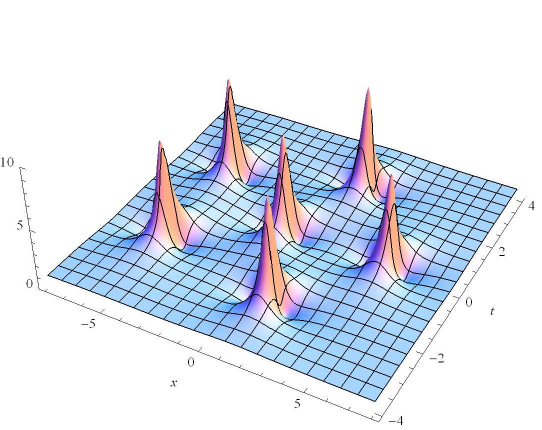}
\end{minipage}%
}
\caption{Shape and motion of the envelope of the third order rational solution of the focusing NLS equation \eqref{NLS-f}.
(a) 3D plot for $s_0=1, s_1=0, s_2=0, s_3=0, s_4=0, s_5=0$.~
(b) 3D plot for  $s_0=1, s_1=1, s_2=0, s_3=0, s_4=0, s_5=0$.~
(c) 3D plot for $s_0=1, s_1=0, s_2=0, s_3=1, s_4=0, s_5=0$.~
(d) 3D plot for $s_0=-1, s_1=0, s_2=0, s_3=100, s_4=-200, s_5=0$.~
(e) 3D plot for $s_0=1, s_1=0, s_2=100, s_3=1, s_4=0, s_5=200$.
}
\label{fig-5}
\end{figure}


Apart from \eqref{ration-entry-odd}, one can also introduce Wronskian entries by imposing
$c(\kappa)=d(-\kappa)$, i.e.,
\begin{equation}
c(\kappa)=\sum_{j=0}^{\infty}s_j\kappa^j,~d(\kappa)=\sum_{j=0}^{\infty}(-1)^{j}s_j\kappa^j,
\end{equation}
such that $\Phi(\kappa,c,d)$ and $\Psi(\kappa,c,d)$ given in \eqref{ration-phipsi}
(denoted by $\Phi_{evev}$ and $\Psi_{even}$, respectively)
can be expanded as
\begin{equation}
\Phi_{even}=\sum\limits_{j=0}^{\infty}R_{2j}\kappa^{2j}, ~~
\Psi_{even}=\sum\limits_{j=0}^{\infty}S_{2j}\kappa^{2j},
\end{equation}
where
\begin{gather}
R_{2j}={\frac{1}{(2j)!}\frac{\partial^{2j}}{\partial \kappa^{2j}}\Phi_{even}}|_{\kappa=0},~
S_{2j}={\frac{1}{(2j)!}\frac{\partial^{2j}}{\partial \kappa^{2j}}\Psi_{even}}|_{\kappa=0},~~j=0,1,2,\cdots.
\end{gather}
Then the vectors for the quasi double Wronskian are taken as
\begin{subequations}\label{ration-entry-even}
\begin{align}
& \phi^{even}=(R_0,R_2,\cdots,R_{2m},S_0^*,S_2^*,\cdots,S_{2m}^*)^T, \\
& \psi^{even}=(S_0,S_2,\cdots,S_{2m},-R_0^*,-R_2^*,\cdots,-R_{2m}^*)^T=T(\phi^{even})^*,
\end{align}
\end{subequations}
where $T$ is given by \eqref{T-sd} with $\gamma=-1$.
In this case $m=0$ does not lead to a nontrivial solution but
the solutions obtained by taking $m=1$ and $m=2$ correspond to
\eqref{1-ration-sol} and \eqref{2-sol-r},
which are the first order and second order rational solutions derived using $\phi^{odd}$ and $\psi^{odd}$.

One may conjecture that the $m$-th order rational solution derived using $\phi^{odd}$ and $\psi^{odd}$
corresponds to the $(m+1)$-th order rational solution derived using $\phi^{even}$ and $\psi^{even}$.
Similar connection is proved in the rational solutions of  the discrete KdV-type  equations,
see \cite{ZhangDD-SIGMA-2017}.
We also note that the parameters $\{s_j\}$ (or $c(\kappa)$) play the same roles as
the lower triangular Toeplitz matrices, cf.\cite{DJZ-2006,ZhangDJ-RMP-2014}.
An $(m+1)$-th order  lower triangular Toeplitz matrix  $\mathbf{P}_{m+1}$ is defined as
\begin{equation}\label{Toe}
\mathbf{P}_{m+1}=\mathcal{T}_{m+1}[t_j]_0^m=\left(
           \begin{array}{ccccc}
             t_0     &        0         &       0       &       \cdots       &   0    \\
             t_1     &   t_0        &    0           &     \cdots         &  0   \\
             t_2     &   t_1        &   t_0     &          \cdots         &  0    \\
             \vdots       &   \vdots          &   \ddots       &  \ddots      &    \vdots   \\
             t_{m}  &   t_{m-1}   &   \cdots       &  t_1    &   t_0
           \end{array}
         \right),~~ t_j\in \mathbb{C},
\end{equation}
which commutes with $\mathbf{K}_{m+1}$ defied in \eqref{K}.
For the block diagonal matrix $Q=\mathrm{Diag}(\mathbf{P}_{m+1},\mathbf{P}_{m+1}^*)$,
when $T$ is given by \eqref{T-sd} with $\gamma=-1$,
and $A=\mathrm{Diag}(\mathbf{K}_{m+1},-\mathbf{K}_{m+1}^*)$ with \eqref{K},
we have
\[AQ=QA, ~ QT=TQ^*.\]
This indicates that for any $\phi$ that satisfies \eqref{4.2} with the above $A$ and $T$,
$\widetilde\phi=Q\phi$ is also a solution of \eqref{4.2}.
Moreover, if $\phi^{odd}$ in \eqref{phi-odd} is derived with $c(\kappa)=1$,
then in $\widetilde\phi=Q\phi$, the parameters $\{t_j\}$ play the exactly same roles as $\{s_j\}$.
In \cite{DJZ-2006}, for the KdV equation, the relation between $\mathbf{P}_{m+1}$
and $c(\kappa)$ is described, see Sec.2 of \cite{DJZ-2006}.

\subsection{The defocusing reverse-space nonlocal NLS equation}\label{sec-5-2}

In this section we investigate solutions of the defocusing reverse-space nonlocal NLS equation
\begin{equation}\label{s-NLS-df}
iq_t=q_{xx}-2 q^2q^*(-x)
\end{equation}
with the background solution $q_0=e^{2it}$.
This is the equation \eqref{snls} with $\delta=1$.
Note that the reverse-space nonlocal NLS equation \eqref{snls}
is considered as a model with  parity-time symmetry (see \cite{AblM-PRL-2013}).
Efforts of finding physical applications of NLS type nonlocal integrable systems
can also be found in \cite{AblM-JPA-2019,Lou-CTP-2020,YJK-PRE-2018}, etc.

\subsubsection{Solitons and doubly periodic solutions}\label{sec-5-2-1}

Solution  to   equation \eqref{s-NLS-df} with a background solution $q_0$ is written as
\begin{equation}\label{DW}
q=q_0+\frac{g}{f}, ~~f=|\h\phi_m;\h\psi_m|, ~~ g=2|\h\phi_{m+1};\h\psi_{m-1}|,
\end{equation}
where we take $q_0=e^{2i t}$.
Consider the simplest case, $m=0$. From the results in Table \ref{table-5-1} and in Sec.\ref{sec-4-4-2},
we have
\begin{subequations}\label{fg-2ss}
\begin{equation}\label{fg-snls}
f=\left | \begin{smallmatrix}
            \phi_{(1)}^+ &~ \psi_{(1)}^+\\
            \psi_{(1)}^{+*}(-x) &~ -\phi_{(1)}^{+*}(-x)
            \end{smallmatrix}
            \right|,~~
g= 2 \left | \begin{smallmatrix}
            \phi_{(1)}^+ &~ k_1\phi_{(1)}^+\\
            \psi_{(1)}^{+*}(-x) &~ k_1^*\psi_{(1)}^{+*}(-x)
            \end{smallmatrix}
    \right|,
\end{equation}
where
\begin{align}
&\phi_{(1)}^+=\left (\alpha_1 e^{\sqrt{k^2_1+1} (x-2ik_1 t)}
+\beta_1 e^{-\sqrt{k^2_1+1} (x-2ik_1 t)}\right)e^{it},\label{5.30b}\\
&\psi_{(1)}^+=\left(\alpha_1 \left(\sqrt{k^2_1+1}-k_1\right)e^{\sqrt{k^2_1+1} (x-2ik_1 t)}
-\beta_1 \left(\sqrt{k^2_1+1}+k_1\right)
e^{-\sqrt{k^2_1+1} (x-2ik_1 t)}\right)e^{-it},\label{5.30c}
\end{align}
\end{subequations}
and in $\Phi(k,c,d)$ and $\Psi(k,c,d)$ defined in \eqref{sol-Lax pair} we have taken $\delta=1$,
\[c(k)=-\beta_1\left(\sqrt{k_1^2+1}+k_1\right), ~~ d(k)=\alpha_1\left(\sqrt{k_1^2+1}-k_1\right)\]
with $\alpha_1$ and $\beta_1$ as arbitrary functions of $k$.

The envelope $|q|$  of some solutions resulting from \eqref{fg-2ss} is depicted in Fig.\ref{fig-6},
which exhibits features of two-soliton interactions, although the solution is from the simplest case, $m=0$.
In the following we implement asymptotic analysis so as to understand such features.
To avoid singular and trivial solutions, we consider the special case where $k_1=ib$, $b\in \mathbb{R}$.
It turns out that the solution can be classified according to the sign of $1-b^2$.

\noindent{\textbf{Case 1: $|b|<1$}}

We write   solution $q$ in terms of the following coordinates,
\begin{equation*}
\left(X_1=x+2bt,~t\right).
\end{equation*}
This gives rise to
\[q=\frac{G}{F},\]
where
\begin{align*}
G=& e^{2it}\left\{1+2ib\left[(\sqrt{1-b^2}+ib)e^{4b\sqrt{1-b^2}t}
-\beta \beta^*(\sqrt{1-b^2}-ib)e^{-4b\sqrt{1-b^2}t}\right.\right. \\
&\left.\left. +\beta(\sqrt{1-b^2}+ib)e^{-2\sqrt{1-b^2}X_1+4b\sqrt{1-b^2}t}
-\beta^*(\sqrt{1-b^2}-ib)e^{2\sqrt{1-b^2}X_1-4b\sqrt{1-b^2}t}\right]\right\}, \\
F=& e^{4b\sqrt{1-b^2}t}+\beta \beta^*e^{-4b\sqrt{1-b^2}t}
+\beta b(b-i\sqrt{1-b^2})e^{-2\sqrt{1-b^2}X_1+4b\sqrt{1-b^2}t}  \\
& +\beta^* b(b+i\sqrt{1-b^2})e^{2\sqrt{1-b^2}X_1-4b\sqrt{1-b^2}t} ,
\end{align*}
and we have taken $\beta=\frac{\beta_1}{\alpha_1}$.
When keeping $X_1$ to be constant, we find
\begin{equation*}
|q|^2\rightarrow 1-\frac{2\sqrt{1-b^2}\beta_I}{\mathrm{sgn}[b]\left(|\beta|
\cosh(2\sqrt{1-b^2}X_1 - \ln(|\beta\, b^{\pm 1}|))
+\mathrm{sgn}[b](\sqrt{1-b^2}\beta_I+b\beta_R)\right)},
~bt\rightarrow \pm\infty.
\end{equation*}
Similarly, in terms of the coordinate
\begin{equation*}
\left(X_2=x-2bt,~t\right),
\end{equation*}
we get
\begin{equation*}
|q|^2\rightarrow 1+\frac{2((2b^2-1)\beta_I-2b\sqrt{1-b^2}\beta_R)\sqrt{1-b^2}}
{\mathrm{sgn}[b]\left(|\beta|\cosh(2\sqrt{1-b^2}X_2+ \ln(|\beta\, b^{\pm 1}|))
+\mathrm{sgn}[b](\sqrt{1-b^2}\beta_I+b\beta_R)\right)},~ bt\rightarrow \pm\infty.
\end{equation*}
Here we have taken $\beta=\beta_R+i\beta_I, \beta_R,\beta_I\in \mathbb{R}$.
Note that here we do not have formula \eqref{abs-q}, which indicates the background $|q_0|$ for the envelope $|q|$
in the classical case, however,
since the background solution is $q_0=e^{2it}$ which yields $|q_0|=1$,
the above asymptotic results indicate that $|q|$ has a background plane $|q|=1$, which is equal to  $|q_0|$.

For convenience let us call the above two solitons  $X_1$-soliton and $X_2$-soliton, respectively.
We further impose restriction  $\mathrm{sgn}[b](\sqrt{1-b^2}\beta_I+b\beta_R)>0$
so that the solution has no  singularity.
Then we have the following results on the asymptotic behaviors of $X_j$-solitons.

\begin{theorem}\label{Th5}
Assume that $\mathrm{sgn}[b](\sqrt{1-b^2}\beta_I+b\beta_R)>0$.
In  case $|b|<1$ and when $bt\rightarrow \pm\infty$, the envelope  of $X_1$-soliton
asymptotically travels on a background $|q|^2=1$ and with  characteristics
\begin{align*}
&{\rm trajectory}: ~x(t)=\frac{1}{2\sqrt{1-b^2}}\ln|\beta \, b^{\pm 1}|-2bt,\\
&{\rm velocity:} ~x^\prime(t)=-2b,\\
&{\rm amplitude:} ~1-\frac{2\sqrt{1-b^2}\beta_I}{\mathrm{sgn}[b]\left(|\beta|
+\mathrm{sgn}[b](\sqrt{1-b^2}\beta_I+b\beta_R)\right)};
\end{align*}
when
$bt\rightarrow \pm\infty$, the $|q|^2$ of $X_2$-soliton asymptotically travels
 on a background $|q|^2=1$ and with  characteristics
\begin{align*}
&{\rm top~ trace:}~x(t)=-\frac{1}{2\sqrt{1-b^2}}\ln |\beta \, b^{\pm 1}| +2bt,\\
&{\rm velocity:}~x^\prime(t)=2b,\\
&{\rm amplitude:}~1+\frac{2((2b^2-1)\beta_I-2b\sqrt{1-b^2}\beta_R)
\sqrt{1-b^2}}{\mathrm{sgn}[b]\left(|\beta|
+\mathrm{sgn}[b](\sqrt{1-b^2}\beta_I+b\beta_R)\right)}.
\end{align*}
Each soliton gets a phase shift $-2\ln|b|$ due to interaction.
\end{theorem}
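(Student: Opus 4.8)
The plan is to run a standard two-soliton asymptotic analysis directly on the closed form $q=G/F$ already recorded in the two moving frames $(X_1,t)$ and $(X_2,t)$. Since the limits of $|q|^2=qq^*$ displayed just before the statement already do most of the computational work, the theorem is essentially the geometric reading of those limits: locate the peak, read off its velocity from the frozen coordinate, and evaluate the amplitude at the peak.

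First I would treat the $X_1$-soliton by freezing $X_1$ and letting $bt\to\pm\infty$. Because $\sqrt{1-b^2}>0$ for $|b|<1$, the four exponentials in $F$ split into two competing pairs governed by the sign of $4b\sqrt{1-b^2}\,t$: as $bt\to+\infty$ the two terms carrying $e^{+4b\sqrt{1-b^2}t}$ dominate, while as $bt\to-\infty$ the two carrying $e^{-4b\sqrt{1-b^2}t}$ dominate. Factoring out the common dominant exponential in each case collapses $F$, and likewise $G$, into a function of $X_1$ alone (the residual $e^{2it}$ in $G$ cancels in the modulus), so $|q|^2$ tends to the localized profile already displayed, of the shape $1+(\text{const})/\bigl(|\beta|\cosh(2\sqrt{1-b^2}X_1-\ln|\beta b^{\pm1}|)+(\text{const})\bigr)$. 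The trajectory is then the locus where the $\cosh$ argument vanishes, namely $2\sqrt{1-b^2}X_1=\ln|\beta b^{\pm1}|$; substituting $X_1=x+2bt$ yields $x(t)$ and the velocity $x'(t)=-2b$, and setting $\cosh=1$ at its minimum gives the stated amplitude. The hypothesis $\mathrm{sgn}[b](\sqrt{1-b^2}\beta_I+b\beta_R)>0$ is precisely what keeps the denominator bounded away from zero, so the limiting profile is a genuine nonsingular soliton rather than a singular one.

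The $X_2$-soliton is handled identically in the frame $(X_2,t)$, where now $X_2=x-2bt$ is frozen; the same dominant-balance argument reproduces the second displayed limit, its top trace $x(t)$, the velocity $+2b$, and its amplitude. For the phase shift I would compare the phase constant $-\ln|\beta b^{\pm1}|$ appearing in the $\cosh$ argument between the two temporal limits: as $bt$ passes from $-\infty$ to $+\infty$ it changes by $-\ln|\beta b|+\ln|\beta b^{-1}|=-2\ln|b|$, which is the claimed shift, and the same cancellation occurs for both solitons.

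I expect the only delicate point to be the bookkeeping of which exponential pair dominates in each limit, since this depends jointly on $\mathrm{sgn}[b]$ and on the direction $bt\to\pm\infty$; it is exactly this sign tracking that fixes the $b^{\pm1}$ inside the trajectory formulae and the $\mathrm{sgn}[b]$ factors in the amplitudes. Once those signs are pinned down and the nonsingularity condition is invoked, the remaining identification of peak position, velocity, amplitude, and phase shift is routine algebra applied to the two displayed asymptotic expressions.
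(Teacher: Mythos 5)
Your proposal is correct and follows essentially the same route as the paper: Theorem 5 is proved there by exactly this dominant-balance asymptotic analysis of $q=G/F$ in the frames $(X_1,t)$ and $(X_2,t)$, carried out in the text immediately preceding the theorem, with the theorem itself merely collecting the trajectory, velocity, amplitude and phase-shift data from the two displayed limits of $|q|^2$. Your reading of the phase shift as the change $-2\ln|b|$ in the $\cosh$ phase constant between the limits $bt\to\mp\infty$, and of the hypothesis $\mathrm{sgn}[b](\sqrt{1-b^2}\beta_I+b\beta_R)>0$ as the nonsingularity condition keeping the denominator bounded away from zero, matches the paper's treatment.
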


The value of amplitude of each soliton can be either larger or less than the background $|q_0|=1$.
This indicates various types of interactions.
Fig.\ref{fig-6} exhibits three types of interactions.
It is also notable that the value of amplitude of each soliton can be even equal to  the background $|q_0|=1$,
which means the soliton can vanish on the background.
This happens when $\beta_I=0$ for the $X_1$-soliton
and when $(2b^2-1)\beta_I-2b\sqrt{1-b^2}\beta_R=0$ for the $X_2$-soliton.
Illustrations are given in Fig.\ref{fig-7}.
Note that such a behavior usually appear in some coupled system and known as ``ghost soliton'',
cf.\cite{Hie-2002}.


\captionsetup[figure]{labelfont={bf},name={Fig.},labelsep=period}
\begin{figure}[htbp]
\centering
\subfigure[ ]{
\begin{minipage}[t]{0.32\linewidth}
\centering
\includegraphics[width=2.0in]{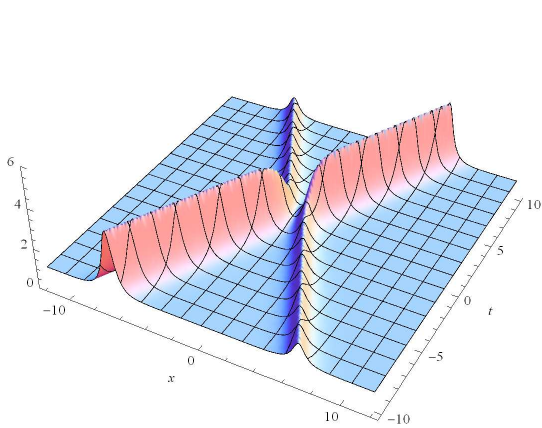}
\end{minipage}%
}%
\subfigure[ ]{
\begin{minipage}[t]{0.32\linewidth}
\centering
\includegraphics[width=2.0in]{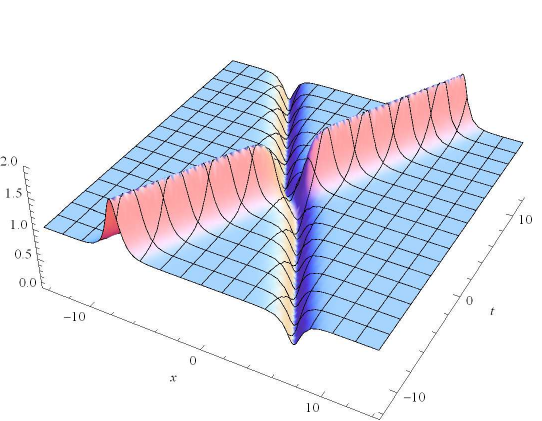}
\end{minipage}%
}%
\subfigure[ ]{
\begin{minipage}[t]{0.32\linewidth}
\centering
\includegraphics[width=2.0in]{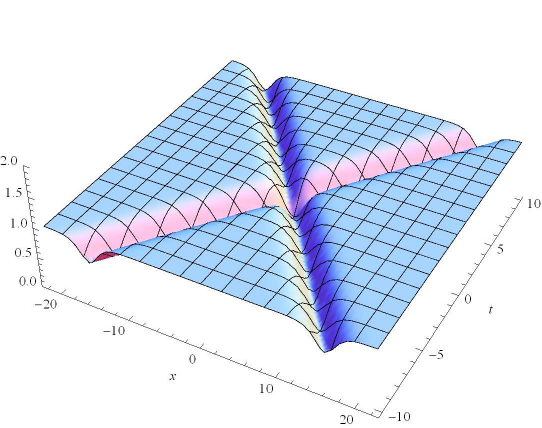}
\end{minipage}%
}
\caption{Interactions of the $X_1$-$X_2$ solitons for the defocusing reverse-space NLS equation \eqref{s-NLS-df}:
shape and motion of the envelope $|q|^2$ resulting from \eqref{fg-2ss}.
(a) 3D plot for $b=0.3, \beta=-1.6-0.4i$.~
(b) 3D plot for $b=0.3, \beta=-0.5+0.1i$.~
(c) 3D plot for $b=-0.8, \beta=1.4-i$.
In (a) and (b), $X_2$-soliton is the branch in up-right direction, and the other is $X_1$-soliton.
In (c) $X_1$-soliton is the branch in up-right direction, and the other is $X_2$-soliton.
}
\label{fig-6}
\end{figure}
\captionsetup[figure]{labelfont={bf},name={Fig.},labelsep=period}
\begin{figure}[ht!]
\centering
\subfigure[ ]{
\begin{minipage}[t]{0.4\linewidth}
\centering
\includegraphics[width=2.0in]{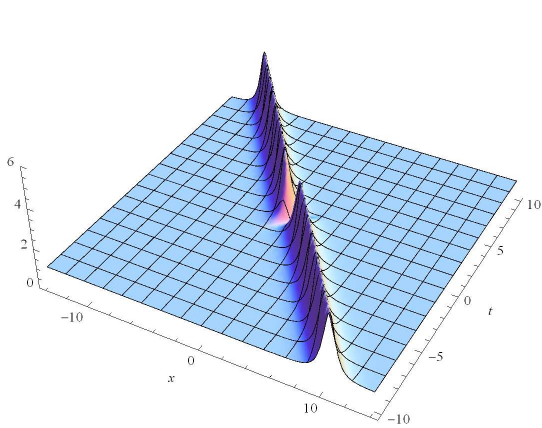}
\end{minipage}%
}%
\subfigure[ ]{
\begin{minipage}[t]{0.4\linewidth}
\centering
\includegraphics[width=2.0in]{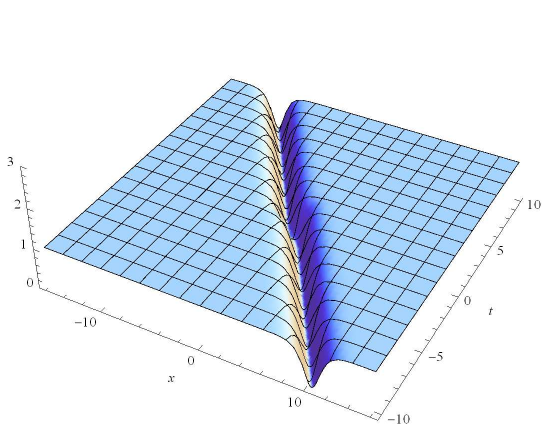}
\end{minipage}%
}%
\\
\subfigure[ ]{
\begin{minipage}[t]{0.4\linewidth}
\centering
\includegraphics[width=2.0in]{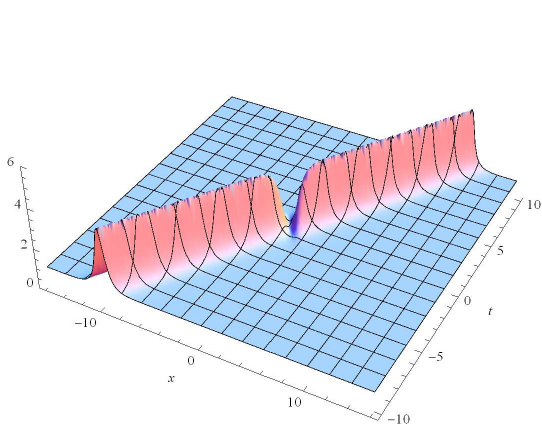}
\end{minipage}%
}%
\subfigure[ ]{
\begin{minipage}[t]{0.4\linewidth}
\centering
\includegraphics[width=2.0in]{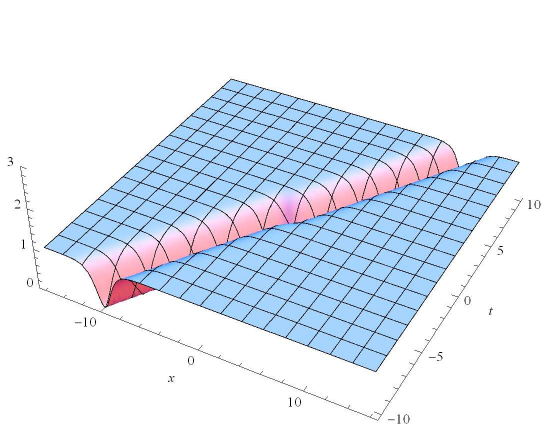}
\end{minipage}%
}%
\caption{
Interactions of  $X_1$-$X_2$ solitons with degeneration:
shape and motion of the envelope $|q|^2$ resulting from \eqref{fg-2ss}.
(a) 
3D plot for $b=0.5, \beta=0.5-\frac{\sqrt{3}}{2}i$.~
(b) 
3D plot for  $b=0.5, \beta=-0.5+\frac{\sqrt{3}}{2}i$.~
(c) 
3D plot for   $b=0.5, \beta=-0.5$.~
(d) 
3D plot for   $b=0.5, \beta=0.5$.
In (a) and (b),  $X_2$-soliton vanishes in the background $|q|^2=1$. In (c) and (d),  $X_1$-soliton
vanishes in the background $|q|^2=1$. }
\label{fig-7}
\end{figure}


\vskip 5pt
\noindent{\textbf{Case 2: $|b|>1$}}

In this case the  one-soliton solution of equation \eqref{s-NLS-df} resulting from \eqref{fg-2ss} can be written as
\begin{subequations}\label{sol-b>1}
\begin{equation}
q=\frac{G}{F},
\end{equation}
where
\begin{align}
G=& e^{2it}\left\{1+2b\left[((\sqrt{b^2-1}-b)-\beta\beta^*(\sqrt{b^2-1}+b))\cos(2\sqrt{b^2-1}x)\right.\right.
\nonumber\\
&+i((\sqrt{b^2-1}-b)+\beta\beta^*(\sqrt{b^2-1}+b))\sin(2\sqrt{b^2-1}x)\nonumber\\
& +(\beta(\sqrt{b^2-1}-b)-\beta^*(\sqrt{b^2-1}+b))\cos(4b\sqrt{b^2-1}t)\nonumber\\
&\left.\left. -i(\beta(\sqrt{b^2-1}-b)+\beta^*(\sqrt{b^2-1}+b))\sin(4b\sqrt{b^2-1}t)\right]\right\}, \\
F=& -b((\sqrt{b^2-1}-b)-\beta\beta^*(\sqrt{b^2-1}+b))\cos(2\sqrt{b^2-1}x) \nonumber\\
&\left.-ib((\sqrt{b^2-1}-b)+\beta\beta^*(\sqrt{b^2-1}+b))\sin(2\sqrt{b^2-1}x)\right.\nonumber \\
& +(\beta+\beta^*)\cos(4b\sqrt{b^2-1}t)+i(\beta^*-\beta)\sin(4b\sqrt{b^2-1}t).
\end{align}
\end{subequations}
Solution \eqref{sol-b>1} is   doubly periodic   with respect to both $x$ and $t$
and the periods are
\begin{equation}
P_x=\frac{\pi}{\sqrt{b^2-1}}, ~~P_t=\frac{\pi}{2b\sqrt{b^2-1}}.
\end{equation}
The solution is plotted in Fig.\ref{fig-8}.
Although there are some results on doubly periodic solutions,
which are constructed by Jacobi elliptic functions in general,
to our knowledge, the doubly periodic solution \eqref{sol-b>1} to
the defocusing reverse-space NLS  equation \eqref{s-NLS-df}  is not reported before.


\captionsetup[figure]{labelfont={bf},name={Fig.},labelsep=period}
\begin{figure}[ht]
\centering
\includegraphics[height=4cm,width=6cm]{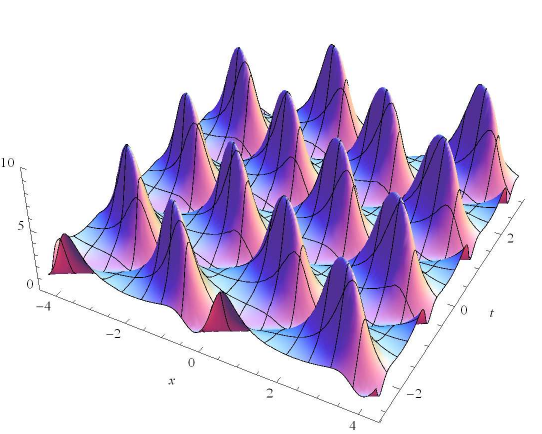}
\caption{ Envelope of doubly periodic solution \eqref{sol-b>1} for the defocusing reverse-space NLS
equation \eqref{s-NLS-df}, for $b=-1.25, \beta=1$.}
\label{fig-8}
\end{figure}


\subsubsection{Rational solutions}\label{sec-5-2-2}

Similar to the classical case, rational solutions can be seen as a limit case of breathers when taking $k_j\rightarrow i$, but the limit procedure should also be elaborated.

According to Table \ref{table-5-1}, for equation \eqref{s-NLS-df}, we have
$A=\mathrm{Diag}(\mathbf{K}_{m+1}, \mathbf{K}_{m+1}^*)$ and $T$ given by \eqref{T-sd} with
$\gamma=-1$.
Consider $\Phi(k,{c},{d})$ and $\Psi(k,{c},{d})$ defined in \eqref{sol-Lax pair} where we take $\lambda=k$ and $\delta=1$.
Introducing $\kappa=\sqrt{k^2+1}$, we have
\begin{subequations} \label{sol-Lax pair-st1}
  \begin{align}
  &\Phi(\kappa,c,d)=[c(\kappa)(\sqrt{\kappa^2-1}-\kappa)e^{-\kappa(x-2i\sqrt{\kappa^2-1}t)}+d(\kappa)(\sqrt{\kappa^2-1}+\kappa)e^{\kappa(x-2i\sqrt{\kappa^2-1}t)}]e^{it}\\
  &\Psi(\kappa,c,d)=[c(\kappa)e^{-\kappa(x-2i\sqrt{\kappa^2-1}t)}+d(\kappa)e^{\kappa(x-2i\sqrt{\kappa^2-1}t)}]e^{-it}
  \end{align}
\end{subequations}
Similarly, imposing $c(\kappa)=-d(-\kappa)$, we can have the same formal expressions  as eqs. \eqref{cd-kappa} and \eqref{Taylor-nls}.

Define
\begin{subequations}\label{ration-entry}
\begin{align}
&\phi^{odd}=(R_1,R_3,\cdots,R_{2m+1},S_1^*(-x),S_3^*(-x),\cdots,S_{2m+1}^*(-x))^T,\\
&\psi^{odd}=(S_1,S_3,\cdots,S_{2m+1},-R_1^*(-x),-R_3^*(-x),\cdots,-R_{2m+1}^*(-x))^T=T \phi^*(-x).
\end{align}
\end{subequations}
and $A=\mathrm{Diag}(\mathbf{K}_{m+1},-\mathbf{K}_{m+1}^*)$ with
\begin{equation}\label{K-t}
\mathbf{K}_{m+1}=\left(
           \begin{array}{ccccc}
             \alpha_0     &        0         &       0       &       \cdots       &   0    \\
             \alpha_2     &   \alpha_0        &    0           &     \cdots         &  0   \\
             \alpha_4     &   \alpha_2        &   \alpha_0     &          \cdots         &  0    \\
             \vdots       &   \vdots          &   \ddots       &  \ddots      &    \vdots   \\
             \alpha_{2m}  &   \alpha_{2m-2}   &   \cdots       &  \alpha_2    &   \alpha_0
           \end{array}
         \right),
\end{equation}
in which
$\alpha_{2j}=\frac{1}{(2j)!}\partial_\kappa^{2j}\sqrt{\kappa^2-1}|_{\kappa=0},~(j=0,1,2,\cdots)$.

It can be verified that $\phi^{odd}$ satisfies \eqref{4.5} with the above mentioned $A$, $T$, $\delta=1$ and $q_0=e^{2it}$.

This means, with such $\phi^{odd}$ and $\psi^{odd}$ as basic column vectors,
by the formula
\begin{equation}\label{q-snls}
q=q_0+\frac{g}{f}=e^{2it} +\frac{g}{f} ,
\end{equation}
the quasi double Wronskians
$$f=|\h\phi_m^{odd};\h\psi_m^{odd}|, ~ ~g=2|\h\phi_{m+1}^{odd};\h\psi_{m-1}^{odd}|$$
provide rational solutions to defocusing reverse-space NLS equation \eqref{s-NLS-df}.

The first order rational solution (for $m = 0$) is provided by
\begin{subequations}\label{fg-rat}
\begin{equation}
f=\Bigl | \begin{smallmatrix}
            R_1 &~ S_1\\
            S_1^*(-x) &~ -R_1^*(-x)
            \end{smallmatrix}
            \Bigr|,~~
g= 2 \Bigl | \begin{smallmatrix}
            R_1 &~ iR_1\\
            S_1^*(-x) &~ -iS_1^*(-x)
            \end{smallmatrix}
    \Bigr|
\end{equation}
with
\begin{equation}
R_1=2i(s_1-s_0(-i+2t+x))e^{it}, ~~ S_1=2(s_1-s_0(2t+x))e^{-it},
\end{equation}
\end{subequations}
where we have taken ${c}=s_0+s_1\kappa$.
Explicit form of the first order rational solution is given by
\begin{subequations}\label{q-snls-rat}
\begin{equation}
q=\frac{G}{F},
\end{equation}
where
\begin{align}
G=&-e^{2it}\left(2|s_1|^2+s_0s_1^*(3i-4t-2x)+s_0^*s_1(i-4t+2x)\right.\nonumber \\
&\left.+|s_0|^2(-1-8it+8t^2+2ix-2x^2)\right), \label{r-reversed-t}\\
F=& 2|s_1|^2+s_0s_1^*(i-4t-2x)+s_0^*s_1(-i-4t+2x)+|s_0|^2(1+8t^2+2ix-2x^2).
\end{align}
\end{subequations}

To understand the dynamics we investigate asymptotic behaviors of the above rational solution.
We introduce a new coordinate
\begin{equation*}
\left(X_1=x+2t,~t\right),
\end{equation*}
then rewrite \eqref{q-snls-rat} in this coordinate,
keep $X_1$ to be constant and let $t\rightarrow\pm \infty$.
It follows that
\begin{equation}\label{r-X1}
|q(X_1,t)|^2\rightarrow 1+\frac{8(|s_0|^2+\mathrm{Im}(s_0^*s_1))}{4|s_0|^2\left(X_1-\frac{\mathrm{Re}(s_0^*s_1)}{|s_0|^2}\right)^2+\frac{(|s_0|^2+2(\mathrm{Im}(s_0^*s_1))^2}{|s_0|^2}},
~~ t\rightarrow\pm \infty,
\end{equation}
for convenience, we call it $X_1$-soliton.
It indicates that, asymptotically,
this is a wave traveling on the background plane $|q_0|^2=1$,
along the line $x=-2t+\frac{\mathrm{Re}(s_0^*s_1)}{|s_0|^2}$, with amplitude $1+\frac{8(|s_0|^2+\mathrm{Im}(s_0^*s_1))|s_0|^2}{(|s_0|^2+2(\mathrm{Im}(s_0^*s_1))^2}$
and without  phase shift due to interaction.
The wave can be above the background plane when  $|s_0|^2+\mathrm{Im}(s_0^*s_1)>0$,
or below the background plane when  $|s_0|^2+\mathrm{Im}(s_0^*s_1)<0$, or vanishes in the  background plane when  $|s_0|^2+\mathrm{Im}(s_0^*s_1)=0$.
We further introduce a second coordinate frame
\begin{equation*}
\left(X_2=x-2t,~t\right),
\end{equation*}
in terms of which we rewrite \eqref{q-snls-rat}.
Then keeping $X_2$ to be constant and letting $t\rightarrow\pm \infty$, we find
\begin{equation}\label{r-X2}
|q(X_2,t)|^2\rightarrow 1+\frac{-8\mathrm{Im}(s_0^*s_1)}{4|s_0|^2\left(X_2+\frac{\mathrm{Re}(s_0^*s_1)}{|s_0|^2}\right)^2+\frac{(|s_0|^2+2(\mathrm{Im}(s_0^*s_1))^2}{|s_0|^2}},
~~ t\rightarrow\pm \infty.
\end{equation}
This implies that, when $t\rightarrow\pm \infty$, there is a wave
($X_2$-soliton for short)
traveling on the background plane $|q_0|^2=1$, along the line
$x=2t-\frac{\mathrm{Re}(s_0^*s_1)}{|s_0|^2}$,
with amplitude $1+\frac{-8\mathrm{Im}(s_0^*s_1)|s_0|^2}{(|s_0|^2+2(\mathrm{Im}(s_0^*s_1))^2}$
and without  phase shift due to interaction.
The wave can be either above or below or vanishes in  the background plane,
depending on  the sign of $\mathrm{Im}(s_0^*s_1)$.
We summarize these asymptotic behaviors in the follow theorem.

\begin{theorem}\label{Th6}
When $t\rightarrow \pm\infty$, the envelope $|q|^2$ of $X_1$-soliton
asymptotically travels on a background $|q_0|^2=1$ with  characteristics
\begin{align*}
&{\rm trajectory}: ~x(t)=-2t+\frac{\mathrm{Re}(s_0^*s_1)}{|s_0|^2},\\
&{\rm velocity:} ~x^\prime(t)=-2,\\
&{\rm amplitude:} ~1+\frac{8(|s_0|^2+\mathrm{Im}(s_0^*s_1))|s_0|^2}{(|s_0|^2+2(\mathrm{Im}(s_0^*s_1))^2},
\end{align*}
and the envelope $|q|^2$ of $X_2$-soliton asymptotically travels
 on a background $|q_0|^2=1$ with  characteristics
\begin{align*}
&{\rm top~ trace:}~x(t)=2t-\frac{\mathrm{Re}(s_0^*s_1)}{|s_0|^2},\\
&{\rm velocity:}~x^\prime(t)=2,\\
&{\rm amplitude:}~1+\frac{-8\mathrm{Im}(s_0^*s_1)|s_0|^2}{(|s_0|^2+2(\mathrm{Im}(s_0^*s_1))^2}.
\end{align*}
Asymptotically, no phase shift occurs for each soliton.
\end{theorem}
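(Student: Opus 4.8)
The plan is to prove Theorem \ref{Th6} by a direct asymptotic analysis of the explicit first order rational solution $q=G/F$ written out in \eqref{q-snls-rat}, extracting the two asymptotic branches in the travelling frames $X_1=x+2t$ and $X_2=x-2t$. First I would pass to the $X_1$-frame by substituting $x=X_1-2t$ into the polynomials $G$ and $F$, treating $X_1$ as fixed and $t$ as the large parameter. The decisive structural observation is that the quadratic-in-$t$ contributions cancel: in $F$ the term $|s_0|^2(8t^2)$ is exactly compensated by $-2|s_0|^2x^2=-2|s_0|^2(X_1-2t)^2$, and the same cancellation occurs in $G$, while the cross terms $s_0s_1^\ast$ and $s_0^\ast s_1$ are only linear in $t$ and $x$. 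Hence, with $X_1$ held fixed, both numerator and denominator collapse to affine functions of $t$,
\[
G=-e^{2it}\bigl(a_G(X_1)\,t+b_G(X_1)\bigr),\qquad
F=a_F(X_1)\,t+b_F(X_1),
\]
with explicit $X_1$-dependent coefficients.

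Since $|{-e^{2it}}|=1$, the envelope is $|q|^2=|a_Gt+b_G|^2/|a_Ft+b_F|^2$, and letting $t\to\pm\infty$ with $X_1$ fixed the $t^2$ terms dominate in both, so the limit is the ratio of leading coefficients $|a_G(X_1)|^2/|a_F(X_1)|^2$ (legitimate wherever $a_F(X_1)\neq0$), independently of the sign of $t$. I would then simplify this ratio. Writing $p=|s_0|^2$, $\rho=\mathrm{Re}(s_0^\ast s_1)$ and $\nu=\mathrm{Im}(s_0^\ast s_1)$, one computes that $|a_G|^2-|a_F|^2=128\,p(p+\nu)$ is independent of $X_1$, while $|a_F|^2=64\,p^2\bigl(X_1-\rho/p\bigr)^2+16(p+2\nu)^2$ after recognising the perfect square $4\nu^2+4p\nu+p^2=(p+2\nu)^2$. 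Dividing through reproduces exactly the Lorentzian profile \eqref{r-X1}, and the $X_2$-branch is obtained by the identical calculation in the frame $X_2=x-2t$, giving \eqref{r-X2}.

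From \eqref{r-X1} the three characteristics follow by inspection: the peak sits where the squared term vanishes, i.e. $X_1=\rho/p$, which in the original variables is the trajectory $x(t)=-2t+\mathrm{Re}(s_0^\ast s_1)/|s_0|^2$ with velocity $-2$; dropping the $(X_1-\rho/p)^2$ term gives the peak value $1+8p(p+\nu)/(p+2\nu)^2$, the stated amplitude. Reading \eqref{r-X2} in the same way yields the $X_2$-soliton data, with peak at $X_2=-\rho/p$, velocity $+2$, and amplitude $1-8\nu p/(p+2\nu)^2$. Finally, the no-phase-shift claim is immediate from the structure of the limit: the profiles \eqref{r-X1} and \eqref{r-X2} do not depend on the sign of $t$, so each branch occupies the identical asymptotic position and shape as $t\to+\infty$ and as $t\to-\infty$; therefore neither soliton is displaced through the interaction.

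The main obstacle I anticipate is algebraic rather than conceptual: the cancellation of the $t^2$ terms must be carried out carefully, and the collapse of $|a_F|^2$ into a single shifted square plus the constant $16(p+2\nu)^2$ (and of $|a_G|^2-|a_F|^2$ into an $X_1$-independent multiple of $p(p+\nu)$) is where a sign slip would corrupt both the trajectory and the amplitude. A secondary point worth confirming is that $a_F(X_1)\neq0$ on the relevant range of $X_1$, so that the leading-coefficient limit is valid and no spurious singularity of the envelope is introduced; this is guaranteed by the non-singularity hypothesis implicit in the construction of the rational solution.
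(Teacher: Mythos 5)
Your proposal is correct and follows essentially the same route as the paper: the paper likewise rewrites the explicit first-order rational solution \eqref{q-snls-rat} in the frames $X_1=x+2t$ and $X_2=x-2t$, takes $t\to\pm\infty$ to obtain the Lorentzian limits \eqref{r-X1} and \eqref{r-X2}, and reads off trajectory, velocity, amplitude and the absence of phase shift from the fact that the two limits coincide. The only difference is that you spell out the algebra (the cancellation of the $t^2$ terms and the identities $|a_F|^2=64p^2(X_1-\rho/p)^2+16(p+2\nu)^2$, $|a_G|^2-|a_F|^2=128p(p+\nu)$) that the paper leaves implicit, and these computations check out against the paper's formulae.
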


Various types of interactions are illustrated in Fig.\ref{fig-9},
which coincide with the above results of asymptotic analysis.
Note that, considering the signs of $|s_0|^2+\mathrm{Im}(s_0^*s_1)$ and $-8\mathrm{Im}(s_0^*s_1)$,
it is impossible to have both waves below  the background plane,
neither one wave below the background plane and another vanishing.


\captionsetup[figure]{labelfont={bf},name={Fig.},labelsep=period}
\begin{figure}[ht!]
\centering
\subfigure[ ]{
\begin{minipage}[t]{0.4\linewidth}
\centering
\includegraphics[width=2.0in]{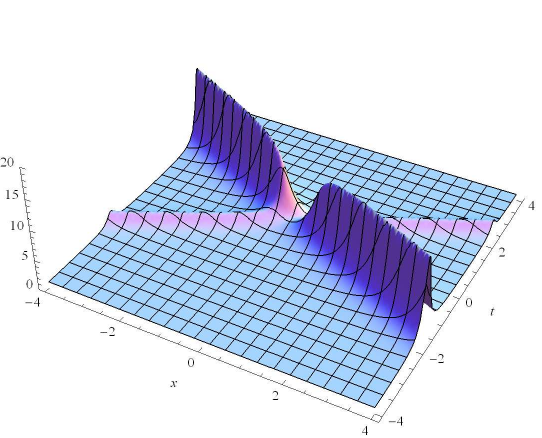}
\end{minipage}%
}%
\subfigure[ ]{
\begin{minipage}[t]{0.4\linewidth}
\centering
\includegraphics[width=2.0in]{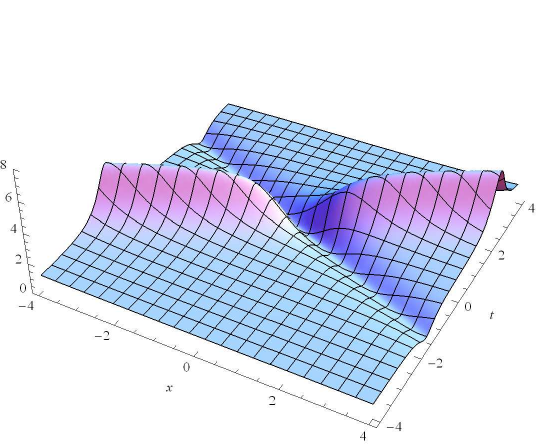}
\end{minipage}%
}%
\\
\subfigure[ ]{
\begin{minipage}[t]{0.4\linewidth}
\centering
\includegraphics[width=2.0in]{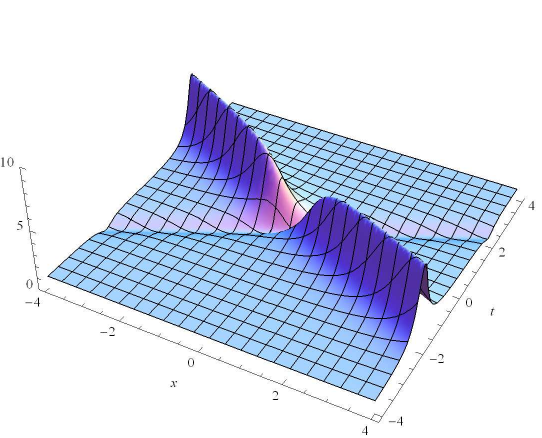}
\end{minipage}%
}%
\subfigure[ ]{
\begin{minipage}[t]{0.4\linewidth}
\centering
\includegraphics[width=2.0in]{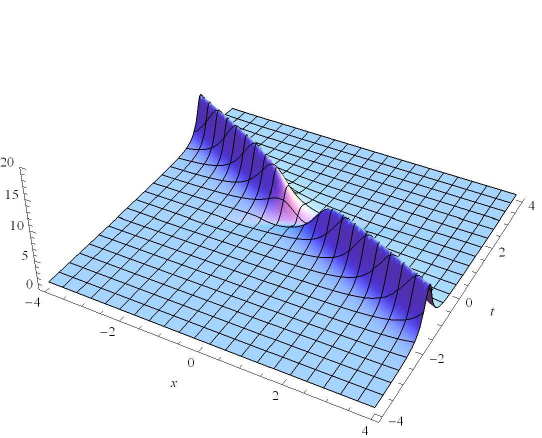}
\end{minipage}%
}%
\caption{Interactions of the $X_1$-$X_2$ solitons for the defocusing reverse-space NLS equation \eqref{s-NLS-df}:
shape and motion of the envelope $|q|^2$ resulting from \eqref{q-snls-rat}.
(a) 
3D plot for $s_0=1+1.4i,~s_1=1+i$.~
(b) 
3D plot for  $s_0=1+2i,~s_1=4+i$.~
(c) 
3D plot for  $s_0=1+2.8i,~s_1=i$.~
(d) 
3D plot for  $s_0=i,~s_1=i$.~
In (a), $X_1$-soliton and $X_2$-soliton are both bright solitons, while the amplitude of $X_1$-soliton is bigger than $X_2$-soliton;
in (b), the $X_1$-soliton is dark soliton while the $X_2$-soliton is bright soliton;
in (c), exactly the opposite conclusion;
in (d), $X_2$-soliton vanishes in the background $|q_0|^2=1$.
}
\label{fig-9}
\end{figure}



\subsection{The  defocusing reverse-time nonlocal  NLS equation}\label{sec-5-3}

For the defocusing reverse-time NLS equation
\begin{equation}\label{t-NLS-df}
iq_t=q_{xx}-2 q^2q(-t)
\end{equation}
with nonzero background $q_0=e^{2it}$,
we can analyze solutions resulting from $k_1=ib$, $b\in \mathbb{R}$,
as we have done in Sec.\ref{sec-5-2} for the reverse-space nonlocal NLS equation \eqref{s-NLS-df}.
However, it turns out that
the analysis procedure of these solutions and their dynamics are all similar to those in
Sec.\ref{sec-5-2} for   equation \eqref{s-NLS-df}.
Let us explain the statement below.

Rewrite \eqref{5.30b} and \eqref{5.30c} as
\begin{subequations}\label{5.42}
\begin{align}
&\phi_{(1)}^+(x,t,\alpha_1,\beta_1)=\left (\alpha_1  e^{\eta_1(x,t)}
+\beta_1 e^{-\eta_1(x,t)}\right)e^{it},\label{5.42a}\\
&\psi_{(1)}^+(x,t,\alpha_1,\beta_1)=\left( \alpha_1 \tilde{d}(k_1) e^{\eta_1(x,t)}
-\beta_1 \tilde{c}(k_1) e^{-\eta_1(x,t)}\right)e^{-it},\label{5.42b}
\end{align}
where
\begin{equation}
\eta_1(x,t)=\sqrt{k^2_1+1} (x-2ik_1 t),~~
\tilde{c}(k_1)=\sqrt{k^2_1+1}+k_1,~~\tilde{d}(k_1)=\sqrt{k^2_1+1}-k_1.
\end{equation}
\end{subequations}
According to Sec.\ref{sec-4-4-2}, for the reverse-space NLS equation \eqref{s-NLS-df}, the vectors
$\phi$ and $\psi$ in the $2\times 2 $ double Wronskians $f$ and $g$ are
\begin{equation}\label{5.43}
\phi=\phi_{[x]}(x,t,\alpha_1,\beta_1)=
\left(\begin{array}{c}
       \phi_{(1)}^+(x,t,\alpha_1,\beta_1)\\
       \psi_{(1)}^{+*}(-x,t,\alpha_1,\beta_1)
\end{array}\right),~~ \psi=\psi_{[x]}(x,t,\alpha_1,\beta_1)=T \phi_{[x]}^*(-x,t,\alpha_1,\beta_1),
\end{equation}
and for the reverse-time NLS equation \eqref{t-NLS-df}, we have
\begin{equation}
\phi=\phi_{[t]}(x,t,\alpha_1,\beta_1)=
\left(\begin{array}{c}
       \phi_{(1)}^+(x,t,\alpha_1,\beta_1)\\
       \psi_{(1)}^{+}(x,-t,\alpha_1,\beta_1)
\end{array}\right),~~ \psi=\psi_{[t]}(x,t,\alpha_1,\beta_1)=T \phi_{[t]}(x,-t,\alpha_1,\beta_1),
\end{equation}
where $T=\Big (\begin{smallmatrix}0& 1\\ \gamma & 0\end{smallmatrix}\Big )$
with $\gamma=\pm 1$,
the subscripts $[x]$ and $[t]$ stand for the reverse-space and reverse-time, respectively.
It can be verified that, when $k_1=ib$, $b\in \mathbb{R}$ (as we have taken in Sec.\ref{sec-5-2}),
we have
\begin{equation}
\eta_1(x,t)=\eta^*_1(x,t),~~
\eta_1(-x,t)=-\eta(x,-t),~~ \tilde{c}^*(k_1)=\tilde{d}(k_1).
\end{equation}
It then follows that
\begin{equation}
 \phi_{[x]}(x,t,\alpha_1,\beta_1) =B \phi_{[t]}(x,t,\beta_1^*,\alpha_1^*)
\end{equation}
and
\begin{equation}
\psi_{[x]}(x,t,\alpha_1,\beta_1)=T \phi_{[x]}^*(-x,t,\alpha_1,\beta_1)
=-B T \phi_{[t]}(x,-t,\beta_1^*,\alpha_1^*)
=-B\psi_{[t]}(x,t,\beta_1^*,\alpha_1^*),
\end{equation}
where
$B=\Big (\begin{smallmatrix}1& 0\\ 0 & -1\end{smallmatrix}\Big )$.
These relations indicate that
$|q|^2$ resulting from the above $\phi$ and $\psi$ for the reverse-time NLS equation \eqref{t-NLS-df}
are similar to those of the reverse-space NLS equation \eqref{s-NLS-df}.
We skip presenting illustrations.

\subsection{The defocusing reverse-space-time nonlocal NLS equation}\label{sec-5-4}

For the solutions of the nonlocal defocusing reverse-space-time NLS equation (i.e. \eqref{tsnls} with $\delta=1$),
\begin{equation}\label{ts-NLS-df}
iq_t=q_{xx}-2 q^2q(-x,-t)
\end{equation}
with the plane wave background $q_0=e^{2it}$,
the matrices $A$ and $T$ take the form (see Table \ref{table-5-2})
\begin{align}
 A=\left(\begin{array}{cc}
                   \mathbf{K}_{m+1} & \mathbf{0}_{m+1} \\
                   \mathbf{0}_{m+1} & \mathbf{H}_{m+1}
                 \end{array}
               \right),\quad
               T=\left(
     \begin{array}{cc}
       i\mathbf{I}_{m+1} & \mathbf{0}_{m+1} \\
       \mathbf{0}_{m+1} & -i\mathbf{I}_{m+1}
     \end{array}
   \right).
\end{align}
Solution  to equation \eqref{ts-NLS-df} with the background solution $q_0=e^{2it}$ is written as
\begin{equation*}
q=q_0+\frac{g}{f}.
\end{equation*}
For the case of $m=0$, (i.e. one-soliton case), there are
\begin{equation}\label{5.50}
f=\left | \begin{smallmatrix}
            \phi_{(1)}^+ &~ i\phi_{(1)}^{+}(-x,-t)\\
            \phi_{(1)}^{-} &~ -i\phi_{(1)}^{-}(-x,-t)
            \end{smallmatrix}
            \right|,~~
g=2 \left | \begin{smallmatrix}
            \phi_{(1)}^+ &~ k_1\phi_{(1)}^+\\
            \phi_{(1)}^{-} &~ h_1\phi_{(1)}^{-}
            \end{smallmatrix}
    \right|,
\end{equation}
where in light of \eqref{entry-1},
$$\phi_{(1)}^+=\Phi(k_1,\hat{c}_1^+,i\hat{c}_1^+),~\phi_{(1)}^-=\Phi(h_1,\hat{c}_1^-,-i\hat{c}_1^-),$$
$\Phi$ is defined in \eqref{sol-Lax pair-Phi-st},
$k_j, h_j \in \mathbb{C}$, $\hat{c}_j^+$ and $\hat{c}_j^-$ can be
arbitrary complex functions of $k_j$ and $h_j$, respectively.
For the case of $m=1$, (i.e., two-soliton case), we have
\begin{equation}\label{5.51}
f=\left| \begin{smallmatrix}
            \phi_{(1)}^+    &~k_1\phi_{(1)}^+      &~ i\phi_{(1)}^{+}(-x,-t)  &~ -i k_1\phi_{(1)}^{+}(-x,-t)\\
            \phi_{(2)}^+    &~k_2\phi_{(2)}^+     &~ i\phi_{(2)}^{+}(-x,-t)   &~ -i k_2\phi_{(2)}^{+}(-x,-t)\\
            \phi_{(1)}^{-}  &~h_1\phi_{(1)}^{-}      &~ -i\phi_{(1)}^{-}(-x,-t)  &~ i h_1\phi_{(1)}^{-}(-x,-t)\\
            \phi_{(2)}^{-}  &~h_2\phi_{(2)}^{-}      &~ -i\phi_{(2)}^{-}(-x,-t)  &~ i h_2\phi_{(2)}^{-}(-x,-t)
            \end{smallmatrix}
           \right|,~~
g=\left| \begin{smallmatrix}
            \phi_{(1)}^+    &~k_1\phi_{(1)}^+      &~ k_1^2\phi_{(1)}^{+}(-x,-t)  &~ i  \phi_{(1)}^{+}(-x,-t)\\
            \phi_{(2)}^+    &~k_2\phi_{(2)}^+     &~ k_2^2\phi_{(2)}^{+}(-x,-t)   &~ i \phi_{(2)}^{+}(-x,-t)\\
            \phi_{(1)}^{-}  &~h_1\phi_{(1)}^{-}      &~ h_1^2\phi_{(1)}^{-}(-x,-t)  &~ -i \phi_{(1)}^{-}(-x,-t)\\
            \phi_{(2)}^{-}  &~h_2\phi_{(2)}^{-}      &~ h_2^2\phi_{(2)}^{-}(-x,-t)  &~ -i  \phi_{(2)}^{-}(-x,-t)
            \end{smallmatrix}
            \right|,
\end{equation}
where $k_1, k_2, h_1,h_2\in \mathbb{C}$ and
$$\phi_{(j)}^+=\Phi(k_j,\hat{c}_j^+,i\hat{c}_j^+),~~\phi_{(j)}^-=\Phi(h_j,\hat{c}_j^-,-i\hat{c}_j^-),~j=1,2,$$
$\Phi$ is defined in \eqref{sol-Lax pair-Phi-st},
$k_j, h_j, \hat{c}_j^{\pm} \in \mathbb{C}$.

In the following, we list some solutions of reverse-space-time NLS equation \eqref{ts-NLS-df}
for the special form of $\mathbf{H}_{m+1}$.

When $\mathbf{H}_{m+1}=-\mathbf{K}_{m+1}$, one-soliton solution of equation \eqref{ts-NLS-df} reads
\begin{equation}\label{sol-reversed-x-t-1}
q=\frac{G}{F},
\end{equation}
where
\begin{align*}
G=& e^{2it}\left\{1-2k_1\left[\cosh(2\sqrt{k_1^2+1}x)+\sqrt{k_1^2+1}
\sin(4k_1\sqrt{k_1^2+1}t)+ik_1\cos(4k_1\sqrt{k_1^2+1}t)\right]\right\},\\
F=& k_1\cosh(2\sqrt{k_1^2+1}x)-i\cos(4k_1\sqrt{k_1^2+1}t),
\end{align*}
and we have taken $\hat{c}_1^+=\hat{c}_1^-=1$.
Nonsingular solution occurs when $k_1\in \mathbb{R}$.
In this case, the envelope of \eqref{sol-reversed-x-t-1} describes a wave
moving along the straight line $x=0$ and  oscillating in time
with a period $P_t=\frac{\pi}{2|k_1|\sqrt{k_1^2+1}}$.
Such a solution is depicted in Fig.\ref{fig-14}(a).
Note that the period tends to 0 when $|k_1|$ is large enough,
and such a change is illustrated in Fig.\ref{fig-14}(b).


\captionsetup[figure]{labelfont={bf},name={Fig.},labelsep=period}
\begin{figure}[h]
\centering
\subfigure[ ]{
\begin{minipage}[t]{0.4\linewidth}
\centering
\includegraphics[width=2.0in]{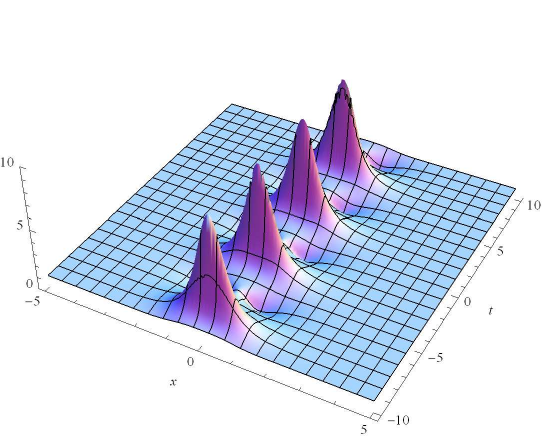}
\end{minipage}%
}%
\subfigure[ ]{
\begin{minipage}[t]{0.4\linewidth}
\centering
\includegraphics[width=2.0in]{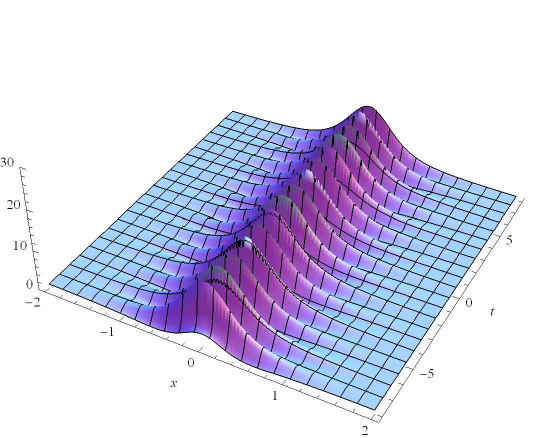}
\end{minipage}%
}%
\caption{Shape and motion of stationary breather of $|q|^2$ resulting from \eqref{sol-reversed-x-t-1}
for the nonlocal defocusing reverse-space-time NLS equation \eqref{ts-NLS-df}.
(a) 3D-plot for  $k_1=0.3$.~
(b) 3D-plot for  $k_1=1$.
}
\label{fig-14}
\end{figure}


When $\mathbf{H}_{m+1}=-\mathbf{K}^*_{m+1}$, by calculation, we get one-soliton solution which reads
\begin{equation}\label{sol-reversed-x-t-2}
q=e^{2it}+\frac{G}{F},
\end{equation}
where
\begin{equation*}
\begin{aligned}
G=&-2ia_1[(1-a_1^2-b_1^2+v_{11}^2+v_{12}^2-2ia_1)\cosh(2B_1)+2(i(b_1v_{11}-a_1v_{12})+v_{12})\sinh(2B_1)\\
&+(-1+a_1^2+b_1^2+v_{11}^2+v_{12}^2+2ia_1)\cos(2B_2)
+2i((a_1v_{11}+b_1v_{12})+iv_{11})\sin(2B_2)]e^{2it},\\
F=&(-1+a_1^2+b_1^2+v_{11}^2+v_{12}^2+2ia_1)\cosh(2B_1)
+(1-a_1^2-b_1^2+v_{11}^2+v_{12}^2-2ia_1)\cos(2B_2),
\end{aligned}
\end{equation*}
we take $\hat{c}_1^+=\hat{c}_1^-=1$ and denote
\begin{equation*}
\begin{aligned}
k_1 =& a_1 + ib_1, ~\sqrt{k_1^2+1}=v_{11}+iv_{12}, ~~a_1,b_1,v_{11},v_{12}\in  \mathbb{R},\\
B_1=&v_{11}x+2(a_1v_{12}+b_1v_{11})t,~B_2=v_{12}x+2(b_1v_{12}-a_1v_{11})t.
\end{aligned}
\end{equation*}
The envelope of  \eqref{sol-reversed-x-t-2} behaves like breather
 which travels along the line $x=-\frac{2(a_1v_{12}+b_1v_{11})t}{v_{11}}$.
An illustration is given in Fig.\ref{fig-15}(a).
Note that $a_1=0$ yields trivial solution and $b_1=0$ leads to the solution \eqref{sol-reversed-x-t-1}.
We can also calculate two-soliton solution from \eqref{5.51}
of this case, where we take $h_j=-k{_j^*},j=1,2$.
Its envelope describes  a head-on collision of two breathers, as shown in Fig.\ref{fig-15}(b).


\captionsetup[figure]{labelfont={bf},name={Fig.},labelsep=period}
\begin{figure}[htbp]
\centering
\subfigure[ ]{
\begin{minipage}[t]{0.4\linewidth}
\centering
\includegraphics[width=2.0in]{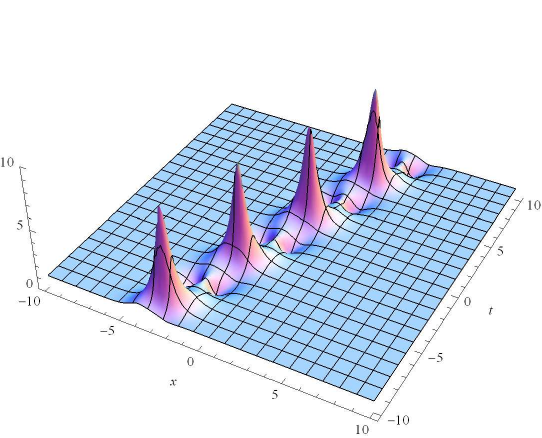}
\end{minipage}%
}%
\subfigure[ ]{
\begin{minipage}[t]{0.4\linewidth}
\centering
\includegraphics[width=2.0in]{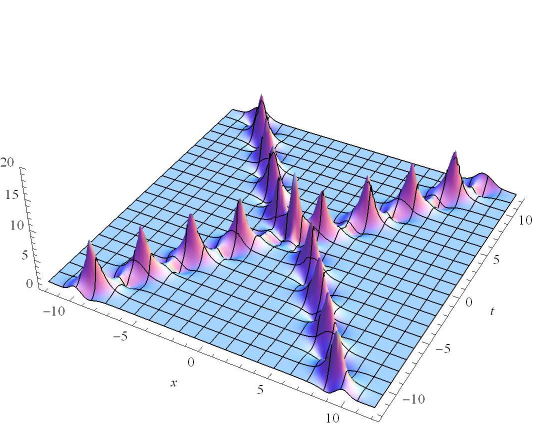}
\end{minipage}%
}%
\caption{Shape and motion of the envelope $|q|^2$ of the
nonlocal defocusing reverse-space-time NLS equation \eqref{ts-NLS-df}.
(a) 3D-plot for the one breather resulting \eqref{sol-reversed-x-t-2} with $k_1=0.3-0.15i$.~
(b) 3D-plot for two breathers interaction resulting from \eqref{5.51} with $k_1=0.5+0.3i,k_2=0.5-0.3i,\hat{c}_1^{\pm}=\hat{c}_2^{\pm}=1$.
}
\label{fig-15}
\end{figure}


Finally, $\mathbf{H}_{m+1}=\mathbf{K}^*_{m+1}$ with $k_i\in \mathbb{C}$,
we claim that dynamics of solutions are similar to the reverse-space case.
The explanation is similar to what we have done in Sec.\ref{sec-5-3}.
In fact, write the vectors $\phi$ and $\psi$ with $h_1=k_1^*$ in the double Wronskians \eqref{5.50}
as
\begin{eqnarray}
          \phi=\phi_{[x,t]}=\left(
           \begin{array}{c}
             \phi_{(1)}^+(x,t,k_1,\hat{c}_1,i\hat{c}_1)\\
             \phi_{(1)}^-(x,t,k_1^*,\hat{c}_1,-i\hat{c}_1)
           \end{array}
         \right),
         \psi=\psi_{[x,t]}=\left(
           \begin{array}{c}
            i\phi_{(1)}^+(-x,-t,k_1,\hat{c}_1,i\hat{c}_1)\\
            -i\phi_{(1)}^-(-x,-t,k_1^*,\hat{c}_1,-i\hat{c}_1)
           \end{array}
         \right),
\end{eqnarray}
where the subscript  $[x,t]$ stands for the reverse-space-time.
We rewrite the vectors in \eqref{5.43} (for the reverse-space NLS equation \eqref{s-NLS-df}) as
\begin{eqnarray}
          \phi=\phi_{[x]}=\left(
           \begin{array}{c}
             \phi_{(1)}^+(x,t,k_1,c,d)\\
             \psi_{(1)}^{+*}(-x,t,k_1,c,d)
           \end{array}
         \right),
         ~\psi=\psi_{[x]}=\left(
           \begin{array}{c}
            \psi_{(1)}^+(x,t,k_1,c,d)\\
            -\phi_{(1)}^{+*}(-x,t,k_1,c,d)
           \end{array}
         \right).
\end{eqnarray}
In the special case $c_1=1, d_1=i, \hat{c}_1=1$, the following relations hold:
\begin{equation}\label{raletion-phi-psi}
\phi_{[x]}=C\phi_{[x,t]},~~\psi_{[x]}=C\psi_{[x,t]},
\end{equation}
where $C=\Big(\begin{smallmatrix} 1 & 0 \\ 0 & -i \end{smallmatrix}\Big)$.
Besides, the construction of $f$ and $g$, the same
$A=\Big(\begin{smallmatrix}  k_1 & 0 \\ 0 & k_1^* \end{smallmatrix}\Big)$
is used.
This indicates that in the case $\mathbf{H}_{m+1}=\mathbf{K}^*_{m+1}$,
 the analysis of dynamics of solutions for the revere-space-time NLS equation
will be similar to those of the revere-space  NLS equation that has been investigated in Sec.\ref{sec-5-2}.
We skip it.

\section{Conclusions} \label{sec-6}

In this paper, by means of the bilinearisation-reduction approach,
solutions for the classical and nonlocal NLS equations with nonzero backgrounds were constructed
in a systematical way. Solutions are presented in terms of quasi double Wronskians.
Asymptotic analysis and illustrations were provided to understand dynamics of solutions,
in particular, breathers and rogue waves of the classical focusing NLS equation \eqref{cnls}
and solitons and rational solutions of the reverse-space nonlocal NLS equation \eqref{snls}.
One can see that the nonzero backgrounds bring more interesting behaviors in the dynamics of solutions.
In addition, although the solutions are given in terms of quasi double Wronskians
(not standard double Wronskians), the reduction technique is still effective.
In light of Theorem \ref{th2}
one can also use the double Wronskians given in Theorem \ref{th2} if $q_0$ is independent of $x$.
This bilinearisation-reduction technique can also be extended to the other integrable equations
with nonzero backgrounds, which will be investigated elsewhere.

\subsection*{Acknowledgments}

The authors are grateful to the referees for their expertise and invaluable comments.
This project is  supported by the NSF of China (Nos. 11875040, 12271334, 12126352, 12126343).

\begin{appendix}

\section{Proof of Theorem \ref{th1}} \label{app-A}

To prove Theorem \ref{th1}, we first recall the following lemmas.

\begin{lemma}\label{lemma 1}\cite{FN-PLA-1983}
Suppose that $\mathbf{D}$ is an arbitrary $s\times (s-2)$ matrix, and $\mathbf{a}$, $\mathbf{b}$,
$\mathbf{c}$ and $\mathbf{d}$ are column vectors of order $s$, then
\begin{equation}\label{id-1}
|\mathbf{D}, \mathbf{a},\mathbf{b}||\mathbf{D},\mathbf{c},\mathbf{d}|
-|\mathbf{D}, \mathbf{a},\mathbf{c}||\mathbf{D},\mathbf{b},\mathbf{d}|
+|\mathbf{D}, \mathbf{a},\mathbf{d}||\mathbf{D},\mathbf{b}, \mathbf{c}|=0.
\end{equation}
\end{lemma}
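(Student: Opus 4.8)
This is the elementary three‑term Plücker relation (a Laplace‑expansion identity for determinants), so I would not invoke Grassmannian machinery but instead exhibit a single auxiliary determinant that can be evaluated in two ways. The plan is to build a $2s\times 2s$ determinant $\Delta$ assembled from $\mathbf{D}$ and $\mathbf{a},\mathbf{b},\mathbf{c},\mathbf{d}$, to show on one hand that $\Delta=0$ for structural reasons, and on the other hand that a generalized (complementary‑minor) Laplace expansion of $\Delta$ reproduces, up to a nonzero constant, the alternating sum on the left of \eqref{id-1}. Equating the two evaluations then forces that sum to vanish.

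Concretely, I would take
\[
\Delta=\left|\begin{array}{cccccc}
\mathbf{D} & \mathbf{0} & \mathbf{a} & \mathbf{b} & \mathbf{c} & \mathbf{d}\\
\mathbf{0} & \mathbf{D} & \mathbf{a} & \mathbf{b} & \mathbf{c} & \mathbf{d}
\end{array}\right|,
\]
where each of the two horizontal blocks has $s$ rows, each $\mathbf{D}$-block has $s-2$ columns, and $\mathbf{a},\mathbf{b},\mathbf{c},\mathbf{d}$ are single columns, so that $\Delta$ is genuinely $2s\times 2s$. Subtracting the top $s$ rows from the bottom $s$ rows leaves $\Delta$ unchanged and converts the bottom block into $(-\mathbf{D}\ \ \mathbf{D}\ \ \mathbf{0}\ \ \mathbf{0}\ \ \mathbf{0}\ \ \mathbf{0})$; these $s$ rows span a space of dimension at most $\mathrm{rank}\,\mathbf{D}\le s-2$, hence are linearly dependent, so $\Delta=0$.

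For the second evaluation I would expand the original $\Delta$ along its bottom $s$ rows. Since the first $s-2$ bottom entries (under the left $\mathbf{D}$-block) are zero, every nonvanishing bottom minor must select its $s$ columns from the remaining $s+2$ columns, i.e. it omits exactly two of them, and the complementary top minor is built from the left $\mathbf{D}$-block together with the two omitted columns. If either omitted column lies in the right $\mathbf{D}$-block, the complementary top minor acquires a zero column and vanishes; therefore only the six terms omitting two of $\mathbf{a},\mathbf{b},\mathbf{c},\mathbf{d}$ survive, each a product $|\mathbf{D},\cdot,\cdot|\,|\mathbf{D},\cdot,\cdot|$ over a complementary pairing of $\{\mathbf{a},\mathbf{b},\mathbf{c},\mathbf{d}\}$. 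Collecting the three distinct pairings with their Laplace signs yields the left side of \eqref{id-1}.

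The one genuinely delicate step is the sign bookkeeping: I must verify that the surviving six terms combine into the alternating pattern $+,-,+$ (so that $\Delta$ becomes a nonzero multiple of that sum) rather than cancelling in pairs and giving the vacuous $0=0$. I would track the Laplace permutation signs carefully here. As an independent sanity check, the whole identity also follows conceptually by passing to the quotient $V/\langle\text{columns of }\mathbf{D}\rangle$: each $|\mathbf{D},\mathbf{u},\mathbf{v}|$ becomes, up to the common $\mathbf{D}$-factor, the $2\times2$ bracket $[\bar{\mathbf{u}},\bar{\mathbf{v}}]$ of the images in this generically two-dimensional space, and for four vectors in a plane the relation $[\bar{\mathbf{a}},\bar{\mathbf{b}}][\bar{\mathbf{c}},\bar{\mathbf{d}}]-[\bar{\mathbf{a}},\bar{\mathbf{c}}][\bar{\mathbf{b}},\bar{\mathbf{d}}]+[\bar{\mathbf{a}},\bar{\mathbf{d}}][\bar{\mathbf{b}},\bar{\mathbf{c}}]=0$ is an immediate computation.
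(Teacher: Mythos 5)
Your proof is correct, but note that the paper itself gives no proof of this lemma: it is quoted from Freeman and Nimmo \cite{FN-PLA-1983} as a known identity, so there is no internal argument to compare against. What you propose --- assemble the $2s\times 2s$ determinant from two block-shifted copies of $\mathbf{D}$ and duplicated columns $\mathbf{a},\mathbf{b},\mathbf{c},\mathbf{d}$, show it vanishes by a rank count on the bottom $s$ rows after the row subtraction, and then recover the three-term alternating sum by Laplace expansion along those rows --- is precisely the standard proof of this Pl\"ucker-type relation and essentially the argument in the cited source, so in that sense you have reconstructed ``the paper's'' proof. Both of your evaluations of $\Delta$ are sound: the subtracted bottom block $(-\mathbf{D}\ \ \mathbf{D}\ \ \mathbf{0}\ \ \mathbf{0}\ \ \mathbf{0}\ \ \mathbf{0})$ has row rank at most $s-2<s$, and the zero blocks correctly kill every complementary-minor product except the six that omit two of $\mathbf{a},\mathbf{b},\mathbf{c},\mathbf{d}$. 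The only step you leave open is the sign bookkeeping, where one must check that the six survivors pair up into the three products with a common nonzero factor and the alternating $+,-,+$ pattern rather than cancelling; you flag this honestly, and your quotient argument closes it completely and more cleanly: if $\operatorname{rank}\mathbf{D}<s-2$ every bracket $|\mathbf{D},\mathbf{u},\mathbf{v}|$ vanishes and the identity is trivial, while if $\operatorname{rank}\mathbf{D}=s-2$ the alternating bilinear form $(\mathbf{u},\mathbf{v})\mapsto|\mathbf{D},\mathbf{u},\mathbf{v}|$ kills the column span of $\mathbf{D}$, hence descends to the two-dimensional quotient and is a fixed multiple of the $2\times 2$ bracket there, for which the three-term relation is a one-line computation. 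Either route is a legitimate self-contained proof; the quotient version is the one I would record, since it avoids the permutation-sign audit entirely.
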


\begin{lemma}\label{lemma 2}\cite{DJZ-2006,ZhangDJ-RMP-2014}
Suppose that  $\Xi=(a_{js})_{M\times M}$ is an $M\times M$ matrix with
column vector set $\{\alpha_j\}$ and  row vector set $\{\beta_j\}$.
${\cal{P}}=(P_{js})_{M\times M}$ is an $M\times M$ operator matrix where
each $P_{js}$ is an operator. Then we have
\begin{equation}\label{id-D}
\sum^M_{s=1} |\alpha_1, \cdots,\alpha_{s-1},~C_s\circ\alpha_s, ~\alpha_{s+1},\cdots,\alpha_M|
 =\sum^M_{j=1}\left|~\begin{matrix}
 \beta_1\\
 \vdots\\
 \beta_{j-1}\\
 R_j\circ \beta_j\\
 \beta_{j+1}\\
 \vdots\\
 \beta_M
\end{matrix}~\right|,
\end{equation}
where $\{C_s\}$ and $\{R_j\}$ are respectively the column and row vector sets of  ${\cal{P}}$,
and ``$\circ$'' denotes $C_s \circ\alpha_s = \left(P_{1s} a_{1s}, ~P_{2s} a_{2s},\cdots, P_{Ms} a_{Ms}\right)^T$
and $R_j\circ\beta_j=\left(P_{j1} a_{j1}, ~P_{j2} a_{j2},\cdots, P_{jM} a_{jM}\right)$.
\end{lemma}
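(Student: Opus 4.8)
The plan is to prove the identity \eqref{id-D} by interpreting each summand on either side as an ordinary Laplace (cofactor) expansion of $\Xi$, and then observing that both sides collapse to one and the same double sum. Throughout, let $\Delta_{js}$ denote the $(j,s)$ cofactor of $\Xi$, that is, the signed $(M-1)\times(M-1)$ minor obtained by deleting the $j$-th row and $s$-th column of $\Xi$. The one structural fact I will lean on is that $\Delta_{js}$ is assembled entirely from the undisturbed entries $a_{pq}$ of $\Xi$ and therefore carries no operator from $\mathcal{P}$.

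First I would treat the left-hand side. Its $s$-th summand is the determinant of the matrix obtained from $\Xi$ by replacing only the $s$-th column $\alpha_s$ with $C_s\circ\alpha_s=(P_{1s}a_{1s},\dots,P_{Ms}a_{Ms})^T$. Since exactly one column is altered, expanding this determinant along that column gives $\sum_{j=1}^M (P_{js}a_{js})\,\Delta_{js}$, the cofactors being those of the remaining, untouched columns. Summing over $s$ then yields $\sum_{s=1}^M\sum_{j=1}^M P_{js}a_{js}\,\Delta_{js}$.

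Next I would carry out the symmetric computation on the right-hand side, expanding along rows instead. The $j$-th summand replaces the $j$-th row $\beta_j$ by $R_j\circ\beta_j=(P_{j1}a_{j1},\dots,P_{jM}a_{jM})$; expanding along that row produces $\sum_{s=1}^M (P_{js}a_{js})\,\Delta_{js}$, again with the operator-free cofactors of $\Xi$. Summing over $j$ gives $\sum_{j=1}^M\sum_{s=1}^M P_{js}a_{js}\,\Delta_{js}$, which is precisely the double sum already obtained from the left-hand side; hence the two sides agree.

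The only point that genuinely needs care — and the place where a reader could stumble — is justifying that the cofactor appearing in each single-column (respectively single-row) expansion really is the operator-free cofactor $\Delta_{js}$ of $\Xi$ itself, not some object depending on $\mathcal{P}$. This is immediate once one notes that in every summand exactly one column or one row is modified, so each complementary minor is formed from authentic entries of $\Xi$. With that observation in hand, the identity reduces to the harmless statement that the double sum $\sum_{j,s} P_{js}a_{js}\,\Delta_{js}$ may be organized by accumulating the column-wise contributions first or the row-wise contributions first.
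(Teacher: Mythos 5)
Your proof is correct. Note that the paper itself supplies no proof of this lemma: it is imported verbatim from the cited references \cite{DJZ-2006,ZhangDJ-RMP-2014}, so there is no in-paper argument to compare against; your double cofactor-expansion argument is the standard one used in those sources. The essential point --- that because the operators in $C_s\circ\alpha_s$ (resp.\ $R_j\circ\beta_j$) are applied to the entries \emph{before} the determinant is formed, each summand is an ordinary determinant with exactly one column (resp.\ row) altered, so its Laplace expansion along that column (resp.\ row) involves only the operator-free cofactors $\Delta_{js}$ of $\Xi$ --- is exactly the subtlety you isolate, and once it is in place both sides reduce to the same finite double sum $\sum_{j,s}(P_{js}a_{js})\,\Delta_{js}$.
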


\noindent
\textit{Proof of Theorem \ref{th1}}

Direct calculation yields
\begin{align*}
f_x=&|\h\phi_{m-1},\phi_{m+1};\h\psi_m|+|\h\phi_{m};\h\psi_{m-1},\psi_{m+1}|,\\
f_{xx}=&|\h\phi_{m-2},\phi_m,\phi_{m+1};\h\psi_m|+|\h\phi_{m-1},\phi_{m+2};\h\psi_m|+
2|\h\phi_{m-1},\phi_{m+1};\h\psi_{m-1},\psi_{m+1}|\\
&+|\h\phi_{m};\h\psi_{m-2},\psi_m,\psi_{m+1}|+|\h\phi_{m};\h\psi_{m-1},\psi_{m+2}|\\
&+2q_0|\h\phi_{m-1};\h\psi_{m+1}|-2r_0|\h\phi_{m+1};\h\psi_{m-1}|,
\\
if_t=&-2|\h\phi_{m-2},\phi_m,\phi_{m+1};\h\psi_m|+2|\h\phi_{m-1},\phi_{m+2};\h\psi_m|\\
&+2|\h\phi_{m};\h\psi_{m-2},\psi_m,\psi_{m+1}|-2|\h\phi_{m};\h\psi_{m-1},\psi_{m+2}|\\
&+2q_0|\h\phi_{m-1};\h\psi_{m+1}|+2r_0|\h\phi_{m+1};\h\psi_{m-1}|,
\\
ig_{t}/2=&-2|\h\phi_{m-1},\phi_{m+1},\phi_{m+2};\h\psi_{m-1}|+2|\h\phi_{m},\phi_{m+3};\h\psi_{m-1}|\\
&+2|\h\phi_{m+1};\h\psi_{m-3},\psi_{m-1},\psi_{m}|-2|\h\phi_{m+1};\h\psi_{m-2},\psi_{m+1}|\\
&+2q_0(|\h\phi_{m};\h\psi_{m-1},\psi_{m+2}|+|\h\phi_{m-1},\phi_{m+1};\h\psi_{m-1},\psi_{m+1}|
+|\h\phi_{m-2},\phi_{m},\phi_{m+1};\h\psi_{m}|)\\
&-q_0r_0g-q_{0x}f_x,
\\
g_x/2=&|\h\phi_{m},\phi_{m+2};\h\psi_{m-1}|+|\h\phi_{m+1};\h\psi_{m-2},\psi_{m}|-q_0f_x,\\
g_{xx}/2=&|\h\phi_{m-1},\phi_{m+1},\phi_{m+2};\h\psi_{m-1}|+|\h\phi_{m},\phi_{m+3};\h\psi_{m-1}|+
2|\h\phi_{m},\phi_{m+2};\h\psi_{m-2},\psi_{m}|\\
&+|\h\phi_{m+1};\h\psi_{m-3},\psi_{m-1},\psi_{m}|+|\h\phi_{m+1};\h\psi_{m-2},\psi_{m+1}|\\
&+q_0(|\h\phi_{m};\h\psi_{m-1},\psi_{m+2}|-|\h\phi_{m-1},\phi_{m+2};\h\psi_{m}|\\
&-|\h\phi_{m};\h\psi_{m-2},\psi_{m},\psi_{m+1}|+|\h\phi_{m-2},\phi_{m},\phi_{m+1};\h\psi_{m}|)
-q_0f_{xx}-q_{0x}f_x.
\end{align*}
Next, using Lemma \ref{lemma 2}
we derive some relations of quasi double Wronskians.
Taking $\Xi=|\h\phi_{m}; \h\psi_{m}|$ and (for $1\leq j \leq 2m+2$)
\[ P_{js}=\left\{
\begin{array}{ll}
\partial_x-(-1)^mq_0\psi_m \circ\phi_m^{-1}\circ,~ & 0\leq s  \leq m,\\
-\partial_x+(-1)^mr_0\phi_m\circ\psi_m^{-1}\circ,~ & m+1\leq s \leq 2m+2,
\end{array}
\right.
\]
where the ``$\circ$'' is defined as in Lemma \ref{lemma 2}.
One can find from the relation \eqref{id-D} that
\begin{equation*}
(\textrm{Tr}A)f=|\h\phi_{m-1},\phi_{m+1};\h\psi_m|+|\h\phi_{m};\h\psi_{m-1},\psi_{m+1}|,
\end{equation*}
where $\textrm{Tr}A$ stands for the trace of matrix $A$.
Similarly, we can get
\begin{equation*}
\begin{aligned}
(\textrm{Tr}A)^2f=&|\h\phi_{m-2},\phi_m,\phi_{m+1};\h\psi_m|+|\h\phi_{m-1},\phi_{m+2};\h\psi_m|
-2|\h\phi_{m-1},\phi_{m+1};\h\psi_{m-1},\psi_{m+1}|\\
&+|\h\phi_{m};\h\psi_{m-2},\psi_m,\psi_{m+1}|+|\h\phi_{m};\h\psi_{m-1},\psi_{m+2}|.
\end{aligned}
\end{equation*}
Substituting them into the left hand side of \eqref{nls-b1} one obtains
\begin{eqnarray*}
ff_{xx}-f_xf_x&=&4|\h\phi_{m-1},\phi_{m+1};\h\psi_{m-1},\psi_{m+1}|f-4|\h\phi_{m-1},\phi_{m+1};\h\psi_m|
|\h\phi_{m};\h\psi_{m-1},\psi_{m+1}|\\
&&+2q_0|\h\phi_{m-1};\h\psi_{m+1}|f-2r_0|\h\phi_{m+1};\h\psi_{m-1}|f\\
&=&4|\h\phi_{m-1};\h\psi_{m+1}||\h\phi_{m+1};\h\psi_{m-1}|
+2q_0|\h\phi_{m-1};\h\psi_{m+1}|f-2r_0|\h\phi_{m+1};\h\psi_{m-1}|f\\
&=&-gh-q_0hf-r_0gf,
\end{eqnarray*}
where we have made use of
$$f[(\textrm{Tr}A)f]=[(\textrm{Tr}A)f]^2$$
and the identity
\begin{eqnarray*}
&&|\h\phi_{m-1},\phi_{m+1};\h\psi_{m-1},\psi_{m+1}|f-4|\h\phi_{m-1},\phi_{m+1};\h\psi_m|
|\h\phi_{m};\h\psi_{m-1},\psi_{m+1}|\\
&&+|\h\phi_{m-1};\h\psi_{m+1}||\h\phi_{m+1};\h\psi_{m-1}|=0,
\end{eqnarray*}
which is derived from Lemma \ref{lemma 1}.
Thus, \eqref{nls-b1} is proved.

For \eqref{nls-b2}, we first derive the following relations using Lemma \ref{lemma 2},
\begin{align*}
(\mathrm{Tr}A)|\h\phi_{m+1};\h\psi_{m-1}|=&|\h\phi_{m},\phi_{m+2};\h\psi_{m-1}|
-|\h\phi_{m+1};\h\psi_{m-2},\psi_{m}|,\\
(\mathrm{Tr}A)^2|\h\phi_{m+1};\h\psi_{m-1}|=&|\h\phi_{m-1},\phi_{m+1},\phi_{m+2};\h\psi_{m-1}|
+|\h\phi_{m},\phi_{m+3};\h\psi_{m-1}|
-2|\h\phi_{m},\phi_{m+2};\h\psi_{m-2},\psi_{m}|\\
&+|\h\phi_{m+1};\h\psi_{m-3},\psi_{m-1},\psi_{m}|+|\h\phi_{m+1};\h\psi_{m-2},\psi_{m+1}|.
\end{align*}
Substituting the derivatives of $f$ and $g$ into equation \eqref{nls-b2} we have
\begin{align*}
&(f_{xx}g+fg_{xx}-2f_xg_x-ig_tf+if_tg)/2\\
=&(-|\h\phi_{m-2},\phi_m,\phi_{m+1};\h\psi_m|+3|\h\phi_{m-1},\phi_{m+2};\h\psi_m|
+2|\h\phi_{m-1},\phi_{m+1};\h\psi_{m-1},\psi_{m+1}|\\
&+3|\h\phi_{m};\h\psi_{m-2},\psi_m,\psi_{m+1}|
-|\h\phi_{m};\h\psi_{m-1},\psi_{m+2}|)|\h\phi_{m+1};\h\psi_{m-1}|
+4q_0|\h\phi_{m-1};\h\psi_{m+1}||\h\phi_{m+1};\h\psi_{m-1}|\\
&+(3|\h\phi_{m-1},\phi_{m+1},\phi_{m+2};\h\psi_{m-1}|-|\h\phi_{m},\phi_{m+3};\h\psi_{m-1}|
+2|\h\phi_{m},\phi_{m+2};\h\psi_{m-2},\psi_{m}|\\
&-|\h\phi_{m+1};\h\psi_{m-3},\psi_{m-1},\psi_{m}|+3|\h\phi_{m+1};\h\psi_{m-2},\psi_{m+1}|)f\\
&-q_0(|\h\phi_{m-2},\phi_m,\phi_{m+1};\h\psi_m|+|\h\phi_{m-1},\phi_{m+2};\h\psi_m|
+2|\h\phi_{m-1},\phi_{m+1};\h\psi_{m-1},\psi_{m+1}|\\
&+|\h\phi_{m};\h\psi_{m-2},\psi_m,\psi_{m+1}|+|\h\phi_{m};\h\psi_{m-1},\psi_{m+2}|)f
-q_0f_{xx}f+2q_0r_0|\h\phi_{m+1};\h\psi_{m-1}|f\\
&-2(|\h\phi_{m-1},\phi_{m+1};\h\psi_m|+|\h\phi_{m};\h\psi_{m-1},\psi_{m+1}|)
(|\h\phi_{m},\phi_{m+2};\h\psi_{m-1}|+|\h\phi_{m+1};\h\psi_{m-2},\psi_{m}|-q_0f_x)\\
=&-4(|\h\phi_{m-1},\phi_{m+1};\h\psi_m||\h\phi_{m},\phi_{m+2};\h\psi_{m-1}|
+|\h\phi_{m};\h\psi_{m-1},\psi_{m+1}||\h\phi_{m+1};\h\psi_{m-2},\psi_{m}|)\\
&+4(|\h\phi_{m-1},\phi_{m+1},\phi_{m+2};\h\psi_{m-1}|+|\h\phi_{m+1};\h\psi_{m-2},\psi_{m+1}|)f\\
&+4(|\h\phi_{m-1},\phi_{m+2};\h\psi_m|+|\h\phi_{m};\h\psi_{m-2},\psi_m,\psi_{m+1}|)
|\h\phi_{m+1};\h\psi_{m-1}|\\
&-2q_0(f_{xx}f-f_x^2)+2q_0^2|\h\phi_{m-1};\h\psi_{m+1}|f+4q_0|\h\phi_{m-1};\h\psi_{m+1}|
|\h\phi_{m+1};\h\psi_{m-1}|\\
=&-2q_0(f_{xx}f-f_x^2)-q_0^2hf-q_0gh\\
=&-q_0(f_{xx}f-f_x^2)+q_0r_0gf,
\end{align*}
in which the identity
$$f[(\textrm{Tr}A)^2g]=g[(\textrm{Tr}A)^2f]=[(\textrm{Tr}A)f][(\textrm{Tr}A)g]$$
and relations
\begin{align*}
&|\h\phi_{m-1},\phi_{m+2};\h\psi_m|g/2-|\h\phi_{m-1},\phi_{m+1};\h\psi_m|
|\h\phi_{m},\phi_{m+2};\h\psi_{m-1}|
+|\h\phi_{m-1},\phi_{m+1},\phi_{m+2};\h\psi_{m-1}|f=0,\\
&|\h\phi_{m};\h\psi_{m-2},\psi_m,\psi_{m+1}|g/2-|\h\phi_{m};\h\psi_{m-1},\psi_{m+1}|
|\h\phi_{m+1};\h\psi_{m-2},\psi_{m}|
+|\h\phi_{m+1};\h\psi_{m-2},\psi_{m+1}|f=0
\end{align*}
have also been used. Thus, \eqref{nls-b2} has been proved. The third equation \label{nls-b3}
can be proved in a similar way.

Suppose that $A=P^{-1}JP$, i.e. $A$ is similar to $J$.
We introduce $\phi'=P\phi,~ \psi'=P\psi$,
which satisfy \eqref{pp-x} and \eqref{pp-t} with $J$ in place of $A$.
Then, for the quasi double Wronskians, we have $f(\phi',\psi')=|P|f(\phi,\psi)$, $g(\phi',\psi')=|P|g(\phi,\psi)$
and $h(\phi',\psi')=|P|h(\phi,\psi)$,
which means $A$ and any matrix that is similar to it can generate same $q$ and $r$.

\section{Proof of Theorem \ref{th2}}\label{app-B}

For simplicity we denote
$$F=|\h {m};\h {m}|,~~ s=|\h {m+1};\h {m-1}|,~~ H=|\h {m-1};\h {m+1}|.$$
Direct calculation yields
\begin{align*}
F_{x} =&|\h {m-1},m+1;\h m|+|\h m;\h{m-1},{m+1}|,\\
F_{xx} =&|\h{m-2},m,{m+1};\h m|+|\h{m-1},{m+2};\h m|+2|\h{m-1},{m+1};\h {m-1},{m+1}|\\
&+|\h{m};\h{m-2},m,{m+1}|+|\h{m};\h{m-1},{m+2}|,\\
s_x=&|\h{m},{m+2};\h{m-1}|+|\h{m+1};\h{m-2},{m}|,\\
s_{xx}=&|\h{m-1},{m+1},{m+2};\h{m-1}|+|\h{m},{m+3};\h{m-1}|+2|\h{m},{m+2};\h{m-2},{m}|\\
&+|\h{m+1};\h{m-3},{m-1},{m}|+|\h{m+1};\h{m-2},{m+1}|,\\
iF_t=&-2|\h{m-2},m,{m+1};\h m|+2|\h{m-1},{m+2};\h m|+2|\h{m};\h{m-2},m,{m+1}|\\
&-2|\h{m};\h{m-1},{m+2}|+2q_0(-1)^m|\h{m-1};\h (m+1)|+2r_0(-1)^m|\h{m+1};\h(m-1)|,\\
is_{t}=&-2q_0r_0|\h{m+1};\h(m-1)|-2|\h{m-1},{m+1},{m+2};\h{m-1}|+2|\h{m},{m+3};\h{m-1}|\\
&+2|\h{m+1};\h{m-3},{m-1},{m}|-2|\h{m+1};\h{m-2},{m+1}|\\
&+2q_0(-1)^m(-|\h{m-2},m,{m+1};\h m|+|\h{m-1},{m+1};\h {m-1},{m+1}|\\
& -|\h{m+1};\h{m-2},{m+1}|).
\end{align*}
Taking $\Xi=|\h{m}; \h{m}|$ and (for $1\leq j \leq 2m+2$)
\[ P_{js}=\left\{
\begin{array}{ll}
\partial_x-q_0(\partial_x^{s} \psi)\circ(\partial_x^{s} \phi)^{-1}\circ,~ & 0\leq s \leq m,\\
-\partial_x+r_0(\partial_x^{s} \phi)\circ(\partial_x^{s} \psi)^{-1}\circ,~ & m+1\leq s \leq 2m+2,
\end{array}
\right.
\]
using  \eqref{id-D} we can have
\begin{equation*}
(\textrm{Tr}A)F=|\h {m-1},m+1;\h m|-|\h m;\h{m-1},{m+1}|,
\end{equation*}
and
\begin{align*}
(\textrm{Tr}A)s=&|\h{m},{m+2};\h{m-1}|-|\h{m+1};\h{m-2},{m}|\\
&+q_0(-1)^m(|\h {m-1},m+1;\h m|-|\h m;\h{m-1},{m+1}|),\\
(\textrm{Tr}A)^2F=&|\h{m-2},m,{m+1};\h m|+|\h{m-1},{m+2};\h m|-2|\h{m-1},{m+1};\h {m-1},{m+1}|\\
&+|\h{m};\h{m-2},m,{m+1}|+|\h{m};\h{m-1},{m+2}|+2q_0(-1)^m|\h{m-1};\h{m+1}|\\
&-2r_0(-1)^m|\h{m+1};\h{m-1}|,\\
(\textrm{Tr}A)^2s=&|\h{m-1},{m+1},{m+2};\h{m-1}|+|\h{m},{m+3};\h{m-1}|-2|\h{m},{m+2};\h{m-2},{m}|\\
&+|\h{m+1};\h{m-3},{m-1},{m}|+|\h{m+1};\h{m-2},{m+1}|\\
&+2q_0(-1)^m(|\h{m-1},{m+2};\h m|+|\h{m};\h{m-2},m,{m+1}|-|\h{m-1},{m+1};\h {m-1},{m+1}|)\\
&+2q_0(q_0|\h{m-1};\h{m+1}|-r_0|\h{m+1};\h{m-1}|).
\end{align*}
Substituting the  derivatives of $f$ into the left-hand side of \eqref{nls-b1} one obtains
\begin{eqnarray}\label{6aa}
&&ff_{xx}-f_x^2\\
&=&FF_{xx}-F_x^2\\
&=&4(|\h{m};\h{m}||\h{m-1},m+1;\h{m-1},m+1|-|\h{m-1},{m+1};\h{m}||\h{m};\h{m-1},{m+1}|)\nonumber \\
&&-2q_0(-1)^m|\h{m};\h{m}||\h{m-1};\h{m+1}|+2r_0(-1)^m|\h{m};\h{m}||\h{m+1};\h{m-1}|\nonumber\\
&=&4|\h{m-1};\h{m+1}||\h{m+1};\h{m-1}|-2q_0(-1)^m|\h{m};\h{m}||\h{m-1};\h{m+1}|\nonumber\\
&&+2r_0(-1)^m|\h{m};\h{m}||\h{m+1};\h{m-1}|\\
&=&4Hs-2q_0Hf+2r_0sf,
\end{eqnarray}
where we have made use of the equality
$$F[(\textrm{Tr}A)F]=[(\textrm{Tr}A)F]^2$$
and the relation
\begin{eqnarray*}
&&|\h{m-1};\h{m+1}||\h{m+1};\h{m-1}|-|\h{m};\h{m}||\h{m-1},m+1;\h{m-1},m+1|\\
&&+|\h{m-1},{m+1};\h{m}||\h{m};\h{m-1},{m+1}|=0.
\end{eqnarray*}
Meanwhile, a direct calculation of the right-hand side of \eqref{nls-b1} gives rise to
$4(2Hs-q_0Hf+r_0sf)$. Thus, equation \eqref{nls-b1} is proved.

For equation \eqref{nls-b2}, let us first consider $(D_x^2-iD_t)s\cdot F$. We have
\begin{eqnarray*}
&&s_{xx}F-2s_xF_x+sF_{xx}-i(s_tF-sF_t)\\
&=&(F(s_{xx}-is_t)+s(F_{xx}+iF_t)-2s_xF_x))\\
&=&(3|\h{m-1},{m+1},{m+2};\h{m-1}|-|\h{m},{m+3};\h{m-1}|+2|\h{m},{m+2};\h{m-2},{m}|\\
&&-|\h{m+1};\h{m-3},{m-1},{m}|+3|\h{m+1};\h{m-2},{m+1}|)F+2q_0r_0sF\\
&&+2q_0(-1)^m(|\h{m-2},m,{m+1};\h m|-|\h{m-1},{m+1};\h {m-1},{m+1}|+|\h{m};\h{m-1},{m+2}|)F\\
&&+(-|\h{m-2},m,{m+1};\h m|+3|\h{m-1},{m+2};\h m|+2|\h{m-1},{m+1};\h {m-1},{m+1}|\\
&&+3|\h{m};\h{m-2},m,{m+1}|-|\h{m};\h{m-1},{m+2}|)s+2q_0(-1)^m|\h{m-1};\h{m+1}|s+2r_0(-1)^ms^2\\
&&-2(|\h {m-1},m+1;\h m|+|\h m;\h{m-1},{m+1}|)(|\h{m},{m+2};\h{m-1}|+|\h{m+1};\h{m-2},{m}|).
\end{eqnarray*}
Utilizing identity
\begin{equation*}
F[(\textrm{Tr}A)^2s]=s[(\textrm{Tr}A)^2F]=[(\textrm{Tr}A)F][(\textrm{Tr}A)s],
\end{equation*}
\eqref{nls-b2} gives rise to
\begin{eqnarray*}
&&-4(|\h {m-1},m+1;\h m||\h{m},{m+2};\h{m-1}|+|\h m;\h{m-1},{m+1}||\h{m+1};\h{m-2},{m}|)\\
&&+4F(|\h{m-1},{m+1},{m+2};\h{m-1}|+|\h{m+1};\h{m-2},{m+1}|)\\
&&+4s(|\h{m-1},{m+2};\h m|+|\h{m};\h{m-2},m,{m+1}|)\\
&&-2q_0^2HF+4q_0r_0sf+4q_0(-1)^mHs\\
&=&-2q_0^2HF+4q_0r_0sf+4q_0(-1)^mHs.
\end{eqnarray*}
Then we have
\begin{eqnarray*}
&&(D_x^2-iD_t)s\cdot f\\
&=&(-1)^m(-2q_0^2HF+4q_0r_0sf+4q_0(-1)^mHs)\\
&=&-q_0^2hf-2q_0r_0gf-q_0gh,
\end{eqnarray*}
which proves \eqref{nls-b2}. Equation \eqref{nls-bl3} can be verified similarly.

\end{appendix}

\label{lastpage}
\end{document}